\documentclass[a4paper]{article}
\addtolength{\hoffset}{-1.6cm}
\addtolength{\textwidth}{3.2cm}

%



\usepackage{enumitem}
\usepackage{calc}


\usepackage{graphicx} 

\usepackage{amsmath}
\usepackage{amssymb,tikz}
\usepackage{amsthm}
\usepackage{amscd}
\usepackage{mathrsfs}
\usepackage{todonotes}
\usepackage{float}

\usetikzlibrary{calc} 

\usepackage{marvosym}

\newtheorem{theorem}			     {Theorem} [section]
\newtheorem{proposition}[theorem]	 {Proposition}	
\newtheorem{corollary}	  [theorem]	 {Corollary}	
\newtheorem{conjecture}	  [theorem]	 {Conjecture}
\newtheorem{ansatz}	  [theorem]	 {Ansatz}	
\newtheorem{lemma}	      [theorem]  {Lemma}		
\theoremstyle{definition}

\newtheorem{remark} [theorem]  {Remark}


\newcommand{\C}{\mathbb{C}}
\newcommand{\R}{\mathbb{R}}

\newcommand{\e}{\epsilon}

\DeclareMathOperator{\tr}{Tr}

\newcommand{\Ai}{{\rm Ai}}


\def\XXint#1#2#3{{\setbox0=\hbox{$#1{#2#3}{\int}$}
\vcenter{\hbox{$#2#3$}}\kern-.5\wd0}}

\usepackage{tikz}
\usetikzlibrary{arrows}
\usetikzlibrary{decorations.pathmorphing}
\usetikzlibrary{decorations.markings}
\usetikzlibrary{patterns}
\usetikzlibrary{automata}
\usetikzlibrary{positioning}
\usepackage{tikz-cd}

\tikzset{->-/.style={decoration={
				markings,
				mark=at position #1 with {\arrow{latex}}},postaction={decorate}}}
	
	\tikzset{-<-/.style={decoration={
				markings,
				mark=at position #1 with {\arrowreversed{latex}}},postaction={decorate}}}

\usetikzlibrary{shapes.misc}\tikzset{cross/.style={cross out, draw, 
         minimum size=2*(#1-\pgflinewidth), 
         inner sep=0pt, outer sep=0pt}}

\usepackage{pgfplots}

\numberwithin{equation}{section}

\def\be{\begin{equation}}
\def\ee{\end{equation}}

\usepackage[colorlinks=true]{hyperref}
\hypersetup{urlcolor=blue, citecolor=red, linkcolor=blue}

\begin{document}
\title{Large deviations for the log-Gamma polymer}
\author{Tom Claeys and Julian Mauersberger}

\maketitle
\begin{abstract}
We conjecture an explicit expression for the lower tail large deviation rate function of the partition function of the log-Gamma polymer. We rigorously prove our result, except for one step for which we only provide heuristic evidence. We show that the large deviation rate function matches with that of last passage percolation with exponential weights in the zero-temperature limit, and with the lower tail of the Tracy-Widom distribution for moderate deviations.
\end{abstract}

\section{Introduction and main results}
The log-Gamma polymer is a model of directed lattice paths in a random environment, introduced by Seppäläinen \cite{S12}. The random environment is special in the sense that it allows exact expressions for several statistics of interest. In particular, the Laplace transform of the partition function admits an exact characterization in terms of a Fredholm determinant for finite lattice size. This characterization enables one to study the asymptotic behavior of the partition function in the limit of a large lattice. In particular, it was shown in \cite{BCD21, BCR13} that the partition function has Tracy-Widom fluctuations. 
Other polymer models with similar features are the O'Connell-Yor polymer, the mixed polymer, and the continuum directed polymer or Kardar-Parisi-Zhang equation, as well as their zero-temperature counterparts which are last passage percolation or corner growth models.  
While a Fredholm determinant identity for the Laplace transform is effective to study the typical large $n$ behavior, it is much more challenging to extract precise information about the lower tails of the probability distribution and to obtain accurate large deviation estimates. Such estimates have been recently established for the narrow wedge solution of the Kardar-Parisi-Zhang equation \cite{CafassoClaeys, CharlierClaeysRuzza, CorwinGhosal, LeDoussal, Tsai}, but remain to be understood for other exactly solvable polymer models.

In this paper, we conjecture an explicit expression for the large deviation rate function of the log-Gamma polymer partition function. Our approach is based on a connection between the log-Gamma polymer partition function and a signed biorthogonal measure found in \cite{CC24}, which leads to a Fredholm determinant which we analyze asymptotically.  Our analysis is rigorous, except for one step, namely an estimate for the difference between two Fredholm determinants. We provide heuristic evidence for this estimate, and explain the obstacles for a rigorous proof.
We believe that a similar method can lead to the large deviation rate functions in other exactly solvable polymer models, like the inhomogeneous log-Gamma polymer on a rectangular lattice, the O'Connell-Yor polymer \cite{OY}, and the mixed polymer \cite{BCFV}.

\medskip
\paragraph{Definition of the model.}
Consider the square lattice $\{1,2, \ldots, n\}\times \{1,2, \ldots, n\}$. To each point $(i,j)$ in the grid we assign an independent positive random variable $d_{i,j}(\theta)$, which is \emph{inverse-Gamma distributed}, that is, it has probability density
\begin{align*}
 \frac{x^{-2\theta-1}}{\Gamma(2\theta)} e^{-1/x},\qquad x>0,
\end{align*} 
depending on a parameter $\theta>0$.

Given an up-right path $\pi$ connecting $(1,1)$ with $(n,n)$, we define its weight as \[w_{n,\theta}(\pi)=\prod_{(i,j) \in \pi} d_{i,j}(\theta).\] The \emph{log-Gamma polymer partition function} is defined as the sum of the weights of all such paths:
\begin{align}\label{def:partition function}
	Z_n(\theta) =\sum_{\pi: (1,1) \to (n,n)} w_{n,\theta}(\pi)= \sum_{\pi: (1,1) \to (n,n)} \prod_{(i,j) \in \pi} d_{i,j}(\theta),
\end{align}
where the sum is taken over all up-right directed paths $\pi$ in the grid $\{1,\ldots,n\}^2$ starting at $(1,1)$ and ending at $(n,n)$, as illustrated in Figure \ref{figure: LogGamma}.  
The model can be generalized by considering a rectangular instead of a square lattice, and by allowing different parameters $\theta=\theta_{i,j}$ for different vertices $(i,j)$, but we will restrict ourselves to the square lattice with homogeneous weights here.
\begin{figure}[t]
\begin{center}
    \setlength{\unitlength}{1truemm}
    \begin{picture}(100,50)(0,-10)
    \put(5,2){$(1,1)$} \put(55,52){$(n,n)$}
    \put(10,0){\thicklines\circle*{.8}}
\put(20,0){\thicklines\circle*{.8}}
\put(30,0){\thicklines\circle*{.8}}
\put(40,0){\thicklines\circle*{.8}}
\put(50,0){\thicklines\circle*{.8}}
\put(60,0){\thicklines\circle*{.8}}    
     
  \put(10,10){\thicklines\circle*{.8}}
\put(20,10){\thicklines\circle*{.8}}
\put(30,10){\thicklines\circle*{.8}}
\put(40,10){\thicklines\circle*{.8}}
\put(50,10){\thicklines\circle*{.8}}
\put(60,10){\thicklines\circle*{.8}}

    \put(10,20){\thicklines\circle*{.8}}
\put(20,20){\thicklines\circle*{.8}}
\put(30,20){\thicklines\circle*{.8}}
\put(40,20){\thicklines\circle*{.8}}
\put(50,20){\thicklines\circle*{.8}}
\put(60,20){\thicklines\circle*{.8}}    
    
    \put(10,30){\thicklines\circle*{.8}}
\put(20,30){\thicklines\circle*{.8}}
\put(30,30){\thicklines\circle*{.8}}
\put(40,30){\thicklines\circle*{.8}}
\put(50,30){\thicklines\circle*{.8}}
\put(60,30){\thicklines\circle*{.8}}    
    
\put(10,40){\thicklines\circle*{.8}}
\put(20,40){\thicklines\circle*{.8}}
\put(30,40){\thicklines\circle*{.8}}
\put(40,40){\thicklines\circle*{.8}}
\put(50,40){\thicklines\circle*{.8}}
\put(60,40){\thicklines\circle*{.8}}    
    
    \put(10,50){\thicklines\circle*{.8}}
\put(20,50){\thicklines\circle*{.8}}
\put(30,50){\thicklines\circle*{.8}}
\put(40,50){\thicklines\circle*{.8}}
\put(50,50){\thicklines\circle*{.8}}
\put(60,50){\thicklines\circle*{.8}}

    \put(10,0){\line(1,0){10}}\put(16.7,0){\thicklines\vector(1,0){.0001}}
    \put(20,10){\line(1,0){10}}\put(26.7,10){\thicklines\vector(1,0){.0001}}
    \put(30,10){\line(1,0){10}}\put(36.7,10){\thicklines\vector(1,0){.0001}}
    \put(40,20){\line(1,0){10}}\put(46.7,20){\thicklines\vector(1,0){.0001}}
    \put(50,40){\line(1,0){10}}\put(56.7,40){\thicklines\vector(1,0){.0001}}

    \put(20,0){\line(0,1){10}}\put(20,6.7){\thicklines\vector(0,1){.0001}}
    \put(40,10){\line(0,1){10}}\put(40,16.7){\thicklines\vector(0,1){.0001}}
    \put(50,20){\line(0,1){10}}\put(50,26.7){\thicklines\vector(0,1){.0001}}
    \put(50,30){\line(0,1){10}}\put(50,36.7){\thicklines\vector(0,1){.0001}}
    \put(60,40){\line(0,1){10}}\put(60,46.7){\thicklines\vector(0,1){.0001}}      
    \end{picture}
    \caption{The log-Gamma polymer lattice for $n=6$, with a possible up-right path.}
    \label{figure: LogGamma}
\end{center}
\end{figure}
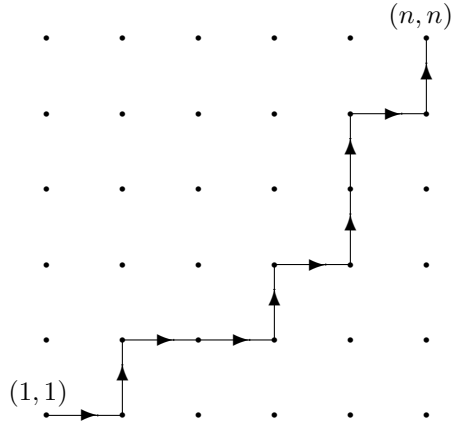

\medskip

For $\theta>0$ small, the density of the weights $d_{i,j}(\theta)$ has a heavy tail, such that one expects large variations between the weights of different paths, implying that the partition function will be dominated by few paths with large weights for large $n$. 
More precisely, as $\theta\to 0$, the random variables $u_{i,j}(\theta):=2\theta\log d_{i,j}(\theta)$ converge in distribution to independent exponential random variables $u_{i,j}$, i.e., $\mathbb P(u_{i,j}\leq s)=1-e^{-u_{i,j}}$. As $\theta\to 0$ (with $n$ fixed), we then obtain the weak convergence \cite{S12}
\[2\theta\log Z_n(\theta)\longrightarrow F_n^{\rm LPP}:=\max_{\pi: (1,1) \to (n,n)}\sum_{(i,j)\in \pi}u_{i,j}.\]
The right hand side is the maximum additive weight of an up-right path in random exponential environment. This model is known as {\em last passage percolation} or corner growth with exponential weights, see \cite{BorodinPeche,DiekerWarren,Johansson}.
From the physics point of view, $\theta>0$ is a parameter proportional to the temperature of the model. 

\paragraph{Known asymptotic results.}
Seppäläinen \cite{S12} proved that the average of $\log Z_n(\theta)$ behaves for large $n$ like $-2n\psi(\theta)$, where $\psi=\frac{\Gamma'}{\Gamma}$ is the di-Gamma function, and that its variance is of order $n^{2/3}$.
Later, in 2013, Borodin, Corwin, and Remenik \cite[Corollary 1.8]{BCR13} showed that the Laplace transform of $Z_{n}(\theta)$ is equal to a Fredholm determinant, see also \cite{BC}: we have for $u\in\mathbb C$ with $\Re u>0$ that
\begin{align}\label{eq:detid1}
	\mathbb{E}(e^{-u Z_n(\theta)}) = \det(I + K_{n}^{u,\theta})_{L^2(\Sigma)},
\end{align}
where $\Sigma$ is a positively oriented circle in the complex plane around $-\theta$ with radius $<\theta$, and  $K_{n}^{u,\theta}:L^2(\Sigma)  \to L^2(\Sigma)$ is the integral kernel operator with kernel
\[K_{n}^{u,\theta}(v,v') := \frac{1}{(2 \pi i)^2} \int_{\ell} d w \frac{\pi u^{w-v}}{\sin \pi(v - w)} \frac{W(w)}{W(v)}\frac{1}{w - v'},\qquad v,v'\in{\Sigma},\]
where $\ell$ is a vertical line at the right of $\Sigma$ and at the left of $\theta$, and
\begin{align} \label{def:W}
	W(z)= W(z;n,\theta) = \frac{ \Gamma(\theta-z)^n}{ \Gamma(\theta+z)^n}.
\end{align}
We recall that the Fredholm determinant can be evaluated through the {\em Fredholm series}
\begin{equation}\label{def:Fredholm series}\det(I + K)_{L^2(\gamma)}=1+\sum_{k=1}^\infty \frac{1}{k!}\int_{\gamma^k}\det\left(K(x_i,x_j)\right)_{i,j=1}^k dx_1\cdots dx_k.\end{equation}
The above Fredholm determinant identity combined with a saddle point analysis of the kernel $K_{n}^{u,\theta}$ allowed the authors to prove that the fluctuations of $\log Z_n(\theta)$ are described by the {\em Tracy-Widom distribution}. 
More precisely \cite[Theorem 1]{BCR13},
\begin{align} \label{eq: central limit theorem}
	\lim_{n \to \infty} \mathbb{P} \left(\frac{\log Z_n(\theta) + 2n \psi(\theta)}{(-\psi''(\theta)n)^{1/3}}\le r \right) = F_{\text{TW}}\left(r\right),\qquad r\in\mathbb R
\end{align}
for sufficiently small $\theta > 0$, where $F_{\rm TW}(r)$ is the $\beta=2$ Tracy-Widom distribution \cite{TW94}. This distribution describes among others the largest eigenvalue distribution of large Hermitian random matrices with soft edges, and can be written as an Airy kernel Fredholm determinant,
\[F_{\rm TW}(r)=\det\left(1-K^{\Ai}\right)_{L^2(r,\infty)},\qquad K^{\Ai}(x,x')=\frac{\Ai(x)\Ai'(x')-\Ai(x')\Ai'(x)}{x-x'}.\]
This result was subsequently generalized in \cite{KQ18, BCD21}. 
One can see \eqref{eq: central limit theorem} as the non-zero temperature analogue of a well-known limit theorem for last passage percolation with exponential weights \cite{Johansson}, stating that 
\begin{align} \label{eq: central limit theorem0temp}
	\lim_{n \to \infty} \mathbb{P} \left(\frac{F_n^{\rm LPP} - 4 n  }{2^{4/3}n^{1/3}}\le r \right) = F_{\text{TW}}\left(r\right),\qquad r\in\mathbb R.
\end{align}

\medskip

While \eqref{eq: central limit theorem} describes the large $n$ behavior for probabilities of typical events, i.e., fluctuations of $\log Z_n(\theta)$ around $-2n\psi(\theta)$ of order $n^{1/3}$, it does not contain information about probabilities of rare events, like fluctuations of larger order. 
For instance, it does not contain any information about $\mathbb{P} \left(\frac{\log Z_n(\theta) + 2n \psi(\theta)}{(-\psi''(\theta)n)^{1/3}}\le r \right)$ as $r\to \pm\infty$ together with $n\to\infty$.
The upper tail asymptotics as $r\to +\infty$ are well understood \cite{GS}, but the lower tail ($r\to -\infty$) of the probability distribution for $\log Z_n(\theta)$ is not understood and cannot be analyzed with current techniques. 
Such lower tail probabilities are up to now only understood at zero temperature. Then, remarkably, $F_n^{\rm LPP}$ is for finite $n$ equal in distribution to the largest eigenvalue of an $n\times n$ Laguerre-Wishart random matrix \cite[Proposition 1.4]{Johansson}.
As a consequence, the distribution of $F_n^{\rm LPP}$ can be expressed as a Hankel determinant which can be analyzed in detail for large $n$, in order to obtain lower tail asymptotics for $\mathbb{P} \left(\frac{F_n^{\rm LPP} - 4 n  }{2^{4/3}n^{1/3}}\le r \right)$ as $r\to -\infty$ together with $n\to\infty$. In particular, we then have the large deviation result
\be\label{eq:largedeviationzerotemp}
\lim_{n\to\infty}\frac{-1}{n^2}\log\mathbb P\left[F_n^{\rm LPP}(\theta)\leq 4ns\right]=-\frac{s^2}{2}-\frac{3}{2}+2s - \log s,\qquad 0<s<1,
\ee
see e.g.\ \cite[Formula (162)]{DIK} or \cite{VivoMajumdarBohigas}. 
Observe that this is formally consistent with \eqref{eq: central limit theorem0temp}: setting $s=1+r(2n)^{-2/3}$ with $r\to -\infty$, we find from \eqref{eq:largedeviationzerotemp} that
$\log\mathbb{P} \left(\frac{F_n^{\rm LPP} - 4 n  }{2^{4/3}n^{1/3}}\le r \right)\sim -\frac{|r|^3}{12}$, which is indeed the leading order of $\log F_{\rm TW}(r)$ as $r\to -\infty$, see \cite[Theorem 1]{DIK}.
Lower tail deviations in the large temperature limit $\theta\to \infty$ can in principle be analyzed via the weak noise theory methods developed in \cite{KrajenbrinkLeDoussal2, Tsai2}. A large deviation principle and a variational problem for the stochastic six-vertex model, which degenerates to the log-Gamma polymer in an appropriate limit, was recently obtained in \cite{DLM}, but without explicit expression for the rate function.

\paragraph{Main result.}
Before formulating our main result, a conjecture for the large deviation rate function of the log-Gamma polymer partition function, we need to introduce several quantities. First, for $0<s<-\theta\psi(\theta)$ with $\psi=\Gamma'/\Gamma$ as before, we define $b=b(s,\theta)>0$ as the unique positive number solving the equation
\begin{align} \label{eq: defining equation for b}
\int_{0}^{1} \left(\theta\psi(\theta+iu\theta b/2)+\theta\psi(\theta-i u\theta b/2)+2s\right) \frac{du}{\pi\sqrt{1-u^2}}=0.
\end{align} 
Next, we define
\begin{align} \label{eq: def f}
f(s,\theta) &=  b^2 \int_{0}^{1}\left( \theta\psi(\theta+iu\theta b/2))+\theta\psi(\theta-iu\theta b/2)+2s\right) \sqrt{1-u^2}\frac{du}{2\pi},\qquad 0<s<-\theta\psi(\theta)
\end{align}
and \begin{equation}\label{def:F}F(s,\theta)=-\int_{s}^{-\theta\psi(\theta)} f(t;\theta)dt,\qquad \qquad 0<s<-\theta\psi(\theta).\end{equation}

\begin{conjecture}\label{conjecture:main}
There exists $\theta_0>0$ such that uniformly for $0<\theta<\theta_0$ and $\epsilon<s<-\theta\psi(\theta)$ for any $\epsilon>0$, we have
\begin{equation}\lim_{n\to\infty}\frac{-1}{n^2}\log\mathbb P\left[\log Z_n(\theta)\leq \frac{2n}{\theta}s\right]=F(s,\theta).\end{equation}
\end{conjecture}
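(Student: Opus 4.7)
The plan is to convert the tail probability into an asymptotic statement about the Laplace transform $\mathbb E\bigl[e^{-uZ_n(\theta)}\bigr]$ at exponentially small $u$, exploit the Fredholm determinant identity \eqref{eq:detid1} (or a reformulation based on the signed biorthogonal measure of \cite{CC24}), and then perform a large-$n$ steepest-descent / potential-theoretic analysis to extract $F(s,\theta)$. Setting $u=u_n(s):=e^{-\frac{2n}{\theta}s}$, the Laplace transform can be written as $\mathbb E\bigl[e^{-u_nZ_n(\theta)}\bigr]=\int_0^\infty u_n e^{-u_nz}\,\mathbb P[Z_n(\theta)\le z]\,dz$, and on the $n^2$-scale this integral concentrates on the event $\log Z_n(\theta)\approx\frac{2n}{\theta}s$. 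A Laplace / G\"artner--Ellis argument, using monotonicity of the distribution function and convexity of $s\mapsto\log\mathbb E[e^{-u_n(s)Z_n(\theta)}]$, will then transfer precise exponential asymptotics for the Laplace transform into those for $\mathbb P\bigl[\log Z_n(\theta)\le\frac{2n}{\theta}s\bigr]$.

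With $u=u_n(s)$, the kernel $K_n^{u,\theta}$ takes schematically the form $e^{n\Phi_s(v,w)}$ up to polynomial factors, where the relevant single-variable action is
\begin{equation*}
\Phi_s(z;\theta)=\log\Gamma(\theta-z)-\log\Gamma(\theta+z)-\tfrac{2s}{\theta}z.
\end{equation*}
Its saddle equation $\psi(\theta-z)+\psi(\theta+z)+2s/\theta=0$, after the substitution $z=iu\theta b/2$ with $u\in[-1,1]$, is precisely equation~\eqref{eq: defining equation for b} defining $b(s,\theta)$. I would then pass to the biorthogonal ensemble representation of \cite{CC24}, in which $\det(I+K_n^{u,\theta})$ is recast as a more tractable Fredholm determinant whose kernel admits a double-contour integral representation and is thereby amenable to steepest descent on contours through the saddles $\pm i\theta b/2$. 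A potential-theoretic equilibrium problem on the vertical segment joining these saddles emerges, with equilibrium density proportional to $\sqrt{1-u^2}\,du$ under the parametrization $z=iu\theta b/2$; this accounts for the second-kind Chebyshev weight in the definition \eqref{eq: def f} of $f(s,\theta)$.

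Equating the leading exponential of the Fredholm determinant with minus the equilibrium energy, one expects $\tfrac{1}{n^2}\log\det(I+K_n^{u_n(s),\theta})_{L^2(\Sigma)}\to -F(s,\theta)$. The cleanest way to pin down the precise form \eqref{def:F} of the rate function is to differentiate in $s$: the variational characterization of the equilibrium measure identifies the $s$-derivative of the limit as $-f(s,\theta)$, while at the typical value $s=-\theta\psi(\theta)$ one has $b=0$, the equilibrium measure degenerates to a point mass, and the Fredholm determinant tends to $1$, fixing the boundary condition $F(-\theta\psi(\theta),\theta)=0$.

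The main obstacle, and the step where I would only expect heuristic evidence rather than a rigorous bound, is the control of the error between the exact Fredholm determinant and its leading potential-theoretic approximation. Because $K_n^{u,\theta}$ is not of Hermitian or positive type and the contour $\Sigma$ is complex, operator-norm estimates do not follow from general principles, and a fully rigorous argument would demand either Riemann--Hilbert steepest descent for the biorthogonal ensemble of \cite{CC24} with uniform control on subleading corrections, or a trace-norm comparison of $K_n^{u,\theta}$ with a suitable model kernel. This is presumably the single non-rigorous step alluded to in the introduction; its difficulty is amplified by the fact that the kernel is signed, which rules out probabilistic monotone-coupling shortcuts available in related Hermitian settings.
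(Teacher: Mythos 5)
Your overall roadmap is aligned with the paper's: set $u=e^{-2ns/\theta}$, use the biorthogonal representation from \cite{CC24}, perform a steepest-descent analysis whose saddle condition reproduces \eqref{eq: defining equation for b} and whose equilibrium density gives the Chebyshev weight $\sqrt{1-u^2}$, and then transfer Laplace-transform asymptotics to tail-probability asymptotics by an exponential-change-of-measure argument (the paper does this elementarily, via Chernoff's bound together with a small shift $s\mapsto s+\tfrac{3\theta\log n}{2n}$, which is essentially your Laplace concentration idea made quantitative). However, you misplace the single non-rigorous step, and in doing so miss the key structural move of the paper. The paper does \emph{not} do steepest descent on the Fredholm determinant whose symbol is the smooth Fermi factor $\sigma_{s,n,\theta}$. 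Instead it introduces as an intermediate object the \emph{step-function} Fredholm determinant $Q_n^\theta(s)=\det(1-\widehat L_n^\theta)_{L^2(s,\infty)}$, rewrites it via IIKS theory as the determinant of an integrable operator, and derives a differential identity (Proposition~\ref{prop:diffid}) tying $\tfrac{d}{ds}\log Q_n^\theta(s)$ to a $2\times2$ Riemann--Hilbert problem. That RH problem is then analyzed by Deift--Zhou steepest descent with a $g$-function, global and local parametrices (Airy and Painlev\'e~II), and \emph{rigorous} error control: Theorem~\ref{theorem:main}, $\log Q_n^\theta(s)=-n^2F(s,\theta)+O(n^{2/3})$, is a theorem, not a heuristic. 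The only heuristic step is Ansatz~\ref{ansatz1}, namely $\log Q_n^\theta(s)\sim\log\widetilde Q_n^\theta(s)$ where $\widetilde Q_n^\theta(s)$ has the smooth Fermi weight and equals the Laplace transform exactly. Your framing — that the gap lies in controlling the error between the Fredholm determinant and its potential-theoretic leading term — would leave the non-rigorous part smeared across the whole asymptotic analysis; the paper's framing isolates it cleanly into a single comparison of two determinants with the same kernel but different (smooth vs.\ sharp) symbols, for which the paper offers heuristic support via Jacobi's variational identity and an interpretation of the resolvent kernel as a deformed (pushed) biorthogonal measure. If you want to turn your sketch into the paper's argument, you should: (i) make the step-function determinant $Q_n^\theta(s)$ the central object, (ii) replace the vague ``equilibrium problem'' with the explicit $g$-function construction in Section~\ref{section:RH} and the RH differential identity, and (iii) recognize that the remaining obstacle is not error control in steepest descent but rather the smoothed-to-sharp comparison, which is where the signed/non-Hermitian nature of $\widehat L_n^\theta$ truly bites.
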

\begin{remark}
We will prove in Proposition \ref{lemma: choice of b} that equation \eqref{eq: defining equation for b} has indeed a unique solution for $\theta>0, 0<s<-\theta\psi(\theta)$. Moreover, we will show that as $s\to -\theta\psi(\theta)$ with $\theta>0$ fixed, we have
\be\label{eq:bastau}
b(s,\theta)\sim 4\sqrt{\frac{s+\theta\psi(\theta)}{\theta^3\psi''(\theta)}}.
\ee
On the other hand, as $\theta\to 0$ with $s$ fixed, we obtain after a straightforward computation, using the fact that $\psi(z)\sim -1/z$ as $z\to 0$, that
\be\label{eq:bastheta}
b(s,0):=\lim_{\theta\to 0}b(s,\theta)= 2\sqrt{\frac{1}{s^2}-1}.
\ee
\end{remark}

\begin{remark}
From the $s\to -\theta\psi(\theta)$ behavior for $b(s,\theta)$, a careful Taylor expansion of $f(s,\theta)$ yields 
\be F(s,\theta)\sim \frac{2}{3\theta^3\psi''(\theta)}(s+\theta\psi(\theta))^3\qquad \mbox{as $s\to -\theta\psi(\theta)$.}\ee
Setting $s=-\theta\psi(\theta)+\frac{\theta}{2}(-\psi''(\theta))^{1/3}r n^{-2/3}$, we see that this is consistent with the cubic lower tail of the Tracy-Widom distribution in \eqref{eq: central limit theorem},  $\log F_{\rm TW}(r)\sim -\frac{|r|^3}{12}$ as $r\to -\infty$.
\end{remark}

\begin{remark}
Using the fact that $\psi(z)\sim -1/z$ as $z\to 0$ and the $\theta\to 0$ behavior for $b(s,\theta)$, we see that
\be f(s,0):=\lim_{\theta\to 0}f(s,\theta)=2+\frac{s}{4}b(s,0)^2-\sqrt{4+b(s,0)^2}=2-s-\frac{1}{s}.\ee
Consequently, 
\be
F(s,0):=\lim_{\theta\to 0}F(s,\theta)=-\frac{s^2}{2}-\frac{3}{2}+2s - \log s.
\ee
This provides an interesting consistency check: as it must, the above formula is precisely the large deviation rate function for last passage percolation with exponential weights given in \eqref{eq:largedeviationzerotemp}.
\end{remark}

\paragraph{Methodology.}

The starting point of our analysis is an alternative Fredholm determinant representation, equal to but of a different form than \eqref{eq:detid1}, which was established recently in \cite[Corollary 5.2]{CC24}. For $\theta\in(0,1)$, this result, with $N=n$, $a_k=0$, $\alpha_j=2\theta$, states that 
\begin{align} \label{eq:detid2}
	\mathbb{E}(e^{-e^{-s} Z_n(\theta)}) = \det(1 - \sigma_s L_{n}^\theta )_{L^2(\R)},
\end{align}
where 
\begin{align}\label{def:L0}
L_{n}^\theta(x,x')= \frac{1}{(2\pi i)^2} \int_{\mathcal C} du \int_{\delta+i\mathbb R} dv \frac{\Gamma(2\theta-v)^n\Gamma(u)^n}{\Gamma(2\theta-u)^n\Gamma(v)^n} \frac{e^{-vx+ux'}}{v-u},
\end{align}
with $\theta<\delta<\min\{2\theta,1\}$, $\mathcal C$ a positively oriented circle of radius $<\theta$ around $0$, and with
\be\label{def:w}
\sigma_s(x) = \frac{1}{1+e^{-x+s}}.
\ee
Equivalently,
\begin{align}\label{def:L}
L_{n}^\theta(x,x')= \frac{1}{(2\pi i)^2} \int_\Sigma du \int_{c+i\mathbb R} dv \frac{W(v)}{W(u)} \frac{e^{-vx+ux'}}{v-u},
\end{align}
with $W$ as in \eqref{def:W}, $c=\delta-\theta\in(0,\min\{\theta,1-\theta\})$, $\Sigma=-\theta+\mathcal C$.
In \eqref{eq:detid2}, the right hand side is simply a short-hand notation for the Fredholm series \eqref{def:Fredholm series}, here with kernel $K(x,x')=-\sigma_s(x)L_n^\theta(x,x')$, without requiring at this point that $K$ is the kernel of a trace-class operator.

\medskip

A useful feature of $L_n^\theta$ is that it is the correlation kernel of a signed biorthogonal measure $d\mu_n^\theta(x_1,\ldots, x_n)$ on $\mathbb R^n$ \cite{BorBiOE}, more specifically a signed polynomial ensemble of derivative type.
In other words, $L_n^\theta$ has the reproducing property
\be\int_{\mathbb R}L_n^\theta(x,t)L_n^\theta(t,x')dt=L_n^\theta(x,x'),\ee
and 
\be\label{def:BiOM}
d\mu_n^\theta(x_1,\ldots, x_n):=\frac{1}{n!}\det\left(L_n^\theta(x_i,x_j)\right)_{i,j=1}^n dx_1\cdots dx_n
\ee
has the form
\be\label{def:BiOM2}
d\mu_n^\theta(x_1,\ldots, x_n)=\frac{1}{Z_n}\left(\prod_{1\leq j<i\leq n} (x_j-x_i)\right)\ \det\left(G^{(j-1)}(x_i;\theta)\right)_{i,j=1}^n dx_1\cdots dx_n,
\ee
where $G(x;\theta)$ is given by
\be G(x;\theta)=
\frac{1}{2\pi i}\int_{{c+i\mathbb R}}\frac{\Gamma(\theta-v)^n}{\Gamma(\theta+v)^n}\frac{e^{-vx}}{(\theta+v)^{n}}d v,
\ee
see \cite[Section 5]{CC24}. Alternatively, $G$ can be expressed in terms of the Meijer $G$-function, but we will not use this fact.

\medskip

As we will point out, this biorthogonal measure lives on scales of order $n/\theta$ for large $n$ and small $\theta$, which leads us to defining the re-scaled kernel
\be\label{def:Lrescales}\widehat L_n^\theta(y,y'):=\frac{2n}{\theta}L_n^\theta\left(\frac{2n}{\theta}y,\frac{2n}{\theta}y'\right)=\frac{1}{(2\pi i)^2} \int_\Sigma du \int_{c+i\mathbb R} dv \frac{\Gamma\left(\theta-\frac{\theta v}{2n}\right)^n\Gamma\left(\theta+\frac{\theta u}{2n}\right)^n}{\Gamma\left(\theta+\frac{\theta v}{2n}\right)^n\Gamma\left(\theta-\frac{\theta u}{2n}\right)^n} \frac{e^{-vy+uy'}}{v-u}
,\ee
where $\Sigma$ is now a circle of radius $<2n$ around $-2n$, and $0<c<\min\left\{2n,2n\frac{1-\theta}{\theta}\right\}$.
Then, the limit $\theta\to 0$ makes sense, and 
\[\widehat L_n^0(y,y')=\frac{1}{(2\pi i)^2} \int_\Sigma du \int_{i\mathbb R} dv \frac{\left(1+\frac{v}{2n}\right)^n\left(1-\frac{u}{2n}\right)^n}{\left(1-\frac{ v}{2n}\right)^n\left(1+\frac{u}{2n}\right)^n} \frac{e^{-vy+uy'}}{v-u}\] is a well-known double integral expression for the kernel for the Laguerre-Wishart random matrix ensemble (see, e.g., \cite[Formula (4.2)]{DesrosiersForrester} or \cite[Formula (5.158)]{Forrester}), which satisfies the Marchenko-Pastur law on $[0,1]$:
\be\label{eq:MP}
\lim_{n\to\infty}\frac{1}{n}\widehat L_n^0(y,y)=\frac{2}{\pi}\sqrt{\frac{1-y}{y}}1_{(0,1)}(y),\qquad y\in\mathbb R. 
\ee

From \eqref{eq:detid2}, we obtain the identity
\begin{align} \label{eq:detid3}
	\mathbb{E}(e^{-e^{-\frac{2n}{\theta}s} Z_n(\theta)}) = \det(1 - {\sigma_{s,n,\theta}} \widehat L_{n}^\theta )_{L^2(\R)},\qquad \sigma_{s,n,\theta}(y)=\frac{1}{1+e^{-\frac{2n}{\theta}(y-s)}}.
\end{align}
As $n\to\infty$, $\sigma_{s,n,\theta}(y)$ converges to the step function $1_{(s,+\infty)}(y)$, such that one may expect that the logarithm of the {\em step function Fredholm determinant}
\be\label{def:Fredholmdetstep}
Q_n^\theta(s):=\det(1 - 1_{(s,+\infty)} \widehat L_{n}^\theta)_{L^2(\R)}=\det(1 - \widehat L_{n}^\theta)_{L^2(s,+\infty)}
\ee
is a good approximation of the logarithm of the {\em smoothed Fredholm determinant}
\be\label{def:Fredholmdetsmooth}
\widetilde Q_n^\theta(s):=\det(1 - {\sigma_{s,n,\theta}} \widehat L_{n}^\theta )_{L^2(\R)}=\det(1 - {\sigma_{\frac{2ns}{\theta}}} L_{n}^\theta)_{L^2(\R)},
\ee
for large $n$.
We can however not prove this fact without further assumptions, and this is the only reason why Conjecture \ref{conjecture:main} is not a Theorem. It leads us to the following ansatz.
\begin{ansatz}\label{ansatz1}
There exists $\theta_0>0$ such that
\be\label{eq:LDPcomparison}\log Q_n^\theta(s)\sim\log \widetilde Q_n^\theta(s),\qquad\mbox{as $n\to\infty$,}
\ee
uniformly for $\epsilon<s<-\theta\psi(\theta)$ and $0<\theta<\theta_0$, for any $\epsilon>0$.
\end{ansatz}
We will explain in Section \ref{section:ansatz} why we are confident that our ansatz holds true. In Section \ref{section:proofconjecture}, we will prove Conjecture \ref{conjecture:main} under Ansatz \ref{ansatz1}.

\medskip

The advantage of the step function Fredholm determinant $Q_n^\theta(s)$ from \eqref{def:Fredholmdetstep}, compared to \eqref{def:Fredholmdetsmooth}, is that it can be transformed to an {\em integrable} Fredholm determinant, which one can characterize in terms of a Riemann-Hilbert (RH) problem thanks to a method developed by Its, Izergin, Korepin, and Slavnov \cite{IIKS}.
We derive this RH characterization in Section \ref{section:diffid}.

\medskip

A major part of this paper is devoted to the large $n$ asymptotic analysis of this RH problem. 
In Section \ref{section:RH}, we initiate the RH analysis. For $s>-\theta\psi(\theta)+\epsilon$ with $\epsilon>0$, the asymptotic analysis of the RH problem is simple and only requires a suitable deformation of jump contours. For $\epsilon\leq s\leq -\theta\psi(\theta)+\epsilon$ however, we need to introduce a $g$-function and construct  local parametrices.
For $s<-\theta\psi(\theta)-\epsilon$, we need to construct a global parametrix and two local Airy parametrices. We do this in Section \ref{section:RH2}. For $-\theta\psi(\theta)-\epsilon\leq s\leq -\theta\psi(\theta)+\epsilon$, the two local Airy parametrices come together and merge to a single local parametrix that we can build using a model RH problem associated to the Painlev\'e II equation. This is the content of Section \ref{section:RH3}. This analysis is in essence a {\em closing of a gap} transition similar to the one studied by Baik, Deift, and Johansson in their study of the longest increasing subsequence of a random permutation \cite{BaikDeiftJohansson}, and later in \cite{BleherIts, ClaeysKuijlaars} in a random matrix context. 
The asymptotic analysis of the RH problem yields the following result, which we will prove in Section \ref{section:RHproof}.

\begin{theorem} \label{theorem:main}
There exists $\theta_0>0$ such that 
\begin{align*}
\log Q_n^\theta\left(s\right) &= -n^2 F(s,\theta) + O(n^{2/3}) 
\end{align*}
as $n \to +\infty$ uniformly for $\epsilon <s< -\theta\psi(\theta)$ and $0<\theta<\theta_0$, for any $\epsilon>0$. The function $F(s,\theta)$ is given by \eqref{eq: def f}--\eqref{def:F} with $b(s,\theta)>0$ the unique positive number solving the equation \eqref{eq: defining equation for b}.
Furthermore, the function $F(s,\theta)$ is positive for $0<s<-\theta\psi(\theta)$ and $0<\theta<\theta_0$.
\end{theorem}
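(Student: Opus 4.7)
The plan is to transform $Q_n^\theta(s)$ into an \emph{integrable} Fredholm determinant via the IIKS method \cite{IIKS}, to deduce from it a differential identity
\[\frac{d}{ds}\log Q_n^\theta(s) = \Phi_n^\theta(s),\]
in which $\Phi_n^\theta(s)$ is expressed through the solution $R$ of the associated RH problem, and then to perform a Deift--Zhou steepest-descent analysis of that RH problem. Since $Q_n^\theta(s) \to 1$ as $s \to +\infty$, integrating the differential identity from $s$ to $+\infty$ reduces the theorem to a uniform asymptotic expansion of $\Phi_n^\theta$, from which the leading term $-n^2 F(s,\theta)$ will emerge via the relation $\partial_s F = -f$.

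For the steepest descent I would start from the double-integral representation \eqref{def:Lrescales}. The saddle points of its integrand are determined by the equation
\[\theta\psi\!\left(\theta + \frac{\theta v}{2n}\right) + \theta\psi\!\left(\theta - \frac{\theta v}{2n}\right) + 2s = 0,\]
whose relevant pair of purely imaginary roots $v = \pm i n b(s,\theta)$, with $b$ satisfying \eqref{eq: defining equation for b}, picks out the support $[-ib,ib]$ of the emerging equilibrium measure. With this measure I would construct a $g$-function, carry out the standard sequence of transformations (normalisation at infinity, opening of lenses, a global parametrix, Airy parametrices at the two endpoints $\pm ib$), and reach a small-norm RH problem for the ratio $R$. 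This yields the claimed asymptotics in the bulk range $\epsilon < s < -\theta\psi(\theta) - \epsilon$ and is the content of Sections~\ref{section:RH}--\ref{section:RH2}.

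The main obstacle is the crossover regime $s \uparrow -\theta\psi(\theta)$, where by \eqref{eq:bastau} the endpoints $\pm ib$ coalesce at the origin and the two Airy parametrices must merge into a single local parametrix. This is a \emph{closing-of-the-gap} transition of Baik--Deift--Johansson type \cite{BaikDeiftJohansson,BleherIts,ClaeysKuijlaars}, and the natural model RH problem is the one associated with the Hastings--McLeod solution of Painlev\'e~II. As announced in Section~\ref{section:RH3}, this parametrix must match the outer solution uniformly in the double-scaling variable $n^{2/3}(s + \theta\psi(\theta))$; the resulting $R - I$ estimate is what ultimately produces the $O(n^{2/3})$ error in $\log Q_n^\theta$ after integration of the differential identity. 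Uniformity in $\theta\in(0,\theta_0)$ is maintained by tracking how every constant in the small-norm and parametrix bounds depends on $\theta$, which forces the restriction to small $\theta$ but does not alter the scheme.

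Finally, positivity of $F(s,\theta)$ on $0<s<-\theta\psi(\theta)$ will follow from the normalisation $F(-\theta\psi(\theta),\theta)=0$, the relation $\partial_s F = -f$, and strict positivity of $f(s,\theta)$ in the open interval. The latter admits two viewpoints: via the equilibrium-measure interpretation of $f$, where up to a factor it is the total mass of a density that is positive as soon as $b>0$; or directly from \eqref{eq: def f} and \eqref{eq: defining equation for b} by a monotonicity argument applied to the real integrand $\theta\psi(\theta+iu\theta b/2)+\theta\psi(\theta-iu\theta b/2)+2s$, whose $\frac{du}{\pi\sqrt{1-u^2}}$-mean vanishes while its $\sqrt{1-u^2}\,du$-mean is positive.
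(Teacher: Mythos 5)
Your proposal follows the paper's own route closely: IIKS characterization leading to a differential identity for $\frac{d}{ds}\log Q_n^\theta(s)$ in terms of the residue at infinity of a $2\times2$ RH solution (Proposition \ref{prop:diffid}), Deift--Zhou steepest descent with a $g$-function supported on $[-b,b]$ in the rotated variable, Airy endpoint parametrices in the bulk, a Painlev\'e~II merging parametrix as $s\uparrow -\theta\psi(\theta)$, and integration of the differential identity from $+\infty$. This is precisely the content of Sections \ref{section:diffid}--\ref{section:RHproof}, and your identification of the $O(n^{2/3})$ error with the Painlev\'e~II crossover matches its actual source in the paper.

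There is, however, a concrete sign error in your positivity argument, and as written the argument does not establish what you claim. From \eqref{def:F} one has $\partial_s F = +f$, not $-f$; and in fact $f<0$ on $(0,-\theta\psi(\theta))$, not $f>0$. For $\theta=0$ this is immediate: $f(s,0)=2-s-s^{-1}=-(1-s)^2/s<0$. For general small $\theta$, the real integrand $\theta\psi(\theta+iu\theta b/2)+\theta\psi(\theta-iu\theta b/2)+2s$ in \eqref{eq: def f} is \emph{increasing} in $u$ (since $\Im\psi'(\theta+i\alpha)<0$ for $\alpha>0$), and its value at $u=0$ equals $2(s+\theta\psi(\theta))<0$. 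If an increasing function has vanishing mean against the weight $\frac{du}{\sqrt{1-u^2}}$, which is concentrated near $u=1$ where the function is large, then its mean against $\sqrt{1-u^2}\,du$, concentrated near $u=0$ where the function is negative, must be \emph{negative} — the opposite of what you wrote. Similarly, the ``positive density'' reading of the equilibrium measure fails here because the measure associated to the $g$-function has total mass zero ($g(\zeta)\to 0$ at infinity), so $g_1$ is a first moment, not a mass. Your two sign slips cancel and the final conclusion $F>0$ is correct, and the overall structure of the argument (normalization $F(-\theta\psi(\theta),\theta)=0$, monotonicity of $F$ from the sign of $f$) is the right one — but the monotonicity reasoning as stated is false and should read: $f<0$, hence $\partial_s F=f<0$, hence $F$ strictly decreasing, hence $F(s,\theta)>F(-\theta\psi(\theta),\theta)=0$ on the interval.
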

\begin{remark}
As we will explain later, we could with some more effort prove the Tracy-Widom convergence
\be\label{eq:TWcvgc}
\lim_{n\to\infty}Q_n^\theta\left(-\theta\psi(\theta)+\frac{\theta}{2}(-\psi''(\theta))^{1/3}r n^{-2/3}\right)= F_{\rm TW}(r),
\ee
uniformly for $r\in(r_0,+\infty)$ for any $r_0\in\mathbb R$.
This is consistent with \eqref{eq: central limit theorem} and Ansatz \ref{ansatz1}. The error term $O(n^{2/3})$ in Theorem \ref{theorem:main} is not sharp: we believe that it can be improved to $O(1)$. It is however not clear whether the approximation \eqref{eq:LDPcomparison} is valid up to that order.
\end{remark}
Combining Theorem \ref{theorem:main} with Ansatz \ref{ansatz1}, we will prove Conjecture \ref{conjecture:main} in Section \ref{section:proofconjecture}.

\section{RH characterization of the step function Fredholm determinant}\label{section:diffid}

In this section, we work with a more general class of kernels $L_n$ than the ones arising in the context of the log-Gamma Fredholm determinant.
We let
\begin{align} \label{eq: general kernel}
{L}_{n}(x,x') = \frac{1}{(2\pi i)^2} \int_\Sigma du \int_{c+i\mathbb R} dv \frac{W_1(v)}{W_2(u)} \frac{e^{-vx+ux'}}{v-u},\qquad c\in\mathbb R,
\end{align}
where $c\in\mathbb R$, $\Sigma$ is a simple closed positively oriented curve in the half plane $\Re z<c$,
 $W_1:c+i\mathbb R \to  \C$ is smooth, bounded, and not identically $0$, and  $W_2:\Sigma\to\mathbb C\setminus\{0\}$ is smooth and bounded.

If $W_1=W_2=W$ given by \eqref{def:W} and $0<c<\min\{\theta,1-\theta\}$, we have that $L_n$ is equal to $L_n^\theta$. If we take $W_1(z)=W_2(z)=\widehat W(z)$, with
\be\label{def:Wrescaled} \widehat W(z)=\frac{\Gamma\left(\theta\left(1-\frac{z}{2n}\right)\right)^n}{\Gamma\left(\theta\left(1+\frac{z}{2n}\right)\right)^n},\ee
and $0<c<\min\left\{2n,2n\frac{1-\theta}{\theta}\right\}$,
then $L_n$ is equal to the re-scaled kernel $\widehat L_n^\theta$ from \eqref{def:Lrescales}.

The goal of this section is to relate the step function Fredholm determinant $\det\left(1-1_{(-\infty,s)}L_n\right)_{L^2(\mathbb R)}$ to a $2\times 2$ matrix-valued RH problem.
For our purposes, it would be enough to consider $W_1,W_2$ given by \eqref{def:W} or \eqref{def:Wrescaled}. However, the results and proofs in this section do not rely on the form of $W_1$ and $W_2$, and other choices of $W_1,W_2$ are relevant in other models, like the non-homogeneous log-Gamma polymer on a rectangular lattice, the O'Connell-Yor polymer, the mixed polymer, and more general biorthogonal ensembles of derivative type. We therefore believe that Proposition \ref{prop:diffid} below, for general $W_1,W_2$, is of independent interest, and will turn out useful to study other models.
This section follows similar lines as \cite[Section 2]{CGS17} and is based on a general method developed in \cite{BertolaCafasso} and built on the theory of integrable operators from \cite{IIKS, DIZ}.

\medskip

The RH problem of interest is the following, and depends on parameters $s\in\mathbb R$ as well as on the function $W$, but we will omit the latter dependence in our notation. We will write $Y(z)=Y(z;s)$ for the solution of this RH problem.
\subsubsection*{RH problem for $Y$}
\begin{itemize}
	\item[(Y1)] $Y:\C \setminus ((c+i\mathbb R) \cup \Sigma) \to \C^{2 \times 2}$ is analytic.
	\item[(Y2)] For $z\in (c+i\mathbb R) \cup \Sigma$, $Y$ satisfies the jump condition
	\begin{align*}
		Y_+(z) = Y_-(z)J(z),
	\end{align*}
	where
	\begin{align}
		J(z) = \begin{cases}
			\begin{pmatrix}
				1 & 0 \\ e^{-sz} W_1(z)  & 1
			\end{pmatrix} &\text{ for $z \in c+i\mathbb R $}, \\ 
			\begin{pmatrix}
				1 & -e^{sz} W_2(z)^{-1} \\ 0 & 1
			\end{pmatrix} &\text{ for $z \in \Sigma $}.
		\end{cases}
	\end{align}
	\item[(Y3)] There exists a $2\times 2$ matrix $Y_1=Y_1(s)$ depending on $s$ and also on $W_1,W_2$, such that  $Y(z) = I + \frac{Y_1}{z} + \mathcal{O}(z^{-2})$ as $z \to \infty$.
\end{itemize}
In general, the boundary values $Y_\pm$ have to be understood in $L^2$-sense and condition (Y3) is valid for $z\to\infty$ away from $c+i\mathbb R$, like in \cite{DIZ}. However, if $W_1(c+iy)$ decays sufficiently fast as $y\to \pm\infty$, say $W_1(c+iy)=O(|y|^{-1-\epsilon})$ as $y\to\pm\infty$ for some $\epsilon>0$, then the boundary values are continuous, and condition (Y3) is uniform for $z\in\mathbb C\setminus (c+i\mathbb R)$. To avoid technical complications, we will henceforth assume that $W_1(c+iy)=O(|y|^{-1-\epsilon})$ as $y\to\pm\infty$. Note that in our case of interest \eqref{def:Wrescaled}, it follows from Stirling's approximation that 
$W_1(c+iy)=O(|y|^{-c})$ as $y\to \pm\infty$, which means that we need to take $c>1+\epsilon$, in addition to the already standing condition $c<\min\{2n,2n\frac{1-\theta}{\theta}\}$.

\medskip

The rest of this section is dedicated to the proof of the following result, connecting the Fredholm determinant
$\det(1 - 1_{(s,\infty)}L_n)_{L^2(\R)}$ with the solution to the above RH problem.
\begin{proposition}[Differential identity] \label{prop:diffid}
The RH problem for $Y$ is solvable if and only if $\det(1 - L_n)_{L^2(s,+\infty)}\neq 0$, and we have the identity
	\begin{align*}
		\frac{d}{ds} \log \det(1 - L_n)_{L^2(s,+\infty)} = \left(Y_1(s)\right)_{11}.
	\end{align*}
In particular,
$Q_n^\theta(s)$ from \eqref{def:Fredholmdetstep} satisfies
	\begin{align}\label{eq:diffidQ}
		\frac{d}{ds} \log Q_n^\theta(s) = \left(Y_1(s)\right)_{11},
	\end{align}
where $Y_1(s)$ corresponds to the RH solution $Y$ with $W_1(z)=W_2(z)=\widehat W(z)$ given by \eqref{def:Wrescaled}.
\end{proposition}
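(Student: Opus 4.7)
The approach is to recast $\det(1-L_n)_{L^2(s,+\infty)}$ as a Fredholm determinant of an integrable kernel of Its--Izergin--Korepin--Slavnov (IIKS) type on the contour $\gamma := \Sigma \cup (c+i\mathbb R)$, and then apply the standard IIKS resolvent formula together with Jacobi's formula, in the spirit of \cite{CGS17, BertolaCafasso, IIKS, DIZ}.

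\textbf{Step 1 (Transfer to $\gamma$).} The double-integral representation of $L_n$ suggests the factorization $L_n = \mathcal A\mathcal C\mathcal B$ on $L^2(s,+\infty)$, with
\begin{align*}
(\mathcal A\phi)(x) &= \tfrac{1}{2\pi i}\int_{c+i\mathbb R}W_1(v)e^{-vx}\phi(v)\,dv,\\
(\mathcal B\psi)(u) &= \tfrac{1}{2\pi i}\int_s^{\infty} \tfrac{e^{ux}}{W_2(u)}\psi(x)\,dx,\\
(\mathcal C\chi)(v) &= \int_\Sigma \tfrac{\chi(u)}{v-u}\,du.
\end{align*}
Applying $\det(1-XY)=\det(1-YX)$ twice yields $\det(1-L_n)_{L^2(s,\infty)} = \det(1-\mathcal C\mathcal B\mathcal A)_{L^2(c+i\mathbb R)}$, and an explicit evaluation of $\mathcal B\mathcal A$ (using $\Re(u-v)<0$) followed by a similarity by the multiplication operator $v\mapsto e^{-sv}W_1(v)$ identifies this with $\det(1-\mathcal N)_{L^2(\gamma)}$, where $\mathcal N(z,z') = \frac{f(z)^T g(z')}{2\pi i (z-z')}$ is the integrable kernel built from
$$f(z) = \begin{cases}(0,\ e^{-sz}W_1(z))^T, & z\in c+i\mathbb R, \\ (-e^{sz}/W_2(z),\ 0)^T, & z\in\Sigma,\end{cases}\qquad g(z) = \begin{cases}(1,0)^T, & z\in c+i\mathbb R, \\ (0,1)^T, & z\in\Sigma.\end{cases}$$
By construction $f(z)^T g(z) \equiv 0$ on $\gamma$, and $f(z)g(z)^T = J(z) - I$ reproduces the jumps in condition (Y2). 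By the IIKS theorem, the RH problem for $Y$ is uniquely solvable iff $\det(1-\mathcal N)_{L^2(\gamma)}\neq 0$, which (combined with the chain above) proves the solvability half of the proposition. When this holds, the resolvent has the explicit integrable form
$$\bigl[(1-\mathcal N)^{-1}\mathcal N\bigr](z,z') = \frac{F(z)^T G(z')}{2\pi i (z-z')},\qquad F := Y_-f,\quad G := Y_-^{-T}g,$$
with $F^TG \equiv 0$ on $\gamma$.

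\textbf{Step 2 (Differentiation and residue extraction).} The entire $s$-dependence of $\mathcal N$ lives in the factors $e^{\pm sz}$ of $f$, so $\partial_s f = -zf$ on $c+i\mathbb R$ and $\partial_s f = zf$ on $\Sigma$. Jacobi's formula gives
$$\frac{d}{ds}\log\det(1-\mathcal N) = -\operatorname{tr}\bigl[(1-\mathcal N)^{-1}\partial_s\mathcal N\bigr].$$
Substituting the resolvent from Step 1 and taking the coincident-point limit $z'\to z$ (well-defined thanks to $F^T G \equiv 0$) reduces the right-hand side to a contour integral of the form
$$\frac{1}{2\pi i}\oint_\gamma \operatorname{tr}\bigl[Y_-^{-1}(z)\,Y_-'(z)\,\Xi(z)\bigr]\,dz,$$
where $\Xi(z) = z\,\operatorname{diag}(1,0)$ on $\Sigma$ and $\Xi(z) = z\,\operatorname{diag}(0,1)$ on $c+i\mathbb R$ (signs combining appropriately). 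Deforming $\gamma$ to a large circle at infinity and inserting the expansion $Y(z) = I + Y_1(s)/z + O(z^{-2})$, the residue at $z = \infty$ isolates exactly the $(1,1)$-entry of $Y_1(s)$, yielding $\frac{d}{ds}\log\det(1-L_n)_{L^2(s,\infty)} = (Y_1(s))_{11}$. The identity for $Q_n^\theta(s)$ follows by taking $W_1 = W_2 = \widehat W$; the decay $\widehat W(c+iy) = O(|y|^{-c})$ makes the contour manipulations legitimate provided $c > 1+\epsilon$.

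\textbf{Main obstacle.} The delicate part is Step 2: one must justify the trace and the coincident-point limit in $\operatorname{tr}\bigl[(1-\mathcal N)^{-1}\partial_s\mathcal N\bigr]$ in spite of the fact that $\partial_s\mathcal N$ is unbounded (because of the $z$ factor coming from $\partial_s f$), and verify that the large-circle residue picks out precisely $(Y_1(s))_{11}$ rather than some other combination of the entries of $Y_1$. This isolation rests on the piecewise block structure of $\Xi(z)$ on the two components of $\gamma$ together with the opposite signs in $\partial_s f$ there. The decay hypothesis $W_1(c+iy) = O(|y|^{-1-\epsilon})$ stated after the RH problem is essential throughout to ensure absolute convergence and the legitimacy of the contour deformation to infinity.
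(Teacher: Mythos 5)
Your proposal follows essentially the same strategy as the paper: recast $\det(1-L_n)_{L^2(s,\infty)}$ as a Fredholm determinant of an IIKS-integrable kernel on $\Sigma\cup(c+i\mathbb{R})$ (the paper's $M_n^s$ from Lemma~\ref{lemma: det H = det M}), apply the Jacobi / Bertola--Cafasso differential identity for such determinants, and extract $(Y_1)_{11}$ by pushing the contour integral to infinity. The paper differs only in execution — it first passes through $L^2(i\mathbb R)$ via the two-sided Laplace transform (Lemma~\ref{lemma: det L = det H}), establishes the trace-class property carefully via the further factorizations $A=A_2\circ A_1$, $B=B_2\circ B_1$, and invokes \cite{BertolaCafasso} rather than rederiving the resolvent formula — and the contour step you sketch as ``deform $\gamma$ to a large circle'' is actually a nontrivial cancellation between the $Y_+$ and $Y_-$ boundary-value integrals on the two components of $\gamma$ (the $Y_+$ piece over $\Sigma$ vanishes by Cauchy, the $Y_-$ piece over $\Sigma$ is pushed outward and cancels the $Y_+$ piece over $c+i\mathbb R$, etc.), which your Step 2 elides but which is the correct idea.
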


To prove this result, we will use the two-sided Laplace transform $\mathcal B:L^2(\mathbb R)\to L^2(i\mathbb R)$ defined by
\begin{align*}
\mathcal{B} [f](v) = \int_{-\infty}^{\infty} e^{-vx} f(x) dx,
\end{align*}
with inverse $\mathcal B^{-1}:L^2(i\mathbb R)\to L^2(\mathbb R)$ given by
\begin{align*}
\mathcal{B}^{-1} [\phi] (x) = \frac{1}{2 \pi i} \int_{i \R} e^{vx} \phi(v) dv,
\end{align*}
as well as the integral operator $H_n^s :L^2( i\mathbb R) \to L^2( i\mathbb R)$  with kernel
\begin{align*}
H_n^s (v,v') =\frac{1}{(2 \pi i)^2} \int_{\mathcal C} du e^{s(u-v)} \frac{W_1(c+v')}{W_2(c+u) (v'-u) (v-u)},
\end{align*}
where $\mathcal C=\Sigma-c$.
Observe that $H_n^s$ is a trace-class operator, as $H_n^s=J\circ K$ is the composition of the Hilbert-Schmidt operators $J:L^2(\mathcal C) \to L^2(i\mathbb R),K:L^2( i\mathbb R) \to L^2( \mathcal C)$ with kernels \[J(v,u)=\frac{e^{-sv}}{2\pi iW_2(c+u)(v-u)},\qquad K(u,v')=\frac{e^{su}W_1(c+v')}{2\pi i(v'-u)}.\]

\begin{lemma} \label{lemma: det L = det H}
For any $s\in\mathbb R$, we have the operator identity
\[\mathcal B^{-1}\circ H_n^s\circ \mathcal B=\mathcal L_n^s,\]
where $\mathcal L_n^s:L^2(\mathbb R)\to L^2(\mathbb R)$ is defined by
\[\left(\mathcal L_n^s\right)[f](x)=1_{(s,+\infty)}(x)e^{-cx}\int_{\mathbb R}L_n(x',x)e^{cx'}f(x')dx',\] and the Fredholm determinant identity
\begin{align*}
\det(1 - L_n)_{L^2(s,+\infty)} = \det(I - H_n^s )_{L^2(i\mathbb R)}.
\end{align*}
\end{lemma}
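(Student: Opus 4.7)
The lemma has two claims: the operator identity and the Fredholm determinant identity. I plan to establish the former by unwinding the integral definitions and computing a single contour integral via residues (which is where the indicator $1_{(s,+\infty)}$ emerges); the latter then follows from trace-class similarity together with a Fredholm-series expansion. For the operator identity, take $f$ Schwartz (and extend to $L^2(\R)$ by density), substitute the definitions of $\mathcal B^{-1}$, $H_n^s$, and $\mathcal B$, and shift the inner integration variables by $c$, so that $\tilde v=v+c,\tilde v'=v'+c\in c+i\R$ and $\tilde u=u+c\in\Sigma$. The $c$-shifts cancel in the differences $v-u$, $v'-u$ and in $e^{s(u-v)}$, and the exponentials $e^{vx}$, $e^{-v'x'}$ produce prefactors $e^{-cx}$, $e^{cx'}$. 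A Fubini step then isolates the $\tilde v$-integral
\[\int_{c+i\R}\frac{e^{\tilde v(x-s)}}{\tilde v-\tilde u}\,d\tilde v,\]
with $\tilde u$ lying to the left of the contour: closing left when $x>s$ picks up the residue $2\pi i\,e^{\tilde u(x-s)}$ (the semicircle vanishing by the usual Jordan-type estimate), and closing right when $x<s$ gives $0$. The net factor $2\pi i\,e^{\tilde u(x-s)}1_{(s,+\infty)}(x)$ combines with the remaining $e^{s\tilde u}$ into $2\pi i\,e^{\tilde u x}1_{(s,+\infty)}(x)$, and the residual double integral in $(\tilde v',\tilde u)$ is recognized as $L_n(x',x)$, with the roles of $x,x'$ swapped relative to the standard form of the kernel. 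Reassembling yields exactly $\mathcal L_n^s[f](x)$.

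\textbf{From operator identity to Fredholm determinant identity.} The operator $H_n^s$ is trace-class on $L^2(i\R)$ thanks to the factorization $H_n^s=J\circ K$ into Hilbert--Schmidt pieces already supplied in the statement, so $\mathcal L_n^s=\mathcal B^{-1}H_n^s\mathcal B$ is trace-class on $L^2(\R)$ and
\[\det(1-\mathcal L_n^s)_{L^2(\R)}=\det(1-H_n^s)_{L^2(i\R)}\]
by similarity, since $\mathcal B$ is (up to normalization) the Fourier transform and hence bounded invertible. To finish, I would expand the left-hand side as a Fredholm series. The kernel $\mathcal L_n^s(x,x')=1_{(s,+\infty)}(x)e^{-cx}L_n(x',x)e^{cx'}$ contributes row factors $e^{-cx_i}$ and column factors $e^{cx_j}$ that cancel inside each $k\times k$ determinant, the indicators restrict integration to $(s,+\infty)^k$, and $\det(L_n(x_j,x_i))_{i,j=1}^k=\det(L_n(x_i,x_j))_{i,j=1}^k$ by transposition; the series therefore equals $\det(1-L_n)_{L^2(s,+\infty)}$.

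\textbf{Main obstacle.} The only delicate step is justifying Fubini and the closing of the $\tilde v$-contour. The standing assumption $W_1(c+iy)=O(|y|^{-1-\epsilon})$ provides integrable decay along $c+i\R$, boundedness of $W_2$ on the compact contour $\Sigma$ controls the $\tilde u$-integrand, and rapid decay of $\mathcal B f$ for Schwartz $f$ handles the $x'$-integration. The left-semicircle contribution vanishes by a standard Jordan estimate because $e^{\tilde v(x-s)}$ decays in the half-plane selected by the sign of $x-s$. With these controls in place, all manipulations are rigorous and both claims of the lemma follow.
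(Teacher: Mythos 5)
Your proof is correct and follows essentially the same route as the paper: a residue computation for the $v$-integral (closing left or right depending on the sign of $x-s$) produces the indicator $1_{(s,+\infty)}(x)$, the $c$-shift of variables yields the conjugating factors $e^{\mp cx}$, and the determinant identity then follows from trace-class similarity under the bounded invertible map $\mathcal B$ together with expansion of the Fredholm series, in which the exponential row/column factors cancel and transposition of the kernel matrix is used. The only cosmetic difference is that the paper computes the kernels of $\mathcal B^{-1}\circ J$ and $K\circ\mathcal B$ separately and then composes, whereas you apply the composite operator to a test function and identify the resulting kernel; the content is the same, and your explicit justification of Fubini and the contour-closing via the decay of $W_1$ on $c+i\mathbb R$ is a welcome addition to what the paper states tersely.
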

\begin{proof}
Let us first compute the kernel of the operator $\mathcal B^{-1}\circ J:L^2(\mathcal C)\to L^2(\mathbb R)$: we have
\begin{align*}
(\mathcal B^{-1}\circ J)(x,u)&=\frac{1}{(2\pi i)^2W_2(u+c)}\int_{i\mathbb R}\frac{e^{(x-s)v}}{v-u}dv\\
&=\frac{e^{(x-s)u}}{2\pi i W_2(u+c)}1_{(s,+\infty)}(x),
\end{align*}
where we used a contour deformation and residue argument for the last identity.
Next, we compute the kernel of the operator $K\circ \mathcal B:L^2(\mathbb R)\to L^2(\mathcal C)$: we have
\begin{align*}
(K\circ \mathcal B)(u,x)&=\frac{e^{su}}{2\pi i}\int_{i\mathbb R}\frac{e^{-xv}W_1(v+c)}{v-u}dv.
\end{align*}
Combining these expressions, we find that the kernel of $\mathcal B^{-1}\circ H_n^s\circ \mathcal B=\mathcal B^{-1}\circ J\circ K\circ \mathcal B$
 is given by
\begin{align*}
(\mathcal B^{-1}\circ H_n^s\circ \mathcal B)(x,x')&=1_{(s,+\infty)}(x)\frac{1}{(2\pi i)^2}\int_{\mathcal C}\frac{e^{xu}}{W_2(u+c)}\int_{i\mathbb R}\frac{e^{-x'v}W_1(v+c)}{v-u}dv\ du\\
&=
1_{(s,+\infty)}(x)\frac{e^{-c(x-x')}}{(2\pi i)^2}\int_{\Sigma} du\int_{c+i\mathbb R}dv\frac{W_1(v)}{W_2(u)}\frac{e^{xu-x'v}}{v-u}dv\ du\\
&=1_{(s,+\infty)}(x)e^{-c(x-x')}L_n(x',x)
.
\end{align*}
It follows that the kernel $1_{(s,+\infty)}(x)e^{-c(x-x')}L_n(x',x)$ defines a trace-class operator $\mathcal L_n^s$ (with $W_1=W_2=\widehat W$ as in \eqref{def:Wrescaled}, it is even of finite rank $n$, see \cite{CC24}, but we don't need this). We thus have the Fredholm determinant identities 
\begin{align*}
\det(I - H_n^s)_{L^2( i\mathbb R)}&=\det(I - \mathcal B^{-1}\circ H_n^s\circ \mathcal B )_{L^2( \mathbb R)}\\
&=\det(1 - \mathcal L_n^s)_{L^2(\R)}.\end{align*}
By \eqref{def:Fredholm series}, it is straightforward to check that the latter is precisely the Fredholm series $\det(1 - L_n)_{L^2(s,+\infty)}$, and the result is proven.
\end{proof}
Next, we can write the Fredholm determinant in a different way, in terms of a Fredholm determinant of an integrable kernel operator.
\begin{lemma} \label{lemma: det H = det M}
Let $s \in \R$. Then
\begin{align*}
 \det(I - H_n^s )_{L^2( i\mathbb R)} =  \det(I - M_n^s )_{L^2((c+i\mathbb R) \cup \Sigma)},
\end{align*}
where $M_n^s: L^2((c+i\mathbb R) \cup \Sigma) \to L^2( (c+i\mathbb R) \cup \Sigma)$ is the integral operator induced by the kernel 
\begin{align}\label{def:Mns}
M_n^s(z,z') = \frac{\textbf{f}(z)^t \textbf{g}(z')}{z-z'}
\end{align}
with
\begin{align}\label{def:Mnsinteg}
\textbf{f}(z) = \frac{1}{2 \pi i} \begin{pmatrix}
 1_{\Sigma}(z)W_2(z)^{-1} \\  1_{c+ i\mathbb R}(z) W_1(z)
\end{pmatrix}, \qquad \textbf{g}(z)=\begin{pmatrix}
    -e^{-sz}1_{c+i\mathbb R}(z) \\ e^{sz}1_{\Sigma}(z)
\end{pmatrix}.
\end{align}
\end{lemma}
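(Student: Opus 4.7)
My plan is to recognize $M_n^s$ as a block off-diagonal operator with respect to the decomposition $L^2((c+i\mathbb R)\cup\Sigma)=L^2(c+i\mathbb R)\oplus L^2(\Sigma)$ and then identify the resulting product of its two nontrivial blocks with a transpose of (a unitary shift of) $H_n^s$. From \eqref{def:Mnsinteg}, the scalar $\mathbf f(z)^t\mathbf g(z')$ contains only the cross terms $1_{\Sigma}(z)\,1_{c+i\mathbb R}(z')$ and $1_{c+i\mathbb R}(z)\,1_{\Sigma}(z')$, so $M_n^s(z,z')$ vanishes whenever $z,z'$ lie on the same contour. Writing
\[
M_n^s=\begin{pmatrix}0&A\\ B&0\end{pmatrix},\qquad A:L^2(\Sigma)\to L^2(c+i\mathbb R),\ B:L^2(c+i\mathbb R)\to L^2(\Sigma),
\]
one reads off from \eqref{def:Mns}--\eqref{def:Mnsinteg} that $A$ has kernel $A(v,u)=W_1(v)\,e^{su}/\bigl(2\pi i\,(v-u)\bigr)$ and $B$ has kernel $B(u,v)=-e^{-sv}/\bigl(2\pi i\,W_2(u)\,(u-v)\bigr)$.

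Next I would apply the standard identity $\det(I-M_n^s)=\det(I-AB)$ for block off-diagonal trace-class operators. It follows from the formal expansion $\log\det(I-M)=-\sum_{k\ge 1}\tfrac{1}{k}\mathrm{tr}\,M^k$: odd powers of $M_n^s$ remain block off-diagonal and are therefore traceless, while $(M_n^s)^{2j}$ is block diagonal with diagonal blocks $(AB)^j$ and $(BA)^j$, so $\mathrm{tr}(M_n^s)^{2j}=2\,\mathrm{tr}(AB)^j$. Trace-class membership of $A$ and $B$ is verified by writing each as a product of exactly the same kind of Hilbert--Schmidt pieces that gave the trace-class property of $H_n^s$ in the proof of Lemma \ref{lemma: det L = det H}, so all Fredholm determinant manipulations are legitimate.

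The final step is to compute $AB$ directly: for $v,v'\in c+i\mathbb R$,
\[
(AB)(v,v')=\frac{W_1(v)\,e^{-sv'}}{(2\pi i)^2}\int_{\Sigma}\frac{e^{su}\,du}{W_2(u)(v-u)(v'-u)},
\]
after cancelling the sign coming from $(u-v')^{-1}=-(v'-u)^{-1}$. On the other hand, the unitary shift $v\mapsto v-c$ identifies $L^2(i\mathbb R)$ with $L^2(c+i\mathbb R)$ and, after substituting $u\mapsto u-c$ to turn the $\mathcal C$-integration in the definition of $H_n^s$ into a $\Sigma$-integration, transforms $H_n^s$ into the operator $\tilde H_n^s$ on $L^2(c+i\mathbb R)$ with kernel $\tilde H_n^s(v,v')=W_1(v')e^{-sv}/(2\pi i)^2\int_{\Sigma}e^{su}du/[W_2(u)(v'-u)(v-u)]$. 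Side-by-side comparison shows $\tilde H_n^s(v,v')=(AB)(v',v)$, i.e., $\tilde H_n^s=(AB)^t$. Using invariance of the Fredholm determinant under transposition and under the unitary shift, I conclude
\[
\det(I-M_n^s)_{L^2((c+i\mathbb R)\cup\Sigma)}=\det(I-AB)=\det(I-\tilde H_n^s)_{L^2(c+i\mathbb R)}=\det(I-H_n^s)_{L^2(i\mathbb R)}.
\]

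No step is genuinely hard; the only things to get right are the signs produced by the two indicator terms in $\mathbf f^t\mathbf g$, the orientation of $\Sigma$ inherited from \eqref{def:L}, and the direction of transposition in the final identification of $AB$ with $\tilde H_n^s$. These are all routine bookkeeping once the block off-diagonal structure is in place.
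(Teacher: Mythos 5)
Your proof is correct and uses the same underlying structure as the paper, namely the block off-diagonal decomposition of $M_n^s$ over $L^2(c+i\mathbb R)\oplus L^2(\Sigma)$ and the reduction of its Fredholm determinant to that of the composition of its two nontrivial blocks. The differences are worth a short note. The paper proves the identity $\det(I-\begin{pmatrix}0&A\\B&0\end{pmatrix})=\det(I-AB)$ by an explicit chain of block-triangular factorizations, which is purely algebraic and requires no convergence argument. You instead invoke the trace identity $\log\det(I-M)=-\sum_{k\ge1}\tfrac1k\operatorname{tr}M^k$; this is fine, but keep in mind this series only converges a priori for $\|M\|<1$, so strictly speaking one should add a word about analytic continuation in a scaling parameter, or just quote the algebraic identity directly. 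Second, you are actually more careful than the paper on one point: the blocks you read off from \eqref{def:Mns}--\eqref{def:Mnsinteg} are the genuine blocks of $M_n^s$, whereas the operators denoted $A,B$ in the paper's proof are the blocks of its transpose (a harmless but unacknowledged sloppiness there, absorbed by the transposition-invariance of the Fredholm determinant). Your explicit observation that $\widetilde H_n^s=(AB)^t$ makes this bookkeeping honest. The trace-class verification by further Hilbert--Schmidt factorization is as in the paper. Overall: correct, same route, slightly tidier on the transposition, slightly less elementary on the block-determinant identity.
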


\begin{proof}
We have
\[\widetilde H_n^s(v,v'):=H_n^s(v-c,v'-c)=\frac{1}{(2 \pi i)^2} \int_{\Sigma} du e^{s(u-v)} \frac{W_1(v')}{W_2(u) (v'-u) (v-u)}.\]
Hence, similarly to the factorization $H_n^s=J\circ K$ used before, we have
\begin{align*}
\widetilde H_n^s = A \circ B,
\end{align*}
where $A:L^2(\Sigma) \to L^2(c+i\mathbb R)$ and  $B:L^2(c+ i\mathbb R) \to L^2(\Sigma)$ are the integral operators induced by the kernels
\begin{align*}
A(v,u) = \frac{e^{-sv}}{2 \pi i W_2(u)(v-u)}, \qquad B(u,v') = \frac{e^{su}W_1(v')}{2 \pi i (v'-u)  }.
\end{align*}
Then $A$ and $B$ are Hilbert-Schmidt operators since $\int_{\Sigma} \int_{c+ i\mathbb R} |A(v,u)|^2 |dv||du| < \infty$ and \\ $\int_{c+ i\mathbb R} \int_{\Sigma} |B(u,v')|^2 |du||dv'| < \infty$. 
{We can decompose further $B=B_2 \circ B_1$ and $A = A_2 \circ A_1$, where the operators $B_i,A_i$ are given by
\begin{align*}
	&B_1: L^2(c+ i\mathbb R) \to L^2\left( \gamma\right), &\qquad B_1(w,v') &= \frac{W_1(v')}{2\pi i(v'-w)}, \\
	&B_2: L^2\left(\gamma \right) \to L^2(\Sigma), &\qquad B_2(u,w) &= \frac{e^{su}}{2\pi i(w-u)}, \\
	&A_1: L^2(\Sigma) \to L^2\left(\gamma\right), &\qquad A_1(w,u) &= \frac{1}{2\pi iW_2(u)(w-u)}, \\
	&A_2: L^2\left(\gamma \right) \to L^2(c+ i\mathbb R), &\qquad A_2(v,w) &= \frac{e^{-sv}}{2\pi i(v-w)}.
\end{align*}
Here $\gamma$ is a closed contour surrounding $\Sigma$ in the half plane $\Re z<c$. All the operators above are Hilbert-Schmidt, hence $A$ and $B$ are trace-class.}
Considering $M_n^s$ as an operator acting on the space $L^2(\Sigma) \oplus L^2(c+ i\mathbb R)$ and writing it as a $2\times 2$ matrix of operators
\begin{align*}
	M_n^s = \begin{pmatrix}
		0 & A \\ B & 0
	\end{pmatrix},
\end{align*}
we can compute:
\begin{align*}
	\det (1 -H_n^s)_{L^2(c+i \mathbb R)} &= \det \left(  1 - \begin{pmatrix}
		A \circ B & 0 \\ 0 & 0
	\end{pmatrix} \right)_{L^2((c+i \mathbb R)\cup \Sigma)} \\
	&= \det \left(  1 - \begin{pmatrix}
		A \circ B & 0 \\ B & 0
	\end{pmatrix} \right)_{L^2((c+i \mathbb R)\cup\Sigma)}\\
	&=\det \left(  1 + \begin{pmatrix}
		0 & A \\ 0 & 0
	\end{pmatrix} \right)_{L^2((c+i \mathbb R)\cup\Sigma)}\det \left(  1 - \begin{pmatrix}
		0 & A \\ B & 0
	\end{pmatrix} \right)_{L^2((c+i \mathbb R)\cup\Sigma)} \\
	&=\det \left(  1 - \begin{pmatrix}
		0 & A \\ B & 0
	\end{pmatrix} \right)_{L^2((c+i \mathbb R)\cup\Sigma)} = \det (1 -M_n^s)_{L^2((c+i \mathbb R)\cup\Sigma)} ,
\end{align*}
which proves the lemma.
\end{proof}
Following the terminology of \cite{IIKS}, an integral operator with kernel of the form \eqref{def:Mns} with $\mathbf f(u)^t\mathbf g(u)=0$ is called integrable. The general theory behind such operators developed in \cite{IIKS, DIZ} and \cite{BertolaCafasso} enables us to relate the Fredholm determinant $\det(I - M_{n,s} )_{L^2(i \mathbb R\cup\Sigma)}$ to the solution of a RH problem with jump matrix $I-2\pi i {\mathbf f}(z){\mathbf g}(z)^t$. In our case of interest, the relevant RH problem is given by conditions (Y1)--(Y3). Here and below, we write $\sigma_3=\begin{pmatrix}1&0\\0&-1\end{pmatrix}$.

\begin{lemma} \label{lemma: auxiliary lemma diff identity}
Let $s\in\mathbb R$. It holds that
\begin{align*}
\frac{d}{ds} \log \det(I - M_{n}^s )_{L^2(i \mathbb R\cup\Sigma)} 
&=  \int_{(c+ i\mathbb R)\cup \Sigma} \frac{z}{2} \tr\left( Y_+(z)^{-1}Y_+'(z)\sigma_3 -  Y_-(z)^{-1}Y_-'(z)\sigma_3 \right) \frac{dz}{2 \pi i}.
\end{align*}
\end{lemma}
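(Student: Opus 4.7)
I would prove this by applying the standard Its--Izergin--Korepin--Slavnov differential identity for Fredholm determinants of integrable operators. The first task is to recognize $M_n^s$ as an integrable kernel whose associated RH problem is exactly (Y1)--(Y3). A direct inspection of \eqref{def:Mnsinteg} shows that $\mathbf{f}(z)^t \mathbf{g}(z) \equiv 0$ pointwise on $\Gamma := (c+i\mathbb{R}) \cup \Sigma$, and that $I - 2\pi i\,\mathbf{f}(z)\mathbf{g}(z)^t$ coincides with the jump matrix $J$ in (Y2) on each component. This places us in the IIKS setup, so the resolvent of $M_n^s$ is itself integrable with ``dressed'' symbols $Y_\pm \mathbf{f}$ and $Y_\pm^{-t}\mathbf{g}$, and one has the Malgrange-type formula (see, e.g., \cite{BertolaCafasso})
\[
\frac{d}{ds}\log\det(I - M_n^s)_{L^2(\Gamma)} = -\int_\Gamma \tr\bigl(Y_-^{-1}(z)\,Y_-'(z)\,(\partial_s J)(z)\,J(z)^{-1}\bigr)\,\frac{dz}{2\pi i}.
\]

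The remainder of the argument is purely algebraic. The $s$-dependence of the jump is conjugative in $\sigma_3$: writing $J(z) = e^{-\frac{sz}{2}\sigma_3}\,J_0(z)\,e^{\frac{sz}{2}\sigma_3}$ with $J_0$ independent of $s$, one finds $(\partial_s J)\,J^{-1} = \tfrac{z}{2}\bigl(J\sigma_3 J^{-1} - \sigma_3\bigr)$. Next, using the jump relation $Y_+ = Y_- J$ together with the cyclicity of the trace,
\[
\tr\bigl((Y_+^{-1}Y_+' - Y_-^{-1}Y_-')\sigma_3\bigr) = \tr\bigl(Y_-^{-1}Y_-'(J\sigma_3 J^{-1} - \sigma_3)\bigr) + \tr(J^{-1}J'\sigma_3).
\]
Since the jumps on both $c+i\mathbb{R}$ and $\Sigma$ are unipotent with a single nonzero off-diagonal entry, $J^{-1}J'$ is strictly off-diagonal on each piece of $\Gamma$, so the last term vanishes. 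Combining these two identities,
\[
\tfrac{z}{2}\tr\bigl((Y_+^{-1}Y_+' - Y_-^{-1}Y_-')\sigma_3\bigr) = -\tr\bigl(Y_-^{-1}Y_-'(\partial_s J)J^{-1}\bigr),
\]
and integration over $\Gamma$ yields the claim.

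The only delicate step is the invocation of the Malgrange identity in the first paragraph: one must know that $Y_\pm$ are continuous with continuous $s$-derivative, and that the integrand is absolutely integrable at infinity on $c+i\mathbb{R}$. Both follow from the standing assumption $W_1(c+iy) = O(|y|^{-1-\epsilon})$, which gives H\"older continuous boundary values together with sufficient decay of $J - I$ to make the vertical integral absolutely convergent. After this point is settled, the computation is bookkeeping with unipotent $2\times 2$ jump matrices, and I expect no further obstacle.
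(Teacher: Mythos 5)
Your proof follows essentially the same route as the paper: identify $M_n^s$ as an IIKS operator whose associated jump is exactly the $J$ of (Y1)--(Y3), invoke the Bertola--Cafasso Malgrange-type differential identity, and then carry out an algebraic reduction exploiting that the jump matrices are unipotent and $\sigma_3$-conjugation-friendly. The paper organizes this via the identities $(J-I)J^{-1}=J-I$, $J^{-1}=2I-J$, and $J\sigma_3=\sigma_3 J^{-1}$, whereas you package the $s$-dependence through the conjugative form of $J$. Both routes are valid, so this is the same proof in different clothing.

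There is, however, a genuine sign inconsistency in your argument that you should fix. You write $J(z)=e^{-\frac{sz}{2}\sigma_3}J_0(z)e^{\frac{sz}{2}\sigma_3}$, but a direct check on the lower-triangular piece gives
\[
e^{-\frac{sz}{2}\sigma_3}\begin{pmatrix}1&0\\W_1&1\end{pmatrix}e^{\frac{sz}{2}\sigma_3}=\begin{pmatrix}1&0\\e^{sz}W_1&1\end{pmatrix},
\]
which has $e^{+sz}$ rather than the required $e^{-sz}$; the correct decomposition is $J(z)=e^{\frac{sz}{2}\sigma_3}J_0(z)e^{-\frac{sz}{2}\sigma_3}$. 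Consequently the identity you derive, $(\partial_s J)J^{-1}=\tfrac{z}{2}\bigl(J\sigma_3 J^{-1}-\sigma_3\bigr)$, has the wrong sign: the correct form is $(\partial_s J)J^{-1}=\tfrac{z}{2}\bigl(\sigma_3-J\sigma_3 J^{-1}\bigr)$ (equivalently $-z(J-I)\sigma_3$, using $J\sigma_3 J^{-1}-\sigma_3=2(J-I)\sigma_3$). Your two stated identities then give, after substitution and multiplication by $z/2$,
\[
\tfrac{z}{2}\tr\bigl((Y_+^{-1}Y_+'-Y_-^{-1}Y_-')\sigma_3\bigr)=+\tr\bigl(Y_-^{-1}Y_-'(\partial_s J)J^{-1}\bigr),
\]
not the minus sign you wrote in the ``Combining'' line. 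So as written, your combining step does not follow from the two identities you state: the extra minus that magically appears there is exactly what compensates the sign error in the conjugation, which is why the final answer comes out right. Replace $e^{-\frac{sz}{2}\sigma_3}J_0 e^{\frac{sz}{2}\sigma_3}$ by $e^{\frac{sz}{2}\sigma_3}J_0 e^{-\frac{sz}{2}\sigma_3}$ and the whole chain becomes internally consistent and the Malgrange formula (with its minus sign) then yields the lemma as stated. The remarks at the end on integrability at infinity and H\"older continuity of $Y_\pm$ are fine and match the paper's standing assumption on the decay of $W_1$.
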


\begin{proof}
By \cite[Theorem 2.1]{BertolaCafasso}, we have
\begin{align*}
\frac{d}{ds} \log \det(I - M_{n}^s )_{L^2((c+i \mathbb R)\cup\Sigma)} &= \int_{(c+ i\mathbb R)\cup \Sigma} z \tr\left( Y_-(z)^{-1}Y_-'(z) \partial_sJ(z)J(z)^{-1} \right)  \frac{dz}{2 \pi i}.
\end{align*}
Using that $(I-J)J^{-1} = I-J$ and abbreviating $J=J(z)$, we get
\begin{align*}
\partial_sJ(z) J(z)^{-1} = \begin{cases}
-z \begin{pmatrix}
0 &  0 \\ e^{-sz} W_1(z) & 0
\end{pmatrix}J^{-1}= z(I-J)J^{-1}=z(I-J) = z(J-I) \sigma_3, &\text{ for $z \in c+ i\mathbb R $}, \\ 
z\begin{pmatrix} 
0 & -e^{sz} W_2(z)^{-1}  \\ 0 & 0
\end{pmatrix}J^{-1} = z(J-I)J^{-1}=z(J-I) =z(J-I) \sigma_3, &\text{ for $z \in \Sigma $},
\end{cases}
\end{align*}
which yields \[
\frac{d}{ds} \log \det(I - M_{n}^s )_{L^2((c+i \mathbb R)\cup\Sigma)}
= \int_{(c+i \mathbb R)\cup \Sigma} z \tr\left( Y_-(z)^{-1}Y_-'(z)(J-I)\sigma_3 \right) \frac{dz}{2 \pi i}.
\] Furthermore, we have
\begin{align*}
Y_+(z)^{-1}Y_+'(z) &= J(z)^{-1}Y_-(z)^{-1}(Y_-'(z)J(z) + Y_-(z)J'(z)) \\
&= J(z)^{-1}Y_-(z)^{-1}Y_-'(z)J(z) +J(z)^{-1}J'(z).
\end{align*}
Since $J'(z)$ is off-diagonal this implies
\begin{align*}
\tr(Y_+(z)^{-1}Y_+'(z) \sigma_3) = \tr( Y_-(z)^{-1}Y_-'(z)J(z) \sigma_3J(z)^{-1}).
\end{align*}
Using $J^{-1}(z) = 2I-J(z)$ and $J(z)\sigma_3 = \sigma_3 J(z)^{-1}$, we obtain
\begin{align*}
\tr(Y_+(z)^{-1}Y_+'(z) \sigma_3) &= \tr( Y_-(z)^{-1}Y_-'(z)J(z) \sigma_3(2I-J(z))) \\
&= 2\tr( Y_-(z)^{-1}Y_-'(z)J(z) \sigma_3) - \tr( Y_-(z)^{-1}Y_-'(z)J(z) \sigma_3J(z)) \\
&=  2\tr( Y_-(z)^{-1}Y_-'(z)J(z) \sigma_3) - \tr( Y_-(z)^{-1}Y_-'(z)\sigma_3).
\end{align*}
Hence
\begin{align*}
&\tr(Y_+(z)^{-1}Y_+'(z) \sigma_3 -  Y_-(z)^{-1}Y_-'(z) \sigma_3) \\&= 2\tr( Y_-(z)^{-1}Y_-'(z)J(z) \sigma_3)  - 2\tr( Y_-(z)^{-1}Y_-'(z)\sigma_3) \\& = 2\tr\left( Y_-(z)^{-1}Y_-'(z)(J-I)\sigma_3 \right),
\end{align*}
which implies the result.
\end{proof}

We are now ready to prove Proposition \ref{prop:diffid}. 
By Lemma \ref{lemma: auxiliary lemma diff identity} we can write
\begin{align*}
	\frac{d}{ds} \log \det(I - M_{n}^s)_{L^2((c+i \mathbb R)\cup\Sigma)} =& \int_{c+ i\mathbb R} \frac{z}{2} \tr\left( Y_+(z)^{-1}Y_+'(z)\sigma_3 \right) \frac{dz}{2 \pi i} 
 -  \int_{c+ i\mathbb R} \frac{z}{2} \tr\left( Y_-(z)^{-1}Y_-'(z)\sigma_3 \right) \frac{dz}{2 \pi i} 
  \\
& + \int_{\Sigma} \frac{z}{2} \tr\left( Y_+(z)^{-1}Y_+'(z)\sigma_3 \right) \frac{dz}{2 \pi i} 
   -  \int_{\Sigma} \frac{z}{2} \tr\left( Y_-(z)^{-1}Y_-'(z)\sigma_3 \right) \frac{dz}{2 \pi i}.
\end{align*}

The integral
\begin{align*}
 \int_{\Sigma} \frac{z}{2} \tr\left( Y_+(z)^{-1}Y_+'(z)\sigma_3 \right) \frac{dz}{2 \pi i} 
\end{align*}
vanishes since the integrand can be continued analytically inside $\Sigma$. The contour of the integral 
\begin{align*}
 -  \int_{\Sigma} \frac{z}{2} \tr\left( Y_-(z)^{-1}Y_-'(z)\sigma_3 \right) \frac{dz}{2 \pi i} 
\end{align*}
can be enlarged because the integrand can be continued analytically to the region outside $\Sigma$ in the half plane $\Re z<c$. For instance, for $R>0$ large and $\epsilon>0$ small, we can write
\begin{align*}
 -  \int_{\Sigma} \frac{z}{2} \tr\left( Y_-(z)^{-1}Y_-'(z)\sigma_3 \right) \frac{dz}{2 \pi i}  = -  \int_{\Sigma_{\epsilon,R}} \frac{z}{2} \tr\left( Y(z)^{-1}Y'(z)\sigma_3 \right) \frac{dz}{2 \pi i},
\end{align*}
where $\Sigma_{\epsilon,R}$ consists of the semi-circle of radius $R$ around $c-\epsilon$ lying on the left of $ c-\epsilon+i\mathbb R$ and of the interval $[c-\epsilon-i R,c-\epsilon +i R]$.
As $\epsilon\to 0$ and $R\to \infty$, the integral on 
$[c-\epsilon-i R,c-\epsilon +i R]$ cancels out against
\begin{align*}
\int_{c+ i\mathbb R} \frac{z}{2} \tr\left( Y_+(z)^{-1}Y_+'(z)\sigma_3 \right) \frac{dz}{2 \pi i}. 
\end{align*}
Moreover, we can deform the contour $[c-iR,c+iR]$ in \begin{align*}
\int_{c-iR}^{c+iR} \frac{z}{2} \tr\left( Y_-(z)^{-1}Y_-'(z)\sigma_3 \right) \frac{dz}{2 \pi i}
\end{align*}
by analytic continuation of the integrand to the semi-circle of radius $R$ around $c$ on the right of $c$.
Hence 
\begin{align*}
	\frac{d}{ds} \log \det(I - M_{n}^s)_{L^2((c+i \mathbb R)\cup\Sigma)} =& \lim_{R\to\infty} \int_{\{ |z-c| =R \}} \frac{z}{2} \tr\left( Y(z)^{-1}Y'(z)\sigma_3 \right) \frac{dz}{2 \pi i},
\end{align*}
where the circle $\{ |z-c| =R \}$ is oriented clockwise. Using that $Y(z) = I + \frac{Y_1(s)}{z} + \mathcal{O}(z^{-2})$ we arrive at
\begin{align*}
\frac{d}{ds} \log \det(I - M_{n}^s)_{L^2((c+i \mathbb R)\cup\Sigma)} =& \int_{\{ |z-c| =R \}}  -\frac{Y_1(s)\sigma_3}{2 z}\frac{dz}{2 \pi i} = (Y_1(s))_{11},
\end{align*}
where we used that $Y_1(s)$ has zero trace because $\det Y=1$.

We now obtain Proposition \ref{prop:diffid} by combining this with Lemma \ref{lemma: det L = det H} and Lemma \ref{lemma: det H = det M}.

\section{Steepest descent analysis}\label{section:RH}
In this section we carry out the first steps in the nonlinear steepest descent analysis for the RH problem (Y1)--(Y3) in the case of the log-Gamma polymer. This means that $W_1$ and $W_2$ in conditions (Y1)--(Y3) are given by  
\begin{align} \label{eq: def F log Gamma rescaled}
	W_1(z)=W_2(z)=\widehat W(z)= \frac{ \Gamma(\theta(1-\frac{z}{2n}))^n}{  \Gamma(\theta(1+\frac{z}{2n}))^n}.
\end{align}
The associated kernel is given by
\begin{align}\label{eq: def kernel loggamma rescaled}
	\widehat{L}_{n}^\theta(x,x')= \frac{1}{(2\pi i)^2} \int_\Sigma du \int_{c+i\mathbb R} dv \frac{\widehat W(v)}{\widehat W(u)} \frac{e^{-xv+x' u}}{v-u}.
\end{align}
As explained before, we need $1+\epsilon<c<\min\{2n,2n\frac{1-\theta}{\theta}\}$, and $\Sigma$ lies in the left half plane and encircles $-2n$ in the positive direction, but none of the other zeros of $\widehat W$, namely $-2n-\frac{2n}{\theta}, -2n-\frac{4n}{\theta} \ldots$.
A convenient choice for $c$ will turn out to be $c=\delta n$ with $\delta>0$ sufficiently small.

Following the Deift-Zhou nonlinear steepest descent method \cite{DZ, Deift}, we will apply a series of transformations 
\[Y\mapsto U\mapsto T\mapsto S\mapsto R\] to the RH problem for $Y$, with the aim of arriving at a RH problem with small jumps as $n\to\infty$ and normalized to $I$ at infinity.
All of these RH solutions $Y,U,T,S,R$ will depend on a complex variable denoted $z$ or $\zeta$, as well as on parameters $s,\theta,n$. For the ease of notation, the dependence on the parameters $s,\theta,n$ will however not always be visible in our notations.

\subsection{Transformation $Y \mapsto U$}
Let us first introduce a rotation and dilation of the variable $z$ of the RH problem, by setting $z=in\zeta$. This is convenient for two reasons: first,  the scale of the new variable $\zeta$ will be convenient for the construction of a $g$-function, and secondly, we prefer to have the jump matrix on a horizontal line instead of a vertical one.
We define
\begin{align*}
	U(\zeta) = Y(in\zeta).
\end{align*} 
The RH conditions (Y1)--(Y3) for $Y$ then translate to the following RH conditions for $U$.

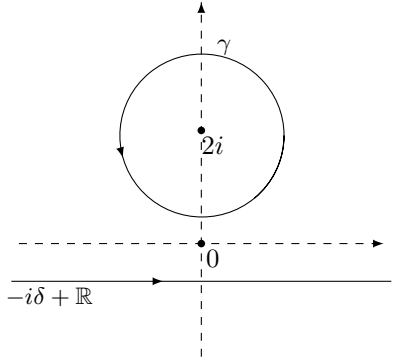
\begin{figure}[H]\label{figure:U}
	\begin{center}
		\begin{tikzpicture}		
			\node at (0,0) {};
			\fill (0,0) circle (0.05cm);
			\node at (0.15,-0.2) {$0$};
			\fill (0,1.5) circle (0.05cm);
			\node at (0.15,1.3) {$2i$};
			\draw[dashed,->-=1,black] (0,-1.5) to [out=90, in=-90] (0,3.2);
			\draw[dashed,->-=1,black] (-2.4,0) to [out=0, in=-180] (2.4,0);
			\node at (0.3,2.6) {\small $\gamma$};
					\draw[-<-=0.6,black] (2.5,-0.5)--($(2.5,-0.5)+(180+0:5)$);
		\node at (-2,-0.7) {\small $-i\delta+\mathbb R$};	
		\draw[->-=0.6,black] ([shift=(0.180:1.0cm)]-0.3,0.6) arc (-50.180:360:1.08cm);
\end{tikzpicture}
		\caption{The shape of the jump contour $(-i\delta+\mathbb R)\cup\gamma$ for the RH problem for $U$.}
	\end{center}
\end{figure}

\subsubsection*{RH problem for $U$}
\begin{itemize}
	\item[(U1)] $U:\C \setminus ((-i\delta+\mathbb R) \cup \gamma) \to \C^{2 \times 2}$ is analytic, where $\gamma=-\frac{i}{n}\Sigma$ is a closed positively oriented curve in the upper half plane surrounding $2i$ but not surrounding $2i+\frac{2i}{\theta}, 2i+\frac{4i}{\theta}, \ldots$.
	\item[(U2)] For $\zeta\in (-i\delta+\mathbb R) \cup \gamma$, $U$ satisfies the jump condition
	\begin{align*}
		U_+(\zeta) = U_-(\zeta)J_U(\zeta),
	\end{align*}
	where
	\begin{align}
		J_U(\zeta) = \begin{cases}
			\begin{pmatrix}
				1 & 0 \\ e^{nh(\zeta;s,\theta)} & 1
			\end{pmatrix} &\text{ for $\zeta \in -i\delta +\mathbb R $}, \\ 
			\begin{pmatrix}
				1 & -e^{-nh(\zeta;s,\theta)} \\ 0 & 1
			\end{pmatrix} &\text{ for $\zeta \in \gamma $},
		\end{cases}
\label{def:JU}\end{align}		
		and
\be\label{def:h}
		h(\zeta;s,\theta)=\log\Gamma\left(\theta\left(1-\frac{i\zeta}{2}\right)\right)-\log\Gamma\left(\theta\left(1+\frac{i\zeta}{2}\right)\right)-is\zeta.\ee
	\item[(U3)] As $\zeta\to\infty$, we have $U(\zeta) = I -i \frac{Y_1(s,n,\theta)}{n\zeta} + \mathcal{O}(\zeta^{-2})$, where we write $Y_1(s,n,\theta)$ instead of $Y_1(s)$ to emphasize the dependence on $n$ and $\theta$.
\end{itemize}
Observe that $e^{\pm nh(\zeta;s,\theta)}$ is analytic for $\zeta\in\mathbb C\setminus\{\mp 2 i,\mp 2 i\mp\frac{2i}{\theta}, \mp 2i\mp\frac{4i}{\theta},\cdots\}$, and that
\be\label{eq:derh}
		h'(\zeta;s,\theta)=-\frac{i\theta}{2}\psi\left(\theta\left(1-\frac{i\zeta}{2}\right)\right)-\frac{i\theta}{2}\psi\left(\theta\left(1+\frac{i\zeta}{2}\right)\right)-is,\ee
and
\be\label{eq:der2h}
		h''(\zeta;s,\theta)=-\frac{\theta^2}{4}\psi'\left(\theta\left(1-\frac{i\zeta}{2}\right)\right)+\frac{\theta^2}{4}\psi'\left(\theta\left(1+\frac{i\zeta}{2}\right)\right).\ee
In particular, $h''(\zeta;s,\theta)$ is antisymmetric in $\zeta$, and for $\zeta>0$,
we have that
\be\label{eq:der2hreal}
		ih''(\zeta;s,\theta)=-\frac{\theta^2}{2}\Im\psi'\left(\theta\left(1+\frac{i\zeta}{2}\right)\right)>0.\ee
The positivity follows for instance from the classical identity
$\psi'(z)=\sum_{k=0}^\infty\frac{1}{(z+k)^2}$
for the trigamma function.

As $\theta\to 0$, we have
\be\label{eq:limith}
h(\zeta;s,0):=\lim_{\theta\to 0}h(\zeta;s,\theta)=\log\frac{\zeta-2i}{\zeta+2i}
-is\zeta.\ee
The sign of $\Re h(\zeta;s,0)$ as a function of $\zeta\in\mathbb C$ will be crucial in what follows.
We denote $\mathcal H_+^s$ for the subset of the upper half plane containing all $\zeta$ such that $\Re h(\zeta;s,0)>0$, and $\mathcal H_-^s=\overline{\mathcal H_+^s}=-\mathcal H_+^s$ for the subset of the lower half plane on which $\Re h(\zeta;s,0)<0$.
 We illustrate the sign distribution of $\Re h(\zeta;s,0)$ for $s<1$, for $s=1$, and for $s>1$ in Figure \ref{figure:zeroset}.

\begin{figure}[htp]
\centering
\includegraphics[width=.32\textwidth]{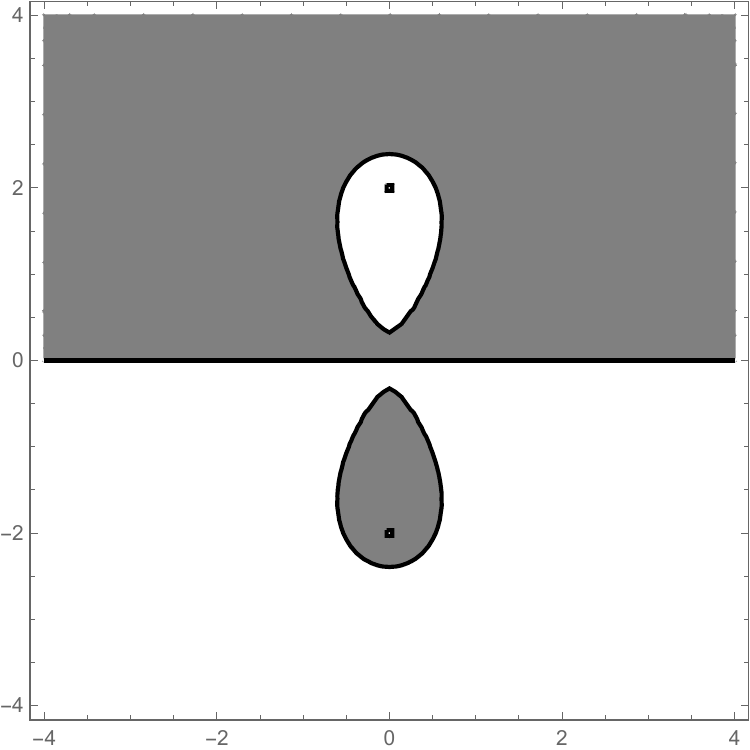}\hfill
\includegraphics[width=.32\textwidth]{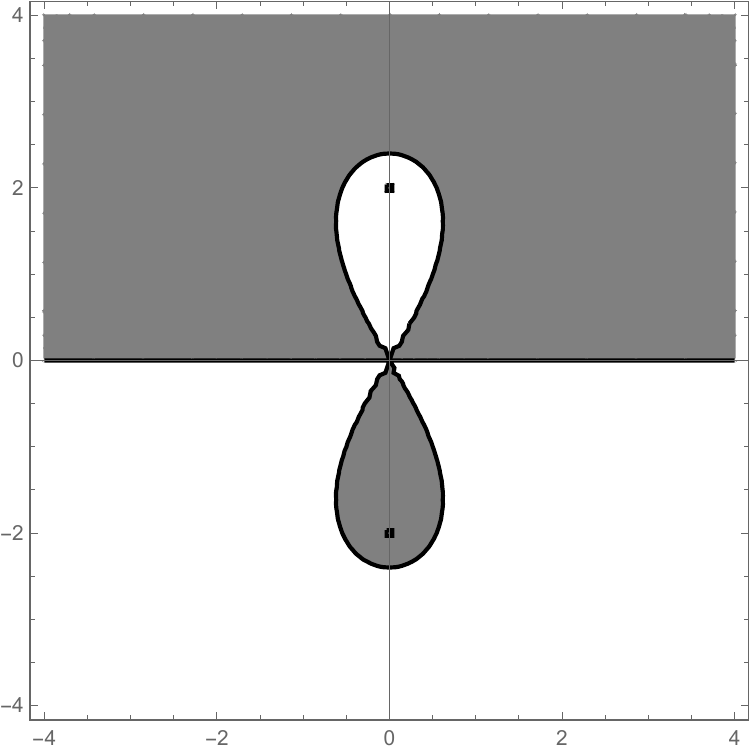}\hfill
\includegraphics[width=.32\textwidth]{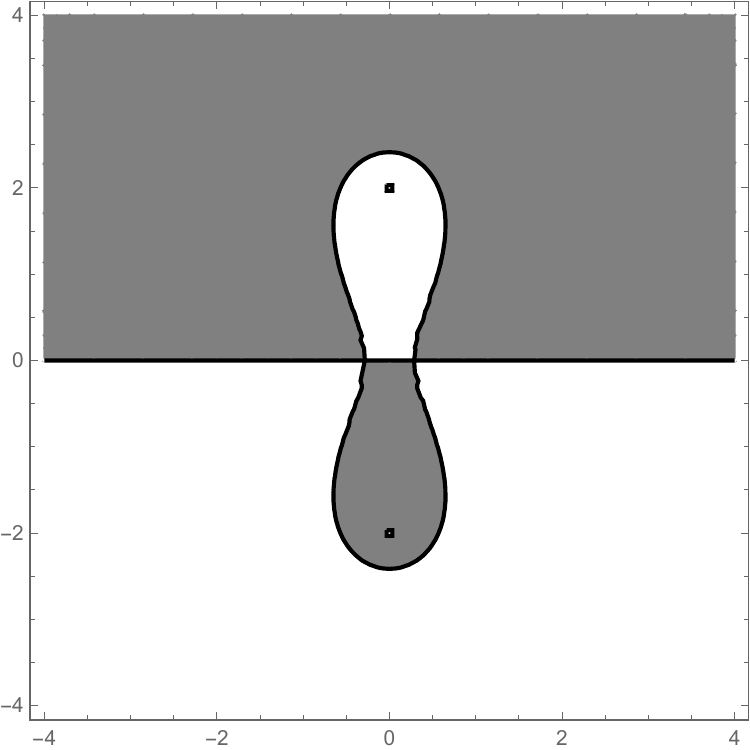}
\caption{Plot of regions in the complex $\zeta$-plane where $\Re h(\zeta;s,0)$ is positive (shaded) and negative (white). The left figure corresponds to $s>1$, the middle figure to $s=1$, and the right figure to $s<1$. $\mathcal H_+^s$ is the shaded region in the upper half plane, and $\mathcal H_-^s$ is the white region in the lower half plane.
}
\label{figure:zeroset}
\end{figure}

We record the following result about the sets $\mathcal H_\pm^s$ for $s=1$ and $s>1$ for later use.
\begin{proposition}\label{prop:signh}
The sets $\mathcal H_\pm^s$ are increasing in $s$: for any $s_1<s_2$, we have $\mathcal H_\pm^{s_1}\subset \mathcal H_\pm^{s_2}$.
For $s=1$, there exists a smooth $8$-shaped closed curve with self-intersection at $0$, such that the exterior of the curve is a subset of $\mathcal H_+^1\cup\mathcal H_-^1$, and such that angle between the curve and the real line at $0$ is $\pm\pi/3$.
For any $s_0>1$, there exist two complex conjugate simply closed curves $\tilde{\gamma}_\pm$, with $\tilde\gamma_+$ strictly in the upper half plane and $\tilde\gamma_-$ strictly in the lower half plane, such that the exterior of $\tilde\gamma_+\cup\tilde\gamma_-$ is a subset of $\mathcal H_+^s\cup \mathcal H_-^s$ for every $s>s_0$.
\end{proposition}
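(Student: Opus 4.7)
The plan is to work directly with the explicit harmonic function
$$\Re h(\zeta;s,0)=\log\bigl|\tfrac{\zeta-2i}{\zeta+2i}\bigr|+s\,\Im\zeta$$
on $\mathbb{C}\setminus\{\pm 2i\}$. The monotonicity $\mathcal{H}_\pm^{s_1}\subset\mathcal{H}_\pm^{s_2}$ for $s_1<s_2$ is immediate since the $s$-dependent term $s\,\Im\zeta$ is positive on $\{\Im\zeta>0\}$ and negative on $\{\Im\zeta<0\}$. Next I would locate the critical points of $h(\cdot;s,0)$ from $h'(\zeta;s,0)=\tfrac{4i}{\zeta^2+4}-is=0$, yielding $\zeta^2=4(1-s)/s$: a unique triple critical point at $\zeta=0$ when $s=1$ (with $h''(0;1,0)=0$ and $h'''(0;1,0)=-i/2$), and two simple critical points $\zeta_\pm(s)=\pm 2i\sqrt{(s-1)/s}$ on the imaginary axis when $s>1$. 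A short computation (setting $a=\sqrt{(s-1)/s}\in(0,1)$) gives $\Re h(\zeta_\pm;s,0)=\pm\bigl(\log\tfrac{1-a}{1+a}+\tfrac{2a}{1-a^2}\bigr)$, whose derivative in $a$ equals $\tfrac{4a^2}{(1-a^2)^2}>0$; hence $\Re h(\zeta_\pm;s,0)\gtrless 0$ and the saddles $\zeta_\pm$ avoid the zero set $\{\Re h=0\}$ for every $s>1$.

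For $s>1$, the level set $\{\Re h=0\}\cap\{\Im\zeta>0\}$ is therefore a smooth $1$-manifold. Combining the boundary behaviour ($\Re h\equiv 0$ on $\mathbb{R}$, $\Re h\to-\infty$ at $2i$, and $\Re h\sim s\,\Im\zeta\to+\infty$ at infinity) with the maximum principle for harmonic functions, I would argue: no component escapes to infinity; no component accumulates on $\mathbb{R}$ from above, because $\partial_y\Re h(x;s,0)=s-\tfrac{4}{x^2+4}\ge s-1>0$ so $\Re h>0$ just above the real axis; every compact component must encircle the unique singularity $2i$, since otherwise it would bound a disk on which $\Re h$ is harmonic with $0$ boundary values and hence identically zero; and two nested components would likewise force $\Re h\equiv 0$ on the enclosed annulus, contradicting analyticity. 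Hence there is exactly one such component, a smooth simple closed curve $\tilde\gamma_+^s\subset\{\Im\zeta>0\}$ around $2i$, and by conjugation symmetry $\tilde\gamma_-^s=\overline{\tilde\gamma_+^s}$ lies in the lower half plane. Fixing $s_0>1$ and setting $\tilde\gamma_\pm:=\tilde\gamma_\pm^{s_0}$, the monotonicity above then gives the required inclusion of the exterior of $\tilde\gamma_+\cup\tilde\gamma_-$ in $\mathcal{H}_+^s\cup\mathcal{H}_-^s$ for every $s>s_0$.

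For $s=1$, the Taylor expansion $h(\zeta;1,0)=h(0;1,0)-\tfrac{i}{12}\zeta^3+O(\zeta^5)$ yields $\Re h(\zeta;1,0)\approx\tfrac{r^3}{12}\sin(3\phi)$ in polar coordinates $\zeta=re^{i\phi}$, so the zero set emanates from $0$ along six rays at angles $\phi\in\{0,\pm\pi/3,\pm 2\pi/3,\pi\}$: two of these lie inside the global zero axis $\mathbb{R}$, and the four remaining rays issue into the upper and lower half planes making angles $\pm\pi/3$ with the real axis. Away from $0$ there are no critical points of $h$, so the level set is smooth there, and the same maximum-principle and nesting arguments as in the $s>1$ case pair the four rays into exactly two smooth arcs that close up into a single figure-$8$ with self-intersection at $0$, with tangent angles read off from the leading Taylor term.

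The main obstacle is the rigorous local-to-global glueing at $s=1$: combining the Taylor description at $0$ with the global harmonic analysis to rule out unexpected extra components or alternative pairings of the four tangent rays. The cleanest way is a continuity argument letting $s\to 1^+$ in the $s>1$ picture, in which one shows that the lower crossing $y_1(s)$ of $\tilde\gamma_+^s$ with the positive imaginary axis, determined by $sy=\log\tfrac{y+2}{2-y}$, satisfies $y_1(s)\sim 2\sqrt{3(s-1)}\to 0$, so that the two conjugate loops $\tilde\gamma_\pm^s$ simultaneously meet at $0$ and merge into the figure-$8$; combined with the local Taylor picture, this identifies the limit with the desired figure-$8$ and excludes further components.
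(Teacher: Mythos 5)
Your argument is essentially correct but takes a genuinely different, and more ambitious, route than the paper. You set out to characterize the full zero level set $\{\Re h(\cdot;s,0)=0\}$: you locate all critical points of $h$ explicitly, check that they avoid the level set (so that the level set is a smooth $1$-manifold), and then combine the maximum principle, the sign of $\partial_y\Re h$ on $\mathbb R$, and a nesting argument to conclude that the upper-half-plane component of the level set is a single closed loop around $2i$. The paper does something strictly weaker and much lighter: the statement only asks for \emph{some} curve whose exterior lies in $\mathcal H_+^s\cup\mathcal H_-^s$, so the paper never analyzes the level set globally at all. It just establishes two open regions where $\Re h$ has the right sign --- the exterior of a disk $\{|\zeta|\ge R\}$ via the leading large-$\zeta$ behavior $\Re h\sim \Im\zeta$, and wedges $\{|\arg(\pm\zeta)|<t_0\}$ near the real axis via the Cauchy--Riemann observation $\partial_y\Re h|_{\mathbb R}=-\Im h'|_{\mathbb R}>0$ for $u\neq 0$ --- and then simply places the $8$-shaped (or disconnected, for $s>1$) curve inside the union of those regions, with the $\pm\pi/3$ angle coming from the cubic Taylor behavior of $h(\zeta;1,0)$ at $0$. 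Both proofs hinge on the same two analytic facts (the asymptotic sign of $\Re h$ at infinity and the sign of $\partial_y\Re h$ on the real axis), but the paper avoids the global topological classification entirely.

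The gap you yourself flag --- the local-to-global gluing at $s=1$, i.e. ruling out extra level-set components and the ``wrong'' pairing of the four tangent rays --- is real if one insists on characterizing the level set. Your sketched continuity argument $s\to 1^+$ would work but is not carried out, and the annulus/maximum-principle argument needs to be rerun carefully because one boundary component now meets $\mathbb R$ at $0$. However, this extra effort is unnecessary for the proposition as stated: as the paper's proof shows, one only needs to exhibit a suitable region of definite sign and draw a curve inside it, which sidesteps the level-set classification and hence the gluing issue altogether.
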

\begin{proof}
The fact that $\mathcal H_\pm^s$ is increasing in $s$ is a simple consequence of \eqref{eq:limith}.

For $s=1$, we compute
\[\Re h(\zeta;1,0)=\log\frac{|\zeta-2i|}{|\zeta+2i|}+\Im\zeta.\]
As $\zeta\to\infty$, this behaves like $\Im\zeta+O(\zeta^{-2})$, and it is straightforward to check that there exists $R>0$ such that $\pm \Re h(\zeta;1,0)>0$ for $|\zeta|\geq R$ and $\pm\Im\gamma>0$.
Next, given $R>0$, we prove that there exists $t_0=t_0(R)>0$ such that 
$\pm \Re h(\zeta;1,0)>0$ for $|\zeta|< R$, provided that $|\arg\zeta|<t_0$ or $|\arg(-\zeta)|<t_0$, and $\pm\Im\zeta>0$.
For that, we observe that 
\[h'(\zeta;1,0)=\frac{4i}{\zeta^2+4}-i,\]
such that for $u\in \mathbb R$, $u\neq 0$,
\[\Re h(u;1,0)=0,\qquad\partial_v\left. \Re h(u+iv;1,0)\right|_{v=0}=-\partial_u\left. \Im h(u+iv;1,0)\right|_{v=0}>0,\]
by the Cauchy-Riemann conditions. Hence, for $t_0>0$ sufficiently small, we have the required inequalities.
With the values of $R, t_0>0$ at hand, we can now choose the eight-shaped in the region $\{|\zeta|\geq R\}\cup\{|\zeta|<R,\mbox {and } |\arg\zeta|<t_0\mbox{ or }|\arg(-\zeta)|<t_0\}$. From the cubic behavior of $h(\zeta;1,0)$ around $0$, it is clear that we can take  the curve such that it makes an angle $\pm\pi/3$ with the real line.

The case $s>1$ can be treated similarly, the only difference being that a region containing $0$ can be connected with the exterior of a large disk, such that we can choose $\tilde\gamma_+$ in a region of the form $\{|\zeta|\geq R\}\cup\{|\zeta|<R\mbox { and } 0<\Im\zeta<t_0\}$.
\end{proof}

\subsection{Transformation $U \mapsto T$}

The phase diagrams in Figure \ref{figure:zeroset}
indicate how we need to proceed with the asymptotic RH analysis for $\theta>0$ sufficiently small. In an ideal scenario, we would like to have a RH problem with jump matrices close to identity. To achieve this, we need to  have the jump contour $\gamma$ such that it lies in  the shaded part $\mathcal H_+^s$ in the upper half plane, and we need the jump contour $-i\delta+\mathbb R$ to be contained in $\mathcal H_-^s$. 
We are able to deform the contours since the jump matrices $J_U(\zeta)$ are analytic for $\zeta\in\mathbb C\setminus\{\mp 2 i,\mp 2 i\mp\frac{2i}{\theta}, \mp 2i\mp\frac{4i}{\theta},\cdots\}$, but one should pay attention to the fact that we cannot cross singularities while deforming contours.

\medskip

Let us first consider the case $\theta=0$.
Then, for $s>1$, by Proposition \ref{prop:signh}, we can indeed take $\gamma$ in the shaded region $\mathcal H_+^s$, and we can deform $\mathbb R$ to a curve in the white region $\mathcal H_-^s$ in the lower half plane. 
For $s=1$, we can keep $\gamma$ in the shaded region $\mathcal H_+^1$ only if we let it go through $0$. Similarly, we then deform the horizontal line $-i\delta+\mathbb R$ to a contour in the white region $\mathcal H_-^1$ in the lower half plane but passing through $0$.
For $s<1$, we are not in such an ideal scenario, as we cannot immediately make a convenient choice for the jump contours. We will then be forced to let $\gamma$ coincide with the real line on an interval $[-b,b]$, before going into the upper half plane. Similarly, we will deform the horizontal line $-i\delta+\mathbb R$ to a suitable curve lying strictly in the lower half plane, except on the interval $[-b,b]$.

\medskip

For $\theta>0$ sufficiently small, the transition between the two phases will not happen at $s=1$, but rather at $s=-\theta\psi(\theta)$, which is bigger than $1$ for $\theta>0$ small and which converges to $1$ as $\theta\to 0$.
In each of the three cases $s>-\theta\psi(\theta)$, $s=-\theta\psi(\theta)$, $s<-\theta\psi(\theta)$, we define a jump contour $\Sigma$ as the closure of $\Sigma^+\cup\Sigma^0\cup\Sigma^-$, where $\Sigma^+$ is the part of $\Sigma$ lying in the upper half plane, $\Sigma^-$ the part of $\Sigma$ lying in the lower half plane, and $\Sigma^0=(-b,b)$ the (possibly empty) part of $\Sigma$ lying on the real line.
Depending on the value of $s$, we choose the jump contours as follows, and as illustrated in Figure \ref{figure:lenses2}.
\begin{itemize}
\item For $s>-\theta\psi(\theta)$: $\Sigma^0$ is empty, $\Sigma^+=\gamma$ is a closed curve in $\mathcal H_+^{s}$, and $\Sigma^-$ is a curve connecting $e^{i(\pi+\delta)}\infty$ with $e^{-i\delta}\infty$ through $\mathcal H_-^{s}$, with $\delta>0$ sufficiently small. Note that for any $\epsilon>0$, we can take $\Sigma^\pm$ independent of $s$ for all $s>-\theta\psi(\theta)+\epsilon$, since the sets $\mathcal H_\pm^s$ are increasing in $s$.
\item For $s=-\theta\psi(\theta)$: $\Sigma^0$ is empty, $\Sigma^+$ is a curve in $\mathcal H_+^1\cup\{0\}$ closing at the point $0$, and $\Sigma^-$ consists of a curve connecting $e^{i(\pi+\delta)}\infty$ with $0$ and one connecting $0$ with $e^{-i\delta}\infty$ through $\mathcal H_-^1\cup\{0\}$.
\item For $0<s<-\theta\psi(\theta)$: $\Sigma^0=(-b,b)$ with $b=b(s,\theta)>0$  to be determined later, $\Sigma^+$
is a curve connecting $b$ with $-b$ in the upper half plane, and $\Sigma^-$ consists of a curve connecting $e^{i(\pi+\delta)}\infty$ with $-b$ and one connecting $b$ with $e^{-i\delta}\infty$ in the lower half plane. 
\end{itemize}

\begin{figure}[H]	\begin{center}
		\begin{tikzpicture}			\node at (-6,0) {};
			\fill (-6,0) circle (0.05cm);
			\node at (-6+0.15,-0.2) {$0$};
			\fill (-6,1.5) circle (0.05cm);
			\node at (-6.15,1.3) {$2i$};
			\draw[dashed,->-=1,black] (-6,-1.5) to [out=90, in=-90] (-6,3.2);
			\draw[dashed,->-=1,black] (-8.4,0) to [out=0, in=-180] (-6+2.4,0);
						\node at (-6+1.6,-0.6) {\small $\Sigma^- $};			
			\draw[-<-=0.6,black] (-6,-0.4)--($(-6,-0.4)+(180+15:1.5)$);
		\draw[->-=0.6,black] (-6,-0.4)--($(-6,-0.4)+(-15:1.5)$);
			\node at (-7.5,-0.6) {\small$\Sigma^-$};
			\node at (-6+0.3,2.6) {\small$\Sigma^+$};
				\draw[->-=0.6,black] ([shift=(0.360:1.0cm)]-6.4,0.6) arc (-50.180:360:1cm);

			\node at (0,0) {};
			\fill (0,0) circle (0.05cm);
			\node at (0.15,-0.2) {$0$};
			\fill (0,1.5) circle (0.05cm);
			\node at (0.15,1.3) {$2i$};
			\draw[dashed,->-=1,black] (0,-1.5) to [out=90, in=-90] (0,3.2);
			\draw[dashed,->-=1,black] (-2.4,0) to [out=0, in=-180] (2.4,0);
			\draw[->-=0.6,black] (0,0)--(0.72,0.61);
			\draw[->-=0.6,black] (-0.70,0.61)--(0,0);
			\node at (1.5,-0.6) {\small$\Sigma^-$};			
					\draw[-<-=0.6,black] (0,0)--($(0,0)+(180+20:1.5)$);
		\draw[->-=0.6,black] (0,0)--($(0,0)+(-20:1.5)$);
			\node at (-1.5,-0.6) {\small$\Sigma^-$};
			\node at (0.3,2.7) {\small$\Sigma^+$};
				
				\draw[->-=0.6,black] ([shift=(0.180:1.0cm)]-0.3,0.6) arc (-50.180:230:1.08cm);

							\node at (6,0) {};
			\fill (6,0) circle (0.05cm);
			\node at (6.15,-0.2) {$0$};
			\fill (6,1.5) circle (0.05cm);
			\node at (6.15,1.3) {$2i$};
			\draw[dashed,->-=1,black] (6,-1.5) to [out=90, in=-90] (6,3.2);
			\draw[dashed,->-=1,black] (6-2.4,0) to [out=0, in=-180] (8.4,0);
			\draw[->-=0.6,black] (6,0)--(7,0);
			\fill (7,0) circle (0.05cm);
			\node at (6.9,0.2) {\small $b$};
			\node at (6-0.95,0.2) {\small $-b$};
			\node at (7.5,-0.6) {\small$\Sigma^-$};			
			\draw[-<-=0.5,black] (6,0)--(5,0);
			\fill (5,0) circle (0.05cm);
		\draw[-<-=0.6,black] (5,0)--($(5,0)+(180+30:1.5)$);
		\draw[->-=0.6,black] (7,0)--($(7,0)+(-30:1.5)$);
			\node at (6-1.5,-0.6) {\small$\Sigma^-$};
			\node at (6.3,2.6) {\small$\Sigma^+$};
				\node at (6-0.15,0.2) {\small$\Sigma^{0}$};
				\draw[->-=0.6,black] ([shift=(0.180:1.0cm)]6,0) arc (-50.180:230:1.58cm);$$
\end{tikzpicture}
		\caption{The shape of the jump contour $\Sigma=\Sigma^+\cup\Sigma^0\cup\Sigma^-$ for the RH problem for $T$, in the cases $s>-\theta\psi(\theta)$ (left), $-s= -\theta\psi(\theta)$ (middle), and $0<s<-\theta\psi(\theta)$ (right).\label{figure:lenses2}
}
	\end{center}
\end{figure}
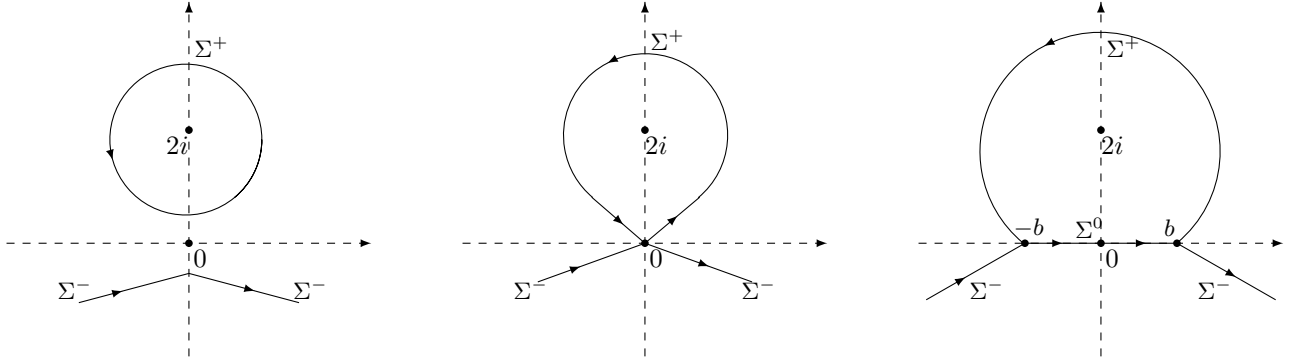

Let us assume, without loss of generality and for simplicity, that $\gamma$ is a small contour surrounding $2i$, such that it always lies in the interior of $\Sigma^+$.
In all three cases $s>-\theta\psi(\theta)$, $s= -\theta\psi(\theta)$, and $0<s<-\theta\psi(\theta)$, we define $T$ piecewise as follows: \begin{itemize}
\item in the region below $\Sigma^-$, it is the analytic continuation of $U$ from the region below $-i\delta+\mathbb R$;
\item in the region above $\Sigma^-$ and outside $\Sigma^+$, it is the analytic continuation of $U$ from the region above $-i\delta+\mathbb R$ and outside $\gamma$;
\item inside $\Sigma^-$, it is the analytic continuation of $U$ from the region inside $\gamma$.
\end{itemize}
In other words, we define
\be\label{def:Ttauneg}
T(\zeta)=\begin{cases}
U(\zeta)\begin{pmatrix}
				1 &  0 \\ e^{nh(\zeta;s,\theta)} & 1
			\end{pmatrix},&\mbox{for $\Im\zeta<-\delta$ and $\zeta$ above $\Sigma^-$,}\\
			U(\zeta)\begin{pmatrix}
				1 &  -e^{-nh(\zeta;s,\theta)}\\0 & 1
			\end{pmatrix},&\mbox{for $\zeta$ in between $\gamma$ and $\Sigma^+$,}\\
			U(\zeta)\begin{pmatrix}
				1 & 0\\ -e^{nh(\zeta;s,\theta)} & 1
			\end{pmatrix},&\mbox{for $\Im\zeta>-\delta$ and $\zeta$ below $\Sigma^-$,}\\
U(\zeta),&\mbox{elsewhere.}\end{cases}\ee

Then $T$ solves the following RH problem.
\subsubsection*{RH problem for $T$}
\begin{itemize}
	\item[(T1)] $T:\C \setminus \Sigma\to \C^{2 \times 2}$ is analytic.
	\item[(T2)] Across the contour $\Sigma$, $T$ satisfies the jump condition
	\begin{align*}
		T_+(\zeta) = T_-(\zeta)J_T(\zeta),
	\end{align*}
	where
	\begin{align*}
		J_T(\zeta) = \begin{cases}
			\begin{pmatrix}
				1 &  0 \\ e^{nh(\zeta;s,\theta)} & 1
			\end{pmatrix} &\text{ for $\zeta \in \Sigma^-$}, \\ 
			\begin{pmatrix}
				1 & -e^{-nh(\zeta;s,\theta)} \\ 0 & 1
			\end{pmatrix} &\text{ for $\zeta \in \Sigma^+ $}, \\
			\begin{pmatrix}
				1 & - e^{-n h(\zeta;s,\theta)} \\e^{n h(\zeta;s,\theta)}  & 0
			\end{pmatrix}  &\text{ for $\zeta \in \Sigma^0$}.
		\end{cases}
	\end{align*}
	\item[(T3)] $T(\zeta) = I -i \frac{Y_1(s,n,\theta)}{n\zeta} + \mathcal{O}(\zeta^{-2})$ as $\zeta \to \infty$.
\end{itemize}

In the case $s>-\theta\psi(\theta)+\epsilon$, as a consequence of Proposition \ref{prop:signh} and the construction of the jump contours, the jump matrices for $T$ converge to $I$ as $n\to\infty$ provided that $\theta>0$ is sufficiently small.

\begin{proposition}\label{prop:smallnormtauneg}
Let $\epsilon>0$. There exist $c,\theta_0>0$ such that 
\[J_T(\zeta;s,\theta)=I+O(e^{-c ns|\zeta|}),\qquad n\to\infty,\]
uniformly for $\zeta\in\Sigma$, $0\leq \theta<\theta_0$, $s>-\theta\psi(\theta)+\epsilon$.
\end{proposition}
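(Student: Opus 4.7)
The proof amounts to checking that the off-diagonal entries of $J_T(\zeta)-I$, namely $\pm e^{\pm nh(\zeta;s,\theta)}$, decay like $e^{-cns|\zeta|}$ on $\Sigma^{\mp}$, which reduces to the pointwise bound $\mp \Re h(\zeta;s,\theta)\geq c\,s\,|\zeta|$ on $\Sigma^{\pm}$ for some $c>0$ depending only on $\epsilon$. The plan is to first fix the contours uniformly in $(s,\theta)$, then control the compact portion by a compactness/continuity argument, and finally treat the unbounded tails of $\Sigma^-$ by Stirling's asymptotics.

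Since $-\theta\psi(\theta)\to 1$ as $\theta\to 0$, after shrinking $\theta_0$ the constraint $s>-\theta\psi(\theta)+\epsilon$ forces $s\geq s_0:=1+\epsilon/2$. By Proposition \ref{prop:signh} and the monotonicity of $\mathcal H_\pm^s$ in $s$, I would fix once and for all a closed curve $\Sigma^+\subset\mathcal H_+^{s_0}$ around $2i$, and a curve $\Sigma^-\subset\mathcal H_-^{s_0}$ connecting $e^{i(\pi+\delta)}\infty$ with $e^{-i\delta}\infty$; by monotonicity these lie in $\mathcal H_\pm^s$ for every $s\geq s_0$. On the compact curve $\Sigma^+$ and on $\Sigma^-\cap\{|\zeta|\leq R\}$ for fixed large $R$, compactness yields $\pm\Re h(\zeta;s_0,0)\geq c_1>0$. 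This bound is stable under continuity in $\theta$ (the poles of $\Gamma(\theta(1\pm i\zeta/2))$ at $\mp 2i(1+k/\theta)$ stay away from $\Sigma$ for $\theta$ small, and $h(\zeta;s,\theta)$ is jointly continuous on the relevant compact sets), so it survives with $c_1/2$ for $\theta<\theta_0$, and extends to all $s\geq s_0$ via the identity
\[
\Re h(\zeta;s,\theta)=\Re h(\zeta;s_0,\theta)+(s-s_0)\Im\zeta,
\]
since $\pm\Im\zeta\geq\eta_0>0$ on the respective compact pieces. Choosing $c$ small enough then produces $\mp\Re h\geq cs|\zeta|$ uniformly on these pieces.

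For the unbounded tails of $\Sigma^-$, I would apply Stirling's approximation---after using the reflection formula $\Gamma(z)\Gamma(1-z)=\pi/\sin(\pi z)$ in the region where $\theta(1-i\zeta/2)$ has negative real part---to obtain
\[
h(\zeta;s,\theta) = -is\zeta - i\theta\zeta\log(\theta\zeta/2) + i\theta\zeta + O(\log|\zeta|),\qquad|\zeta|\to\infty,
\]
uniformly in $\theta<\theta_0$, with an appropriate choice of branch. Writing $\zeta=re^{i\phi}$ with $\phi\in(-\pi+\delta,-\delta)$, this gives
\[
\Re h(\zeta;s,\theta) = r\sin\phi\,\bigl[\,s + \theta\log(\theta r/2) - \theta\,\bigr] + \theta r\phi\cos\phi + O(\log r).
\]
Since $\sin\phi\leq-\sin\delta<0$ while the bracket is positive for $r$ large (given $s\geq s_0>1$ and $\theta$ small), this yields $-\Re h(\zeta;s,\theta)\geq (s\sin\delta/2)|\zeta|$ for $|\zeta|\geq R$, provided $R$ is chosen sufficiently large. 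Combining with the compact estimate completes the proof.

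The main obstacle is the uniform tail analysis: one must carefully track the logarithmic branches when $\theta(1-i\zeta/2)$ traverses the left half plane, and verify that the Stirling error terms are controlled uniformly as $\theta\to 0$. In particular this requires keeping the poles of $\Gamma$ at $\mp 2i(1+k/\theta)$ at positive distance from $\Sigma$ (forcing $\delta$ to be chosen small enough but independently of $\theta<\theta_0$). Beyond this technical point, the remainder of the argument is a routine continuity/compactness analysis.
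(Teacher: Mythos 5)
Your approach mirrors the paper's: reduce to the pointwise sign bounds $\Re h\geq cs|\zeta|$ on $\Sigma^+$ and $-\Re h\geq cs|\zeta|$ on $\Sigma^-$, treat the bounded portion of $\Sigma$ by compactness, continuity in $\theta$, and linearity of $h$ in $s$, and treat the unbounded tails of $\Sigma^-$ by Stirling's formula. The paper dispatches the tail estimate with a one-line ``using the large $\zeta$ behavior of $h$, we can verify in a straightforward way,'' so what you are doing is spelling out that step.

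Two points in the Stirling analysis should be tightened. First, the reflection formula is not actually needed. The tails of $\Sigma^-$ escape to infinity at the fixed angles $\pi+\delta$ and $-\delta$, so there $\arg\bigl(\theta(1-i\zeta/2)\bigr)\to\pm(\pi/2+\delta)$, which lies well inside the sector $|\arg z|<\pi-\delta'$ where Stirling's expansion holds; having negative real part is not the same as being near the branch cut, and in fact the derivation of your asymptotic formula for $h$ uses no reflection. Second, your concluding inequality ignores the term $\theta r\phi\cos\phi$: near $\phi=-\pi+\delta$ it contributes $\approx+\pi\theta r$, which must be beaten by $-r\sin\phi\bigl[s+\theta\log(\theta r/2)-\theta\bigr]\approx r\,s\sin\delta$; this is precisely one of the places where the freedom to shrink $\theta_0$ (roughly $\theta_0\lesssim s_0\sin\delta/\pi$) is used, and it should be stated explicitly rather than left implicit in ``$\theta$ small.'' (A minor slip: your opening sentence reverses the signs --- on $\Sigma^\pm$ one needs $\pm\Re h\geq cs|\zeta|$, not $\mp\Re h\geq cs|\zeta|$ --- though everything after that has the correct orientation.)
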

\begin{proof}
Since $\Sigma^0$ is empty, it is sufficient to prove that there exists $c>0$ such that
\begin{align*}
&\Re h(\zeta;s,\theta)\geq cs|\zeta|,&\zeta\in\Sigma^+,\\
&\Re h(\zeta;s,\theta)\leq -cs|\zeta|,&\zeta\in\Sigma^-,
\end{align*}
for $0\leq \theta<\theta_0$, $s>-\theta\psi(\theta)+\epsilon$. 
Using the large $\zeta$ behavior of $h$, we can verify in a straightforward way that the second inequality holds for $|\zeta|>R$, $\zeta\in\Sigma^-$, for some sufficiently large $R>0$.
Next, we consider $|\zeta|\leq R$, and we can assume that this region contains the whole curve $\Sigma^+$. By construction, the above inequalities hold for $\theta=0$. Moreover, by continuity in $\theta\geq 0$, they continue to hold for $\theta$ sufficiently small.
\end{proof}
The above result implies that $\|J_T-I\|_1, \|J_T-I\|_2, \|J_T-I\|_\infty$ are $O(e^{-cns})$ as $n\to\infty$.
 It then follows from the general theory of RH problems, see \cite{Deift, DKMVZ1, DKMVZ2}, that the RH solution $T$ is also close to identity as $n\to\infty$.
\begin{corollary}\label{cor:tauneg}
Let $\epsilon>0$. There exist $c,\theta_0>0$ such that 
\[T(\zeta)=I+O\left(\frac{e^{-cns}}{|\zeta|+1}\right),\qquad n\to\infty,\]
uniformly for $\zeta\in\mathbb C\setminus\Sigma$, $0\leq \theta<\theta_0$, $s>-\theta\psi(\theta)+\epsilon$. In particular, $Y_1(s,n,\theta)=O(ne^{-cns})$ as $n\to\infty$, uniformly for $0\leq \theta<\theta_0$, $s>-\theta\psi(\theta)+\epsilon$.
\end{corollary}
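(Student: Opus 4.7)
The plan is to apply the standard small-norm RH theory, with Proposition \ref{prop:smallnormtauneg} as the key input. First I would observe that the pointwise bound $J_T(\zeta) - I = O(e^{-cns|\zeta|})$ on $\Sigma$ immediately yields
\begin{equation*}
\|J_T - I\|_{L^1(\Sigma)} + \|J_T - I\|_{L^2(\Sigma)} + \|J_T - I\|_{L^\infty(\Sigma)} = O(e^{-c_1 ns})
\end{equation*}
for any fixed $c_1 < c$, since $\Sigma^+$ is compact and the exponential factor in $|\zeta|$ is integrable along the unbounded arcs of $\Sigma^-$.

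Next, I would reformulate the RH problem for $T$ as the singular integral equation $(1 - \mathcal{C}_{J_T-I})\mu = I$ on $L^2(\Sigma)$, where $\mathcal{C}_{J_T-I}\mu := C_-(\mu(J_T-I))$ and $C_-$ denotes the minus-boundary value of the Cauchy transform on $\Sigma$. Since the operator $\mathcal{C}_{J_T-I}$ has $L^2 \to L^2$ operator norm bounded by $\|C_-\|\cdot\|J_T-I\|_\infty = O(e^{-c_1 ns})$, a Neumann series produces a unique solution $\mu = I + O(e^{-c_1 ns})$ in $L^2(\Sigma)$. The RH solution is then recovered as
\begin{equation*}
T(\zeta) = I + \frac{1}{2\pi i}\int_\Sigma \frac{\mu(w)(J_T(w) - I)}{w - \zeta}\,dw,
\end{equation*}
and Cauchy--Schwarz, combined with the $L^2$ decay in $|w|$ of $\mu(J_T-I)$, yields the pointwise estimate $T(\zeta) - I = O(e^{-c_1 ns}/(|\zeta|+1))$, first uniformly for $\zeta$ at positive distance from $\Sigma$ and then, by small local contour deformations, on all of $\mathbb{C} \setminus \Sigma$.

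For the bound on $Y_1$ I would note that the piecewise formula \eqref{def:Ttauneg} is trivial in the region of the upper half plane exterior to $\Sigma^+$, so that $T(\zeta) = U(\zeta) = Y(in\zeta)$ there. Letting $\zeta \to \infty$ inside this region and using (U3) shows that the $1/\zeta$ coefficient in the expansion of $T$ at infinity equals $-iY_1(s,n,\theta)/n$. Matching this with the uniform estimate above gives $Y_1(s,n,\theta) = O(n e^{-c_1 ns})$.

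The only point requiring care is uniformity in $(s,\theta)$. This is secured by the already-recorded fact that $\Sigma^\pm$ can be taken independently of $s$ throughout the range $s > -\theta\psi(\theta) + \epsilon$, together with the uniformity of the constant $c$ in Proposition \ref{prop:smallnormtauneg}; with a fixed contour, the operator norm of $C_-$ on $L^2(\Sigma)$ is a fixed absolute constant and the Neumann-series estimate goes through uniformly. Beyond this bookkeeping no real obstacle arises, so I do not expect any hard step in the argument.
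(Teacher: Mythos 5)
Your proposal is correct and takes essentially the same approach as the paper: the paper simply records the norm bounds $\|J_T-I\|_1,\|J_T-I\|_2,\|J_T-I\|_\infty=O(e^{-cns})$ and then invokes the standard small-norm Riemann--Hilbert theory from Deift et al., which is exactly the singular-integral/Neumann-series machinery you spell out. The only point worth flagging is that the pointwise bound $J_T-I=O(e^{-cns|\zeta|})$ only yields $O(e^{-c_1 ns})$ norms because $\Sigma$ stays a fixed positive distance from the origin when $s>-\theta\psi(\theta)+\epsilon$ (the contour does not pass through $0$ in this regime), a fact implicit in your argument but worth stating explicitly for the uniformity claim.
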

This completes the asymptotic RH analysis in the case $s>-\theta\psi(\theta)+\epsilon$.

For $s$ close to $-\theta\psi(\theta)$, the convergence of the jump matrices to $I$ breaks down for $\zeta$ near $0$, and we will be required to construct a local parametrix in that case.

For
 $s<-\theta\psi(\theta)-\epsilon$, the jumps are not small in a bigger neighborhood of $0$, and we will need to construct a $g$-function to remedy this, before proceeding with the construction of a global parametrix and local parametrices.

\subsection{Construction of the $g$-function for $0<s<-\theta\psi(\theta)$}
Our next transformation of the RH problem will rely on the construction of a $g$-function. As usual in asymptotic RH analysis, the $g$-function is the fundamental object that enables one to arrive, after the construction of parametrices, at a small-norm RH problem.

The $g$-function $g(\zeta)=g(\zeta;s,\theta)$ will have to satisfy the following conditions, for some $b=b(s,\theta)\geq 0$ that we will fix later on.
\subsubsection*{RH conditions for $g$}\begin{itemize}
\item[(g1)] $g:\mathbb C\setminus[-b,b]\to\mathbb C$ is analytic,
\item[(g2)] $g_+(\zeta)+g_-(\zeta)=h(\zeta;s,\theta)$ for $\zeta\in(-b,b)$,
\item[(g3)] $g(\zeta)=\frac{g_1}{\zeta}+O(\zeta^{-2})$ as $\zeta\to\infty$, for some $g_1=g_1(s,\theta)$.
\end{itemize}

It is not hard to construct such a function $g$: we let 
\begin{align}\label{def:g}
	g(\zeta)= a(\zeta)\int_{-b}^{b} \frac{h(u;s,\theta)}{a_+(u)(u-\zeta)} \frac{du}{2 \pi i},
\end{align}
with
\begin{align*}
a(\zeta)=\left((\zeta-b)(\zeta+b)\right)^{1/2},
\end{align*}
where the branch is chosen such that $a$ is analytic in $\mathbb C\setminus[-b,b]$ and such that $a(\zeta) \sim \zeta$ as $\zeta \to \infty$, such that
$a_+(u)=i\sqrt{b^2-u^2}$ for $u\in(-b,b)$. Then, $g$ is analytic except on $[-b,b]$ and, by Cauchy's theorem, $g$ satisfies the jump relation (g2).
We also see that
\[\lim_{\zeta\to\infty}g(\zeta)=\frac{1}{2\pi}\int_{-b}^b\frac{h(u;s,\theta)}{\sqrt{b^2-u^2}}du=0,\]
since $h$ is antisymmetric.
Moreover, we obtain
\begin{align*}
g(\zeta;s,\theta) &= \frac{g_1(s,\theta)}{\zeta} + O(\zeta^{-2}) 
\end{align*}
as $\zeta \to \infty$, with
\begin{align} 
	g_1(s,\theta)&= \int_{-b}^{b} \frac{h(u;s,\theta) u}{\sqrt{b^2-u^2}} \frac{du}{2 \pi }\nonumber\\& = \int_{-b}^{b}h'(u;s,\theta) \sqrt{b^2-u^2} \frac{du}{2 \pi}\nonumber\\
	&=\frac{1}{\pi}\int_{0}^{b}h'(u;s,\theta) \sqrt{b^2-u^2} du\nonumber
\\&=\frac{b^2}{\pi}\int_{0}^{1}h'(bv;s,\theta) \sqrt{1-v^2} dv	
	=-i f(s,\theta),\label{eq: formula g1}
\end{align}
with $f$ as in \eqref{eq: def f}.

If we would proceed with arbitrary $b$, this $g$-function would not lead to jump matrices close to identity, and the local behavior of $g(\zeta)$ as $\zeta\to \pm b$ would not be suitable for the construction of local parametrices later on. We anticipate to this problem by choosing the value of $b=b(s,\theta)\geq 0$ appropriately. We define it, as announced in the introduction, to be the unique non-negative solution of the equation \eqref{eq: defining equation for b} if $0<s\leq -\theta\psi(\theta)$. We now prove the existence and uniqueness of $b(s,\theta)$.

\begin{proposition} \label{lemma: choice of b}
	For any $\theta>0$ and $s\in(0,-\theta\psi(\theta))$, there exists a unique number $b=b(s,\theta)>0$ such that \eqref{eq: defining equation for b} holds.
	Moreover, $b(s,\theta)$ is differentiable and strictly decreasing in $s$. As $s\to -\theta\psi(\theta)$, \eqref{eq:bastau} holds.
\end{proposition}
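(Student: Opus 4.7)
Define
\begin{equation*}
\Phi(b,s,\theta)=\int_{0}^{1}\bigl(\theta\psi(\theta+iu\theta b/2)+\theta\psi(\theta-iu\theta b/2)+2s\bigr)\frac{du}{\pi\sqrt{1-u^{2}}},
\end{equation*}
so that equation \eqref{eq: defining equation for b} reads $\Phi(b,s,\theta)=0$. Note $\Phi$ is real-valued and smooth in $b\geq 0$ because the integrand equals $2\theta\,\Re\psi(\theta+iu\theta b/2)+2s$. The plan is to realize $b(s,\theta)$ as the unique positive root of $\Phi(\cdot,s,\theta)$ via an intermediate-value and monotonicity argument, then read off differentiability and the small-$b$ asymptotics.

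First I would compute the boundary values. Using $\int_{0}^{1}\frac{du}{\pi\sqrt{1-u^{2}}}=\tfrac12$, one has $\Phi(0,s,\theta)=\theta\psi(\theta)+s<0$ for $s\in(0,-\theta\psi(\theta))$. For $b\to\infty$, the asymptotics $\psi(z)\sim\log z$ for $|z|\to\infty$ with $\Re z>0$ give $\Re\psi(\theta+iu\theta b/2)\sim\log(u\theta b/2)$, and an application of dominated convergence on any fixed interval $u\in[\eta,1]$ together with a trivial bound near $u=0$ shows $\Phi(b,s,\theta)\to+\infty$. Hence existence of a positive root follows from the intermediate value theorem.

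For uniqueness and monotonicity in $b$, I would differentiate under the integral sign:
\begin{equation*}
\partial_{b}\Phi(b,s,\theta)=\int_{0}^{1}\frac{i\theta^{2}u}{2}\bigl(\psi'(\theta+iu\theta b/2)-\psi'(\theta-iu\theta b/2)\bigr)\frac{du}{\pi\sqrt{1-u^{2}}}=-\int_{0}^{1}\theta^{2}u\,\Im\psi'(\theta+iu\theta b/2)\frac{du}{\pi\sqrt{1-u^{2}}}.
\end{equation*}
Using the series identity $\psi'(z)=\sum_{k\geq 0}(z+k)^{-2}$, one finds $\Im\psi'(x+iy)=-\sum_{k\geq 0}\tfrac{2y(x+k)}{((x+k)^{2}+y^{2})^{2}}<0$ for $x,y>0$; consequently $\partial_{b}\Phi>0$ strictly for $b>0$. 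This gives uniqueness of the root, hence $b(s,\theta)$ is well defined. Because $\partial_{s}\Phi=1$, the implicit function theorem implies that $b$ is $C^{1}$ in $s$ and
\begin{equation*}
\partial_{s}b(s,\theta)=-\frac{\partial_{s}\Phi}{\partial_{b}\Phi}<0,
\end{equation*}
so $b(s,\theta)$ is strictly decreasing in $s$.

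Finally, for the asymptotic \eqref{eq:bastau} as $s\to-\theta\psi(\theta)$, one expects $b\to 0^{+}$ (it follows from monotonicity together with $\Phi(0,s,\theta)\to 0$). A Taylor expansion gives
\begin{equation*}
\theta\psi(\theta+iu\theta b/2)+\theta\psi(\theta-iu\theta b/2)=2\theta\psi(\theta)-\tfrac{u^{2}\theta^{3}b^{2}}{4}\psi''(\theta)+O(b^{4}),
\end{equation*}
and integrating using $\int_{0}^{1}\frac{u^{2}}{\pi\sqrt{1-u^{2}}}du=\tfrac14$ yields
\begin{equation*}
\Phi(b,s,\theta)=\bigl(s+\theta\psi(\theta)\bigr)-\frac{\theta^{3}\psi''(\theta)}{16}b^{2}+O(b^{4}).
\end{equation*}
Setting $\Phi=0$ and solving (both numerator and denominator are negative since $\psi''(\theta)<0$) gives $b^{2}\sim 16(s+\theta\psi(\theta))/(\theta^{3}\psi''(\theta))$, i.e.\ \eqref{eq:bastau}.

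The only step that requires a little care is the sign of $\partial_{b}\Phi$, but once one rewrites it via $\Im\psi'$ and uses the Hurwitz-type series representation, the inequality is immediate. The growth at $b\to\infty$ and the small-$b$ expansion both rest on standard properties of $\psi$ and may be done in a line; I would not expect any hidden obstacle.
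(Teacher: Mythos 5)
Your proof is correct and follows essentially the same route as the paper: the paper works with the auxiliary function $H(b)=\int_0^1\frac{ih'(bu;s,\theta)}{\sqrt{1-u^2}}\,du$, which is exactly $\tfrac{\pi}{2}\Phi(b,s,\theta)$, and likewise uses $H(0)<0$, $H(\infty)=+\infty$, strict monotonicity in $b$ via the positivity of $ih''$ (established earlier from the series $\psi'(z)=\sum_{k\geq 0}(z+k)^{-2}$, the same fact you invoke), the implicit function theorem for differentiability and monotonicity in $s$, and a small-$b$ Taylor expansion to get \eqref{eq:bastau}. The only cosmetic difference is that you work directly with the integral formula \eqref{eq: defining equation for b} rather than through the phase-function notation $h$, but every step matches.
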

\begin{proof}
Recall the expressions \eqref{def:h}--\eqref{eq:der2h} for $h$ and its first two derivatives.
Observe that we can re-write \eqref{eq: defining equation for b} as
\begin{equation}\label{eq:endpointH}H(b):=\int_{0}^1 \frac{i h'(bu;s,\theta)}{\sqrt{1-u^2}}du=0.\end{equation} We have $H(0)=\frac{\pi}{2}(s+\theta\psi(\theta))<0$, and 
$H(\infty)=+\infty$, since the digamma function behaves like a logarithm at infinity. Moreover, $H$ is strictly increasing in $b>0$, since
\[H'(b)=\int_{0}^1 \frac{i uh''(bu)}{\sqrt{1-u^2}}du>0.\]
We conclude that there is a unique solution $b(s,\theta)$ for any $\theta>0$, $s\in(0,-\theta\psi(\theta))$. Since $ih'(bu;s,\theta)$ increases with $s$, $H(b)$ increases with $s$, such that $b(s,\theta)$ decreases with $s$. Moreover, by the implicit function theorem, $b(s,\theta)$ is differentiable in $s\in(0,\theta\psi(\theta))$. 
Taylor expanding $H$ as $b\to 0$, we obtain
\begin{align*}H(b)&=H(0)+H'(0)b+\frac{H''(0)}{2}b^2+O(b^3)\\
&=\frac{\pi}{2}\left(s+\theta\psi(\theta)\right)+\frac{b^2}{2}\int_{0}^1 \frac{i u^2h'''(0)}{\sqrt{1-u^2}}du+O(b^3)\\
&=\frac{\pi}{2}\left(s+\theta\psi(\theta)\right)-\frac{\pi\theta^3}{32}\psi''(\theta) b^2
+O(b^3).\end{align*}
It then follows that \eqref{eq:bastau} holds.
\end{proof}

\subsection{Transformation $T\mapsto S$}
We now define $S$ as follows:
\begin{align}\label{def:S}
	S(\zeta) =  T(\zeta) \begin{pmatrix}
		e^{-n g(\zeta)} & 0 \\0& e^{n g(\zeta)}  
	\end{pmatrix},
\end{align}
where $g(\zeta)=g(\zeta;s,\theta)$ is the $g$-function which we just constructed if $\epsilon<s<-\theta\psi(\theta)$, and $g(\zeta)=0$ if $-\theta\psi(\theta)\leq s\leq -\theta\psi(\theta)+\epsilon $.
To state the jump relations for $S$ in a simplified manner, it is convenient to introduce
the function $q(\zeta)=q(\zeta;s,\theta)$ as
\be\label{def:q}
q(\zeta)=g(\zeta;s,\theta)-\frac{1}{2}h(\zeta;s,\theta).\ee
This function is defined for $\zeta\in\mathbb C\setminus[-b,b]$ such that $h(\zeta)$ is defined. In particular, it is defined for $\zeta\in\Sigma^+\cup\Sigma^-$.
Using \eqref{def:g} and deforming the integration contour $[-b,b]$ to a positively oriented loop $\mathcal C$ around $[-b,b]$ and $\zeta$, we obtain the following integral representation for $q$:
\be
\label{eq:intq}
q(\zeta)=-a(\zeta)\int_{\mathcal C} \frac{h(u;s,\theta)}{a(u)(u-\zeta)} \frac{du}{4 \pi i}.
\ee
For $\theta=0$, we can compute $g'(\zeta;s,0)$ and $q'(\zeta;s,0)$ explicitly. We then have
\begin{align}&\label{def:g0}
g'(\zeta;s,0)=\frac{h'(\zeta;s,0)}{2}+\frac{is\zeta a(\zeta)}{2(\zeta^2+4)},\\
&\label{def:q0}q'(\zeta;s,0)=\frac{is\zeta a(\zeta)}{2(\zeta^2+4)}.
\end{align}

Using the definitions of $S$ and $q$, we obtain the following RH conditions for $S$.

\subsubsection*{RH problem for $S$}
\begin{itemize}
	\item[(S1)] $S:\C \setminus \Sigma \to \C^{2 \times 2}$ is analytic.
	\item[(S2)] Across the contour $\Sigma$, $S$ satisfies the jump condition
	\begin{align*}
		S_+(\zeta) = S_-(\zeta)J_S(\zeta),
	\end{align*}
	where
	\begin{align}\label{def:JS}
		J_S(\zeta) = \begin{cases}
			\begin{pmatrix}
				1 &  0 \\ e^{-2n q(\zeta)} & 1
			\end{pmatrix} &\text{ for $\zeta \in \Sigma^-$}, \\ 
			\begin{pmatrix}
				1 & -e^{2n q(\zeta)} \\ 0 & 1
			\end{pmatrix} &\text{ for $\zeta\in \Sigma^+ $}, \\
			\begin{pmatrix}
				e^{-n(q_+(\zeta)-q_-(\zeta))} & - 1\\ 1 & 0
			\end{pmatrix}  &\text{ for $\zeta\in \Sigma^0$}.
		\end{cases}
	\end{align}
	\item[(S3)] $S(\zeta) = I + \frac{-iY_1(s,n,\theta)-n^2g_1(s,\theta) \sigma_3}{n\zeta} + \mathcal{O}(\zeta^{-2})$ as $\zeta \to \infty$.
\end{itemize}
Recall that $\Sigma^0=(-b,b)$ for $\epsilon\leq s <-\theta\psi(\theta)$, while $\Sigma^0$ is empty for $-\theta\psi(\theta)\leq s \leq -\theta\psi(\theta)+\epsilon$.

In analogy with the regions $\mathcal H_\pm^s$, we define $\mathcal G_+^s$ for the subset of the upper half plane containing all $\zeta$ such that $\Re q(\zeta;s,0)<0$, and $\mathcal G_-^s=\overline{\mathcal G_+^s}$ for the subset of the lower half plane on which $\Re q(\zeta;s,0)>0$.
Let $\epsilon>0$. We now claim that for any $\epsilon\leq s<-\theta\psi(\theta)-\epsilon$, there exist $\theta_0>0$ and curves $\Sigma^\pm$ in $\mathcal G_\pm^s$ such that $\Sigma^+$ is a curve connecting $b$ with $-b$, and where $\Sigma^-$ consists of a curve connecting $e^{i(\pi+\delta)}\infty$ with $-b$ and a curve connecting $b$ with $e^{-i\delta}\infty$, for some small $\delta>0$. Moreover, we can take $\Sigma^\pm$ such that they make an angle $\pm\pi/3$ with the real line at $\pm b$. In other words, we can take $\Sigma^\pm$ as illustrated in Figure \ref{figure:lenses2} on the right.

To prove the claim, we proceed similarly as in the proof of Proposition \ref{prop:signh}. First, from the behavior of $g$ and $h$ as $\zeta\to\infty$, it follows that $\{\zeta\in\mathbb C: |\zeta|\geq R, \pm\Im \zeta>0\}$ is a subset of $\mathcal G_\pm^s$ for sufficiently large $R>0$. Moreover, from \eqref{def:q0}, it follows that $-iq'(\zeta;s,0)>0$ for $\zeta\in\mathbb R\setminus[-b,b]$. Using the Cauchy-Riemann equations, we can then again show that small sectors $0<\arg(\zeta-b)<\delta$ and $\pi-\delta<\arg(\zeta+b)<\pi$ belong to $\mathcal G_+^s$. Similarly, small sectors
 $-\delta<\arg(\zeta-b)<0$ and $\pi<\arg(\zeta+b)<\pi+\delta$ belong to $\mathcal G_-^s$.

By construction, we then obtain the following result.
\begin{proposition}\label{prop:smallnormtaupos}
Let $\epsilon >0$. There exist $\eta,c,\theta_0>0$ such that 
\begin{align*}&J_S(\zeta)=\begin{cases}
I+O(e^{-c n|\zeta|}),&\zeta\in\Sigma^\pm, \ |\zeta\pm b|>\eta, \\
\begin{pmatrix}0&-1\\1&0\end{pmatrix}+O(e^{-c n}),&\zeta\in\Sigma^0, \ |\zeta\pm b|>\eta,\end{cases}
\end{align*}
as $n\to\infty$, uniformly in $\zeta$ and in $0\leq \theta<\theta_0$, $\epsilon\leq s<-\theta\psi(\theta)-\epsilon$.
\end{proposition}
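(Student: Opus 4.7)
The plan is to reduce the proposition to three sign inequalities for $\Re q(\zeta;s,\theta)$ on the three pieces of $\Sigma$: namely, $\Re q(\zeta;s,\theta)\leq -c|\zeta|$ for $\zeta\in\Sigma^+$ with $|\zeta\pm b|\geq\eta$; $\Re q(\zeta;s,\theta)\geq c|\zeta|$ for $\zeta\in\Sigma^-$ with $|\zeta\pm b|\geq\eta$; and $\Re(q_+(\zeta;s,\theta)-q_-(\zeta;s,\theta))\geq c$ for $\zeta\in\Sigma^0$ with $|\zeta\pm b|\geq\eta$. Inserting these into \eqref{def:JS}, and using the boundedness of $\Sigma^+$ together with the linear growth of $|\Re q|$ along $\Sigma^-$, yields precisely the claimed asymptotics for $J_S$.

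I would first establish the inequalities at $\theta=0$. The paragraph preceding the proposition already shows that $\Sigma^\pm$ can be placed in $\mathcal G_\pm^s$ (making angles $\pm\pi/3$ with the real axis at $\pm b$), so $\mp\Re q(\zeta;s,0)>0$ strictly on $\Sigma^\pm\setminus\{\pm b\}$. On the compact portion of $\Sigma^\pm$ at distance $\geq\eta$ from $\pm b$ and within a large disk, continuity and compactness upgrade this to a uniform positive lower bound. For the unbounded tails of $\Sigma^-$, I would use the explicit formula $h(\zeta;s,0)=\log\frac{\zeta-2i}{\zeta+2i}-is\zeta$ and $g(\zeta;s,0)=O(\zeta^{-1})$ to obtain $\Re q(\zeta;s,0)=-\tfrac{s}{2}\Im\zeta+O(|\zeta|^{-1})$ as $|\zeta|\to\infty$; since the tails of $\Sigma^-$ satisfy $\Im\zeta\leq -c_0|\zeta|$ for some $c_0>0$, we get $\Re q(\zeta;s,0)\geq c|\zeta|$ for $|\zeta|$ large. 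On $\Sigma^0=(-b,b)$, the jump $q_++q_-=0$ (obtained from $g_++g_-=h$ together with $q=g-h/2$) gives $q_+(u)-q_-(u)=2q_+(u)$. Formula \eqref{def:q0} then yields $q_+'(u;s,0)=-\tfrac{su\sqrt{b^2-u^2}}{2(u^2+4)}$; combined with the endpoint vanishing $q_+(\pm b;s,0)=0$ and the evenness of $q_+$ in $u$ (inherited from the oddness of $g$ and $h$), this forces $q_+(u;s,0)>0$ strictly on $(-b,b)$, hence $\geq c>0$ on compact subsets of $(-b,b)$.

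I would then promote these estimates to small $\theta>0$ by a continuity argument. The endpoint $b=b(s,\theta)$ is smooth in $(s,\theta)$ by Proposition~\ref{lemma: choice of b}, the contours $\Sigma^\pm$ can be chosen independently of $\theta$ up to a smooth deformation near $\pm b$, and $q(\zeta;s,\theta)$ depends continuously on $\theta$, locally uniformly in $\zeta$, via the integral representation \eqref{eq:intq}. Since $s$ ranges over the compact set $[\epsilon,-\theta\psi(\theta)-\epsilon]$ (bounded away from both $0$ and the critical value by construction, and with $-\theta\psi(\theta)$ bounded below by $1-O(\theta)$), the strict inequalities at $\theta=0$ persist uniformly for $0\leq\theta<\theta_0$ on any bounded portion of the contour. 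The tail estimate along $\Sigma^-$ is likewise stable, since Stirling's expansion gives $h(\zeta;s,\theta)=-is\zeta+O(\log|\zeta|)$ as $|\zeta|\to\infty$, with implicit constants depending continuously on $\theta$, so the linear growth $\Re q(\zeta;s,\theta)\geq c|\zeta|$ survives for $\theta$ small.

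The main obstacle I anticipate is uniformity as $s\uparrow -\theta\psi(\theta)$: the positive lower bound on $\Sigma^0$ degrades because $b(s,\theta)\to 0$ and $q_+$ is forced to vanish at both endpoints with growing cancellation. This is precisely why the proposition restricts to $s\leq -\theta\psi(\theta)-\epsilon$, keeping $b$ bounded away from $0$ by Proposition~\ref{lemma: choice of b}; the transitional regime is treated separately via the Painlev\'e~II parametrix of Section~\ref{section:RH3}.
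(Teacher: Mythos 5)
Your proposal follows essentially the same route as the paper: reduce to the three sign inequalities for $\Re q$ and $q_+-q_-$ at $\theta=0$ (using the construction of the contours in $\mathcal G_\pm^s$, formula \eqref{def:q0}, and $q(\pm b;s,0)=0$), then propagate to small $\theta>0$ by continuity. The paper's proof is a two-sentence version of yours; you correctly fill in the details, including the nontrivial observation that $q_+$ is even on $(-b,b)$ (from $q$ odd and $q_++q_-=0$) together with the sign of $q_+'$, forcing $q_+>0$ strictly on $(-b,b)$.

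One small technical correction: your claim that Stirling gives $h(\zeta;s,\theta)=-is\zeta+O(\log|\zeta|)$ as $|\zeta|\to\infty$ is only correct at $\theta=0$. For $\theta>0$ fixed, $h'(\zeta;s,\theta)=-i\theta\log(\theta\zeta/2)-is+O(1)$, so the Gamma factors contribute a term of order $\zeta\log|\zeta|$ to $h$, not $\log|\zeta|$. This does not break the argument, however: on the tails of $\Sigma^-$ at argument $-\delta$ (resp. $\pi+\delta$), the real part of this extra $\zeta\log\zeta$ contribution is $\sim -\theta|\zeta|\sin\delta\cdot\log(\theta|\zeta|/2)$, which is negative and only reinforces the linear lower bound $\Re q\geq c|\zeta|$. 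So your conclusion stands; just the intermediate asymptotic formula needs the slightly coarser (but still adequate) form, or you can argue directly that the Gamma contribution has the favourable sign on the tails.
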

\begin{proof}
We first observe that
\begin{align*}
&(q_+-q_-)(\zeta;s,0)\geq c,&\zeta\in\Sigma^0,\ |\zeta\pm b|>\eta,\\
&\Re q(\zeta;s,0)\leq -c|\zeta|,&\zeta\in\Sigma^+,\ |\zeta \pm b|>\eta,\\
&\Re q(\zeta;s,0)\geq c|\zeta|,&\zeta\in\Sigma^-,\ |\zeta\pm b|>\eta,
\end{align*}
for $\epsilon\leq s<-\theta\psi(\theta)-\epsilon$, by construction of the contours, by \eqref{def:q0}, and by the fact that $q(\pm b;s,0)=0$. 
By continuity in $\theta\geq 0$, these inequalities still hold for $\theta$ sufficiently small.
\end{proof}

\section{Construction of parametrices for $\epsilon\leq s< -\theta\psi(\theta)-\epsilon$}
\label{section:RH2}
\subsection{The global parametrix}

We define $P^{\infty}$ as the solution of the formal large $n$ limit of the RH problem for $S$, obtained after ignoring exponentially small jumps and small neighborhoods of $\pm b$.

\subsubsection*{RH problem for $P^{\infty}$}
\begin{itemize}
	\item[($P^\infty$1)] $P^\infty:\C \setminus [-b,b] \to \C^{2 \times 2}$ is analytic.
	\item[($P^\infty$2)] On $(-b,b)$, $P^\infty$ satisfies the jump condition
	\begin{align*}
		P^\infty_+(\zeta) = P^\infty_-(\zeta)\begin{pmatrix}
0 & -1 \\ 1 & 0
\end{pmatrix}.
	\end{align*}
	\item[($P^\infty$3)] $P^\infty(\zeta) = I + \frac{P_1^\infty}{\zeta} + \mathcal{O}(\zeta^{-2})$ as $\zeta \to \infty$.
\end{itemize}

The solution to this RH problem is explicit and it is given by (see \cite[Chapter 7]{Deift} for a similar construction)
\begin{equation}\label{def:Pinf}
P^{\infty}(\zeta)=P^\infty(\zeta;b) = \frac{1}{2} \begin{pmatrix}\gamma(\zeta)+\gamma(\zeta)^{-1} & i \left(\gamma(\zeta)-\gamma(\zeta)^{-1}\right) \\-i \left(\gamma(\zeta)-\gamma(\zeta)^{-1}\right) & \gamma(\zeta)+\gamma(\zeta)^{-1}\end{pmatrix},
\end{equation}
where 
\begin{equation}
\gamma(\zeta)=\gamma(\zeta;b)=\left(\frac{\zeta-b}{\zeta + b}\right)^{1/4}
\end{equation}
with the fourth root defined and analytic on $\C \setminus [-b,b]$, with branch cut on $(-b,b)$, and such that $\gamma(\zeta)\to 1$ as $\zeta\to\infty$. 
Equivalently,
\begin{equation}\label{definition: Pinfty}
    P^{\infty}(\zeta)=\begin{pmatrix}1&-1\\-i&-i\end{pmatrix}
    \left(\frac{\zeta-b}{\zeta+b}\right)^{\sigma_3 /4}
    \begin{pmatrix}1&-1\\-i&-i\end{pmatrix}^{-1},\qquad\mbox{for $\zeta\in\mathbb C\setminus[-b,b]$.}
\end{equation}
We easily compute
\be
\label{eq:P1infty}
P_1^\infty=\begin{pmatrix}0&-\frac{i}{2}b\\\frac{i}{2}b&0\end{pmatrix}.
\ee

\subsection{Local Airy parametrices}

In small disks $U_{\pm b}$ of radius $\eta>0$ around the endpoints $-b$ and $b$, we need to construct local parametrices. These local parametrices need to satisfy the following conditions.
\subsubsection*{RH problem for $P^{(\pm b)}$}
\begin{itemize}
\item[(P1)] $P^{(\pm b)}:U_{\pm b}\setminus\Sigma\to\mathbb C^{2\times 2}$ is analytic.
\item[(P2)] $P^{(\pm b)}_+(\zeta)=P^{(\pm b)}_-(\zeta)J_S(\zeta)$ for $\zeta\in U_{\pm b}\cap\Sigma$.
\item[(P3)] On $\partial U_{\pm b}$, we have the (uniform in $\zeta$ and in $\epsilon\leq s\leq -\theta\psi(\theta)-\epsilon$) matching condition $P^{(\pm b)}(\zeta)=\left(I+O\left(\frac{1}{n}\right)\right)P^{(\infty)}(\zeta)$ as $n\to\infty$.
\end{itemize}

For the construction of parametrices, it is important to understand the $\zeta\to\pm b$ behavior of the function $q(\zeta;s,\theta)$ appearing in the jump matrices \eqref{def:JS}.	Using the fact that $q/a$ is an even function, we compute
\begin{align*}
\frac{q(\zeta)}{a(\zeta)} &= \int_{\mathcal C} \frac{h(u;s,\theta)}{a(u)(\zeta-u)} \frac{du}{4 \pi i} \\
 &= \int_{\mathcal C} \frac{h(u;s,\theta)u}{a(u)(\zeta^2-u^2)} \frac{du}{4 \pi i} \\
 &= \int_{\mathcal C} \frac{h(u;s,\theta)u}{a(u)\left((\zeta^2-b^2)-(u^2-b^2)\right)} \frac{du}{4 \pi i} \\
 &= -\int_{\mathcal C} \frac{h(u;s,\theta)u}{a(u)\left(u^2-b^2\right)}
 \left(1 + \frac{\zeta^2-b^2}{u^2-b^2} \right)  \frac{du}{4 \pi i}  + R_b(\zeta),
\end{align*}
where $R_b(\zeta)$ is an error term which is $O\left(a(\zeta)^4\right)$ as $\zeta\to \pm b$. The term
\begin{align*}
	 \int_{\mathcal C} \frac{h(u;s,\theta)u}{a(u)\left(u^2-b^2\right)}
	 \frac{du}{4 \pi i}
\end{align*}
vanishes: one easily sees this by integrating by parts \eqref{eq:endpointH}. Hence we have
\be\label{eq:qlocal}
q(\zeta)\sim ca(\zeta)^3,\qquad \zeta\to \pm b.\ee
Using this local behavior of $q$ like a $3/2$-power, we can construct local parametrices in terms of a model RH problem whose solution can be built out of the Airy function and its derivative. Such local Airy parametrices were constructed originally in \cite{BaikDeiftJohansson, Deift, DKMVZ1, DKMVZ2}, and their construction is by now standard.
We omit the details of this construction, and refer the interested reader to \cite[Section 3.5]{CGS17}, where Airy parametrices were constructed in a situation almost identical to ours. For our purposes, we will only need the existence of such local parametrices $P^{(\pm b)}$ satisfying conditions (P1)--(P3), the precise form of the parametrices is unimportant.

\subsection{The transformation $S \mapsto R$}

We define
\begin{align*}
R(\zeta) =  \begin{cases}
S(\zeta)P^{(\pm b)}(\zeta)^{-1},\quad &\text{ if $\zeta \in U_{\pm b}$}, \\
S(\zeta)P^\infty(\zeta)^{-1}, \quad &\text{ otherwise}.
\end{cases}
\end{align*}

\begin{figure}[H]\label{figure:SigmaR}
	\begin{center}
		\begin{tikzpicture}
							\node at (0,0) {};
		
			\node at (0.15,-0.2) {$0$};
			\fill (0,1.5) circle (0.05cm);
			\node at (0.15,1.3) {$2i$};
			\draw[dashed,->-=1,black] (0,-1.5) to [out=90, in=-90] (0,3.2);
			\draw[dashed,->-=1,black] (-2.4,0) to [out=0, in=-180] (2.4,0);
			\draw[->-=0.6,black] (-0.5,0)--(0.5,0);
			\fill (1,0) circle (0.05cm);
						\draw[-<-=0.5,black] (-1,0) circle (0.5cm);
									\draw[-<-=0.5,black] (1,0) circle (0.5cm);
			\node at (0.9,0.2) {\small $b$};
			\node at (-0.95,0.2) {\small $-b$};
			\fill (-1,0) circle (0.05cm);
		\draw[-<-=0.6,black] (-1.42,-0.25)--($(-1.42,-0.25)+(180+30:1.5)$);
		\draw[->-=0.6,black] (1.42,-0.25)--($(1.42,-0.25)+(-30:1.5)$);
				\draw[->-=0.6,black] ([shift=(10.100:1.47cm)]0,0) arc (-47.220:227:2.1cm);
\end{tikzpicture}
		\caption{Jump contour $\Sigma_R$ for $R$.}
	\end{center}
\end{figure}
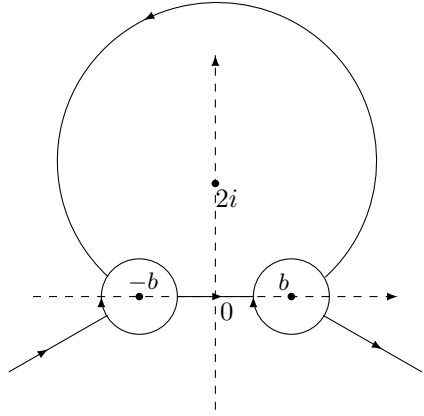

Then $R$ solves the following RH problem.

\subsubsection*{RH problem for $R$}
\begin{itemize}
	\item[(R1)] $R:\C \setminus \Sigma_R \to \C^{2 \times 2}$ is analytic, where $\Sigma_R=\left(\Sigma^+\cup\Sigma^-\cup\Sigma^0\cup\partial U_{b}\cup\partial U_{-b}\right)\setminus(U_b\cup U_{-b})$.
	\item[(R2)] Across the contour $\Sigma_R$, $R$ satisfies the jump condition
	\begin{align*}
		R_+(\zeta) = R_-(\zeta)J_R(\zeta),
	\end{align*}
	where
	\begin{align*}
		J_R(\zeta) = \begin{cases}
			P^\infty(\zeta)J_S(\zeta)P^\infty(\zeta)^{-1}  &\text{ for $\zeta \in \left(\Sigma^+\cup\Sigma^-\right)\setminus(U_b\cup U_{-b})$}, \\
			P^\infty(\zeta)J_S(\zeta)\begin{pmatrix}0&1\\-1&0\end{pmatrix}P^\infty(\zeta)^{-1}  &\text{ for $\zeta \in \Sigma^0\setminus(U_b\cup U_{-b})$}, \\
			P^{(\pm b)}(\zeta)P^\infty(\zeta)^{-1}  &\text{ for $\zeta \in \partial U_{\pm b} $}.
		\end{cases}
	\end{align*}
	\item[(R3)] $R(\zeta) = I + \frac{R_1(n,s,\theta)}{\zeta} + \mathcal{O}(\zeta^{-2})$ as $\zeta \to \infty$  with
	\begin{align}\label{eq:R1pos}
		R_1(s,n,\theta) = S_1(s,n,\theta)-P_1^\infty = \frac{-iY_1(s,n,\theta)-n^2g_1(s,\theta) \sigma_3}{n}-P_1^\infty.
	\end{align}
\end{itemize}

By construction, the jump matrix for $R$ is small as $n\to\infty$
\begin{proposition}\label{prop:smallnormtaupos2}
Let $\epsilon>0$. There exist $c,\theta_0>0$ such that 
\[J_R(\zeta)=I+O\left(\frac{1}{n(|\zeta|^2+1)}\right),\qquad n\to\infty,\]
uniformly for $\zeta\in\Sigma_R$, $0\leq \theta<\theta_0$, $\epsilon\leq s<-\theta\psi(\theta)-\epsilon$.
\end{proposition}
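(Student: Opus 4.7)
The plan is to verify the estimate separately on each of the three types of components of $\Sigma_R$, namely the parts of the lens contours $(\Sigma^+\cup\Sigma^-)\setminus(U_b\cup U_{-b})$, the real segment $\Sigma^0\setminus(U_b\cup U_{-b})$, and the parametrix boundaries $\partial U_{\pm b}$, and then combine them. The essential input is that all prefactors involving $P^{\infty}$ are uniformly bounded on these components and that $b(s,\theta)$ stays in a compact subset of $(0,\infty)$.

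First I would establish the uniform boundedness of $P^{\infty}$ and $(P^{\infty})^{-1}$ on $\Sigma_R$. Since $P^{\infty}(\zeta;b)$ is analytic in $\mathbb{C}\setminus[-b,b]$ and normalized to $I+O(1/\zeta)$ at infinity, the only possible singular behavior is at $\pm b$. By Proposition \ref{lemma: choice of b}, $b(s,\theta)$ depends continuously on $(s,\theta)$ and stays bounded away from $0$ and $\infty$ for $\epsilon\leq s\leq -\theta\psi(\theta)-\epsilon$ and $0\leq\theta\leq\theta_0$. Consequently, after possibly shrinking $\eta>0$, the contour $\Sigma_R$ stays at a positive distance from $[-b,b]$ uniformly in $(s,\theta)$, and both $P^{\infty}$ and $(P^{\infty})^{-1}$ are uniformly bounded on $\Sigma_R$.

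Next I would bound $J_R-I$ on the three parts. On $(\Sigma^+\cup\Sigma^-)\setminus(U_b\cup U_{-b})$, Proposition \ref{prop:smallnormtaupos} gives $J_S(\zeta)=I+O(e^{-cn|\zeta|})$, and conjugation by the uniformly bounded $P^{\infty}$ preserves this, so $J_R(\zeta)-I=O(e^{-cn|\zeta|})$. Since $n|\zeta|^2 e^{-cn|\zeta|}$ is bounded uniformly in $n\geq 1$ and $|\zeta|\geq 0$, this is majorized by $C/(n(|\zeta|^2+1))$. On $\Sigma^0\setminus(U_b\cup U_{-b})$, Proposition \ref{prop:smallnormtaupos} gives $J_S(\zeta)=\begin{pmatrix}0&-1\\1&0\end{pmatrix}+O(e^{-cn})$; using the algebraic identity $\begin{pmatrix}0&-1\\1&0\end{pmatrix}\begin{pmatrix}0&1\\-1&0\end{pmatrix}=I$, one gets $J_S\begin{pmatrix}0&1\\-1&0\end{pmatrix}=I+O(e^{-cn})$, hence $J_R-I=O(e^{-cn})$, which again dominates the required bound since $|\zeta|$ is bounded on $\Sigma^0$. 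Finally, on each of the circles $\partial U_{\pm b}$, the matching condition (P3) of the Airy parametrix yields $P^{(\pm b)}(P^{\infty})^{-1}=I+O(1/n)$ uniformly in $\zeta$ and in $(s,\theta)$; since $|\zeta|$ is bounded on these fixed-radius circles, this is $O(1/(n(|\zeta|^2+1)))$.

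There is no genuine obstacle here: the proposition is essentially a bookkeeping argument combining the pointwise bounds already proved in Section \ref{section:RH} with the matching properties used to design the parametrices. The only point that requires care is the uniformity of all constants in $s$ and $\theta$. This uniformity is inherited from Proposition \ref{lemma: choice of b} (giving continuous, compactly supported $b(s,\theta)$), from Proposition \ref{prop:smallnormtaupos} (whose decay constants were already obtained uniformly by a continuity-in-$\theta$ argument off of $\theta=0$), and from the standard Airy parametrix construction whose matching error $O(1/n)$ is uniform as long as the opening angles of $\Sigma^\pm$ at $\pm b$ and the local behavior $q(\zeta)\sim c\, a(\zeta)^3$ in \eqref{eq:qlocal} remain under uniform control, which they do in the parameter range considered.
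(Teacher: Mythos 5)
Your proof is correct and follows essentially the same route as the paper's: bound the jump on $\partial U_{\pm b}$ using the matching condition, and elsewhere use Proposition \ref{prop:smallnormtaupos} together with the uniform boundedness of $P^\infty$ (justified via the fact that $b(s,\theta)$ stays in a compact of $(0,\infty)$ in this parameter range). The paper's proof is terser — it does not spell out the $\Sigma^0$ case, where the extra $\bigl(\begin{smallmatrix}0&1\\-1&0\end{smallmatrix}\bigr)$ factor in $J_R$ must cancel the leading block of $J_S$, whereas you make this explicit — and your argument that $e^{-cn|\zeta|}=O(1/(n(|\zeta|^2+1)))$ on $\Sigma^\pm\setminus(U_b\cup U_{-b})$ implicitly uses that $|\zeta|$ is bounded below there (it is, since the contour exits the disks $U_{\pm b}$ and $b$ stays bounded away from $0$); stating that lower bound explicitly would close the only small gap, since $n|\zeta|^2e^{-cn|\zeta|}$ bounded alone does not majorize $n(|\zeta|^2+1)e^{-cn|\zeta|}$ near $|\zeta|=0$.
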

\begin{proof}
On $\partial U_{\pm b}$, the jump is $I+O(1/n)$ because of the matching condition (Pb3). Elsewhere, the jump matrix for $J_S$ is $I$ plus exponentially small terms which moreover decay as $\zeta\to\infty$, by Proposition \ref{prop:smallnormtaupos} and the boundedness of $P^\infty(\zeta)$.
\end{proof}

As in the case $s>-\theta\psi(\theta)+\epsilon$, it then follows from the general theory of RH problems that the RH solution $R$ is also close to identity as $n\to\infty$.
\begin{corollary}\label{cor:taupos}
Let $\epsilon>0$. There exist $\theta_0>0$ such that 
\[R(\zeta;s,n,\theta)=I+O\left(\frac{1}{n(|\zeta|+1)}\right),\qquad n\to\infty,\]
uniformly for $\zeta\in\mathbb C\setminus\Sigma_R$, $0\leq \theta<\theta_0$, $\epsilon\leq s<-\theta\psi(\theta)-\epsilon$. In particular, $R_1(s,n,\theta)=O(1/n)$ as $n\to\infty$,
uniformly for $0\leq \theta<\theta_0$, $\epsilon\leq s<-\theta\psi(\theta)-\epsilon$.
\end{corollary}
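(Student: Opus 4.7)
The proof is a standard application of small-norm RH theory, see e.g.\ \cite{Deift, DKMVZ1, DKMVZ2}. The plan is as follows. By Proposition \ref{prop:smallnormtaupos2}, we have $\|J_R-I\|_{L^p(\Sigma_R)} = O(1/n)$ for $p=1,2,\infty$, uniformly for $0\leq\theta<\theta_0$ and $\epsilon\leq s<-\theta\psi(\theta)-\epsilon$. The decay factor $1/(|\zeta|^2+1)$ on the unbounded portions of $\Sigma_R$ is what ensures the $L^1$ estimate, while integrability at infinity is used crucially in the singular integral formulation below.

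The RH problem for $R$ is equivalent to the singular integral equation
\begin{equation*}
R_-(\zeta) = I + \mathcal{C}_-\!\left[R_-(J_R-I)\right](\zeta),\qquad \zeta\in\Sigma_R,
\end{equation*}
where $\mathcal{C}_-$ is the boundary value on the minus side of the Cauchy operator on $L^2(\Sigma_R)$. Since $\mathcal{C}_-$ is a bounded operator and the multiplication operator $f\mapsto f(J_R-I)$ has operator norm $O(1/n)$ on $L^2(\Sigma_R)$, the operator $I-\mathcal{C}_{J_R-I}$, where $\mathcal{C}_{J_R-I}[f]=\mathcal{C}_-[f(J_R-I)]$, is invertible on $L^2(\Sigma_R)$ for $n$ large by Neumann series, uniformly in $s$ and $\theta$ in the stated ranges. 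Hence $R_--I$ exists in $L^2(\Sigma_R)$ and satisfies $\|R_--I\|_{L^2(\Sigma_R)}=O(1/n)$.

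Having solved for $R_-$, we reconstruct $R$ by
\begin{equation*}
R(\zeta) = I + \frac{1}{2\pi i}\int_{\Sigma_R} \frac{R_-(\xi)(J_R(\xi)-I)}{\xi-\zeta}\,d\xi,\qquad \zeta\in\mathbb{C}\setminus\Sigma_R.
\end{equation*}
Combining the bound $\|R_--I\|_{L^2}=O(1/n)$ with $\|J_R-I\|_{L^2}=O(1/n)$ and $\|J_R-I\|_{L^1}=O(1/n)$, the Cauchy--Schwarz inequality and the explicit factor $1/(\xi-\zeta)$ yield
\begin{equation*}
R(\zeta)-I = O\!\left(\frac{1}{n(|\zeta|+1)}\right),
\end{equation*}
uniformly in $\zeta\in\mathbb{C}\setminus\Sigma_R$ and uniformly in the stated parameter range. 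Expanding $(\xi-\zeta)^{-1}=-\zeta^{-1}\sum_k (\xi/\zeta)^k$ for large $|\zeta|$ gives the $1/\zeta$ coefficient
\begin{equation*}
R_1(s,n,\theta) = -\frac{1}{2\pi i}\int_{\Sigma_R} R_-(\xi)(J_R(\xi)-I)\,d\xi = O(1/n),
\end{equation*}
again uniformly in $s$ and $\theta$. The main point to verify carefully is the uniformity: one must check that the Neumann series estimate for $(I-\mathcal{C}_{J_R-I})^{-1}$ does not degrade as $s\to -\theta\psi(\theta)-\epsilon$ or as $\theta\to 0$, but this is immediate from Proposition \ref{prop:smallnormtaupos2}, which already provides uniform bounds on $J_R-I$, together with the fact that the contour $\Sigma_R$ can be chosen independent of $(s,\theta)$ within a compact subset of the parameter range.
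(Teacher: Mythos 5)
Your proof is correct and is precisely the standard small-norm argument (singular integral equation, Neumann series, reconstruction via the Cauchy transform, Cauchy--Schwarz) that the paper invokes by citing the general theory of Riemann--Hilbert problems in \cite{Deift, DKMVZ1, DKMVZ2} without spelling out the details. Your added remark that uniformity hinges on $b(s,\theta)$ remaining bounded away from $0$ and $\infty$ in the range $\epsilon\leq s<-\theta\psi(\theta)-\epsilon$, keeping the contour nondegenerate and the Cauchy operator norm controlled, correctly identifies the one point worth checking.
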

As a consequence, we have
\be\label{eq:asY1taupos0}
Y_1(s,n,\theta)=in^2g_1(s,\theta)\sigma_3+inP_1^\infty+O(1),\qquad n\to\infty\ee
uniformly for $0\leq \theta<\theta_0$, $\epsilon\leq s<-\theta\psi(\theta)-\epsilon$.
By \eqref{eq: formula g1} and \eqref{eq:P1infty}, we finally find
\be\label{eq:asY1taupos}
(Y_1(s,n,\theta))_{11}=n^2f(s,\theta)+O(1),\qquad n\to\infty\ee
uniformly for $0\leq \theta<\theta_0$, $\epsilon\leq s<-\theta\psi(\theta)-\epsilon$.

\section{Construction of parametrices for $-\theta\psi(\theta)-\epsilon\leq s\leq -\theta\psi(\theta)+\epsilon$}
\label{section:RH3}

Throughout this section, we assume that $\eta>0$ is sufficiently small, and we take $\epsilon>0$ sufficiently small such that $b(s,\theta)<\eta$ for $-\theta\psi(\theta)-\epsilon\leq s\leq -\theta\psi(\theta)+\epsilon$. We can do this by \eqref{eq:bastau}. We denote $U_0$ for the disk with radius $\eta$ centered at $0$.

\subsection{The global parametrix}
By Proposition \ref{prop:smallnormtau0}, we know that the jump matrix $J_S$ for $S$ is close to $I$ on $\Sigma^\pm\setminus U_0$ for $n$ large. For this reason, we need a global parametrix which is analytic outside $U_0$, and which tends to $I$ at infinity. While we could simply set $P^\infty(\zeta)=I$ at this point, this is not a convenient choice for $-\theta\psi(\theta)-\epsilon\leq s<-\theta \psi(\theta)$: indeed, this would not allow for a good matching with the local parametrix near $0$ later on.

For $-\theta\psi(\theta)-\epsilon\leq s<-\theta \psi(\theta)$, we rather define $P^\infty(\zeta)=P^\infty(\zeta;b)$ as before, given by \eqref{def:Pinf}, with $b=b(s,\theta)$ the unique solution of \eqref{eq: defining equation for b}. We recall that $P^\infty$ solves the RH conditions $(P^\infty 1)$--$(P^\infty 3)$, and that $b(s,\theta)>0$ is small for $s$ close to $-\theta\psi(\theta)$. For any $\eta>0$, we can choose $\epsilon>0$ sufficiently small such that $b(s,\theta)<\eta$, which implies that $P^\infty$ is analytic for $|\zeta|>\eta$.
 
For $-\theta\psi(\theta)\leq s\leq -\theta \psi(\theta)+\epsilon$, we set $P^\infty(\zeta)=I$. Note that this is the same as \eqref{eq: defining equation for b} with $b=0$.

\subsection{Model RH problem associated to the Painlev\'e II equation}

We will construct a local parametrix in $U_0$, which satisfies the same jump condition as $S$ and which matches with $P^\infty$ on the boundary of $U_0$.
\subsubsection*{RH problem for $P$}
\begin{itemize}
\item[(P1)] $P:\mathbb C\setminus U_{0}\to\mathbb C^{2\times 2}$ is analytic.
\item[(P2)] $P_+(\zeta)=P_-(\zeta)J_S(\zeta)$ for $\zeta\in U_{0}\cup\Sigma$.
\item[(P3)] On $\partial U_{0}$, we have the (uniform in $\zeta$, $0<\theta<\theta_0$, and in $-\theta\psi(\theta)-\epsilon\leq s\leq -\theta\psi(\theta)+\epsilon$) matching condition $P(\zeta)=\left(I+O\left({n^{-1/3}}\right)\right)P^{\infty}(z)$ as $n\to\infty$.
\end{itemize}

To constuct the local parametrix $P$, 
we will use the solution $\Psi(z)=\Psi(z;r)$ of a well-known model RH problem, connected to the Hastings-McLeod solution of the Painlev\'e II equation, and depending on a parameter $r\in\mathbb C$. This RH problem is due to Flaschka and Newell \cite{FlaschkaNewell} and was used and studied further in \cite{BaikDeiftJohansson, BleherIts, ClaeysKuijlaars, DZ2, FIKN}.

	\begin{figure}[H]
	\begin{center}
		\begin{tikzpicture}
			\node at (-6,0) {};
			\fill (-6,0) circle (0.05cm);
			\node at (-6+0.15,-0.2) {$0$};
	
			\draw[dashed,->-=1,black] (-6,-1.5) to [out=90, in=-90] (-6,1.5);
			\draw[dashed,->-=1,black] (-8.4,0) to [out=0, in=-180] (-6+2.4,0);
		\draw[-<-=0.6,black] (-6,0.4)--($(-6,0.4)+(180-15:1.5)$);
		\draw[->-=0.6,black] (-6,0.4)--($(-6,0.4)+(15:1.5)$);

			\draw[-<-=0.6,black] (-6,-0.4)--($(-6,-0.4)+(180+15:1.5)$);
		\draw[->-=0.6,black] (-6,-0.4)--($(-6,-0.4)+(-15:1.5)$);
			\node at (-7.5,-0.6) {\small$\widetilde\Sigma^-$};
			\node at (-7.5,0.9) {\small$\widetilde\Sigma^+$};

			\node at (0,0) {};
			\fill (0,0) circle (0.05cm);
			\node at (0.15,-0.2) {$0$};
			\draw[dashed,->-=1,black] (0,-1.5) to [out=90, in=-90] (0,1.5);
			\draw[dashed,->-=1,black] (-2.4,0) to [out=0, in=-180] (2.4,0);
	\draw[-<-=0.6,black] (0,0)--($(0,0)+(180-15:1.5)$);
		\draw[->-=0.6,black] (0,0)--($(0,0)+(15:1.5)$);			
			\draw[-<-=0.6,black] (0,0)--($(0,0)+(180+15:1.5)$);
		\draw[->-=0.6,black] (0,0)--($(0,0)+(-15:1.5)$);
			\node at (-1.5,-0.2) {\small$\widetilde\Sigma^-$};
			\node at (-1.5,0.5) {\small$\widetilde\Sigma^+$};

							\node at (6,0) {};
			\fill (6,0) circle (0.05cm);
			\node at (6.15,-0.2) {$0$};
			\draw[dashed,->-=1,black] (6,-1.5) to [out=90, in=-90] (6,1.5);
			\draw[dashed,->-=1,black] (6-2.4,0) to [out=0, in=-180] (8.4,0);
			\draw[->-=0.6,black] (6,0)--(7,0);
			\fill (7,0) circle (0.05cm);
			\node at (6.9,0.2) {\small $\beta$};
			\node at (6-0.95,0.2) {\small $-\beta$};
			\node at (4.5,-0.6) {\small$\widetilde\Sigma^-$};			
						\node at (4.5,0.7) {\small$\widetilde\Sigma^+$};				\node at (7.5,-0.6) {\small$\widetilde\Sigma^-$};			
						\node at (7.5,0.7) {\small$\widetilde\Sigma^+$};					\draw[-<-=0.5,black] (6,0)--(5,0);
			\fill (5,0) circle (0.05cm);
		\draw[-<-=0.6,black] (5,0)--($(5,0)+(180+30:1.5)$);
		\draw[->-=0.6,black] (7,0)--($(7,0)+(-30:1.5)$);
		\draw[-<-=0.6,black] (5,0)--($(5,0)+(180-30:1.5)$);
		\draw[->-=0.6,black] (7,0)--($(7,0)+(30:1.5)$);$$
\end{tikzpicture}
		\caption{The shape of the jump contour $\widetilde\Sigma$ for the RH problem for $\Psi$ (on the left), for the RH problem for $\widehat\Psi$ with $\beta=0$ (in the middle), and for $\widehat\Psi$ with $\beta>0$ (on the right). Compare this with Figure \ref{figure:lenses2}.}\label{fig:contourPsi2}	
	\end{center}
\end{figure}
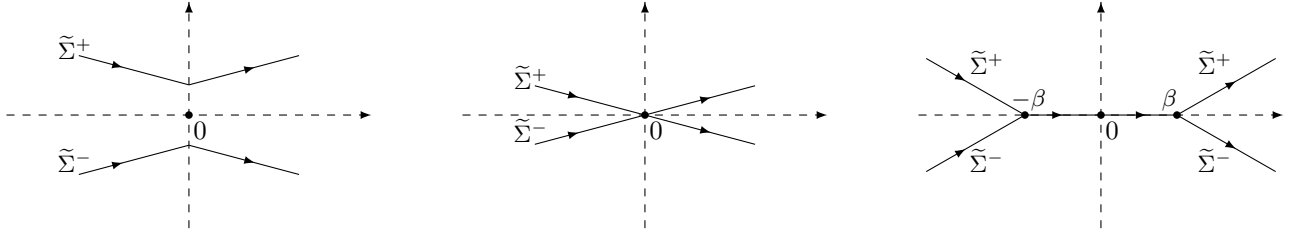

\subsubsection*{RH problem for $\Psi$}
\begin{itemize}
	\item[($\Psi$1)] $\Psi:\C \setminus \widetilde\Sigma$ is analytic, with $\widetilde\Sigma=\widetilde\Sigma^+\cup\widetilde\Sigma^-$, and $\widetilde\Sigma^+$ lies in the upper half plane, $\widetilde\Sigma^-$ lies in the lower half plane, and $\widetilde\Sigma$ lies for large $z$ within the region $\{\delta<|\arg z|<\frac{\pi}{3}-\delta\}\cup \{\delta<|\arg(-z)|<\frac{\pi}{3}-\delta\}$ for some $\delta>0$, like in the left picture of Figure \ref{fig:contourPsi2}.
	\item[($\Psi$2)] $\Psi$ satisfies the jump conditions
	\begin{align*}
	\Psi_+(z) = \Psi_-(z) = \begin{cases}
	\begin{pmatrix}
	1 & 0 \\ 1 & 1
	\end{pmatrix} \quad \text{ if $z \in \widetilde\Sigma^+$}, \\ 
	\begin{pmatrix}
	1 & -1 \\ 0 & 1
	\end{pmatrix} \quad \text{ if $z \in \widetilde\Sigma^-$}. 
	\end{cases} 
	\end{align*}
	\item[($\Psi$3)]As $z \to \infty$, we have
	\be\label{eq:asPsi}\Psi(z) e^{i(4/3z^3+rz)\sigma_3} = I +\Psi_1(r)z^{-1}+\Psi_2(r)z^{-2}+ O(z^{-3}),\ee for some matrices $\Psi_1(r),\Psi_2(r)$.
\end{itemize}

There is an equivalent RH problem, obtained by letting $\widetilde \Sigma^+$ and $\widetilde\Sigma^-$	
coincide on an interval $[-\beta,\beta]$ on the real line, with $\beta=\sqrt{-r/2}$, which will also be useful for us.	
	\subsubsection*{RH problem for $\widehat\Psi$}
\begin{itemize}
	\item[($\widehat\Psi$1)] $\widehat\Psi:\C \setminus \widetilde\Sigma$ is analytic, with $\widetilde\Sigma=\widetilde\Sigma^+\cup\widetilde\Sigma^-\cup[-\beta,\beta]$ as in the right micture of Figure \ref{fig:contourPsi2}.
	\item[($\widehat\Psi$2)] $\widehat\Psi$ satisfies the jump conditions
	\begin{align*}
	\widehat\Psi_+(z) = \widehat\Psi_-(z) = \begin{cases}
	\begin{pmatrix}
	1 & 0 \\ 1 & 1
	\end{pmatrix} \quad \text{ if $z \in \widetilde\Sigma^+$}, \\ 
	\begin{pmatrix}
	0 & -1 \\ 1 & 1
	\end{pmatrix} \quad \text{ if $z \in (-\beta,\beta)$}, \\ 
	\begin{pmatrix}
	1 & -1 \\ 0 & 1
	\end{pmatrix} \quad \text{ if $z \in \widetilde\Sigma^-$}. 
	\end{cases} 
	\end{align*}
	\item[($\widehat\Psi$3)]As $z \to \infty$, we have 
	\be\label{eq:ashatPsi}\widehat\Psi(z) e^{i(4/3z^3+rz)\sigma_3} = I +\Psi_1(r)z^{-1}+\Psi_2(r)z^{-2}+ O(z^{-3}).\ee
	\item[($\widehat\Psi$4)] $\widehat\Psi$ is bounded near $\pm\beta$.
\end{itemize}

These two equivalent RH problems have been studied intensively in the past decades and are by now well understood. Below, we summarize the relevant known results about its solution, for which we refer the reader to \cite{FIKN, FlaschkaNewell, ClaeysKuijlaars}.

First, we have that the solutions $\Psi,\widehat\Psi$ exist for any $s$ in a complex neighborhood of the real line, and that the matrices $\Psi_1(r),\Psi_2(r)$ are independent of the choice of jump contour $\widetilde\Sigma$. 
Secondly, the RH solution is connected to the Painlev\'e II equation in the following sense. We have
\be\label{def:Psi1}\Psi_1(r)=\frac{1}{2i}\begin{pmatrix}u(r)&q(r)\\-q(r)&-u(r)\end{pmatrix},\ee
	where $q$ is the Hastings-McLeod solution of the Painlev\'e II equation, i.e. 
	\[q''(r)=rq(r)+2q(r)^3\quad \mbox{ and }q(r)\sim \Ai(r)\ \mbox{ as }r\to +\infty,\] and $u(r)=q'(r)^2-sq(r)^2-q(r)^4$, such that $u'(r)=-q(r)^2$. 
	
	The expansion \eqref{eq:asPsi} in ($\Psi$3) is uniform for $r$ in compact subsets of the complex $r$-plane in which the RH problem is solvable, and it is also uniform as $r \to \infty$ with $|\arg r|<\delta$ with $\delta>0$ small. As $r\to-\infty$ or $r\to\infty$ with $|\arg r-\pi|<\delta$, it is not uniform, but one can rewrite it in order to obtain a uniform expansion.
As	
 $r\to\infty$ with $|\arg r-\pi|<\delta$, we have the following asymptotics, uniform for $|z|>\sqrt{-r}$ (see for instance \cite[Chapter 8]{FIKN} or \cite{DZ2}):
\be\label{eq:Psi32}\widehat\Psi(z,r) e^{\frac{4i}{3} (z^2+r/2)^{3/2}\sigma_3} =\left(I +O\left(\frac{1}{(|r|+1)z}\right)\right)P^\infty(z;\sqrt{-r/2}),\qquad z\to\infty,\ee 
with $P^\infty(z;\sqrt{-r/2})$ given by \eqref{def:Pinf} with $b=\sqrt{-r/2}$.

\subsection{Construction of the local parametrix for $-\theta\psi(\theta)\leq s \leq -\theta\psi(\theta)+\epsilon$}

We first record the following result stating that the jump matrix  $J_T(\zeta)$ is close to $I$ as $n\to\infty$, except for $\zeta$ close to $0$. We omit the proof, since it is the same as that of Proposition \ref{prop:smallnormtauneg}, for $|\zeta|>\rho$.
\begin{proposition}\label{prop:smallnormtau0T}
Let $\rho >0$. There exist $\epsilon,c,\theta_0>0$ such that 
\[J_T(\zeta)=
I+O(e^{-c n|\zeta|}),\qquad\zeta\in\Sigma^\pm, \ |\zeta|>\rho, 
\]
as $n\to\infty$, uniformly in $\zeta$ and in $0\leq \theta<\theta_0$, $-\theta\psi(\theta)\leq s<-\theta\psi(\theta)+\epsilon$.
\end{proposition}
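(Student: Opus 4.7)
The plan is to mimic exactly the strategy of Proposition \ref{prop:smallnormtauneg}. Since $J_T(\zeta) - I$ on $\Sigma^\pm$ consists of a single off-diagonal entry $\pm e^{\mp n h(\zeta;s,\theta)}$, it suffices to prove that there exist constants $c>0$, $\epsilon>0$, and $\theta_0>0$ such that
\begin{align*}
&\Re h(\zeta;s,\theta) \geq c|\zeta|, &&\zeta\in\Sigma^+,\ |\zeta|>\rho,\\
&\Re h(\zeta;s,\theta) \leq -c|\zeta|, &&\zeta\in\Sigma^-,\ |\zeta|>\rho,
\end{align*}
uniformly in $0\leq\theta<\theta_0$ and $-\theta\psi(\theta)\leq s<-\theta\psi(\theta)+\epsilon$.

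First, I would handle the unbounded tails of $\Sigma^-$ by invoking Stirling's formula for $\log\Gamma$, exactly as in Proposition \ref{prop:smallnormtauneg}: for $|\zeta|>R$ with $R$ sufficiently large the linear term $-is\zeta$ in \eqref{def:h} combines with the Stirling expansion of the Gamma factors to produce a sign on $\Re h$ proportional to $|\zeta|$, valid for all parameter values in question since the $s$-range is compact.

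Next I would treat the bounded regime $\rho\leq|\zeta|\leq R$ by a compactness-plus-continuity argument, starting at the reference point $\theta=0$, $s=1$ (the $\theta\to 0$ limit of the critical point $s=-\theta\psi(\theta)$). By Proposition \ref{prop:signh}, the union $\mathcal H_+^1\cup\mathcal H_-^1$ contains the exterior of the $8$-shaped curve pinched at the origin, and by the construction of $\Sigma^\pm$ shown in the middle panel of Figure \ref{figure:lenses2} the contour portions $\Sigma^\pm\setminus\{0\}$ lie in this exterior. The set $\Sigma^\pm\cap\{\rho\leq|\zeta|\leq R\}$ is compact and bounded away from the pinch point, so the continuous function $\pm\Re h(\cdot;1,0)$ attains a strictly positive infimum there, giving the desired $c$ at $(\theta,s)=(0,1)$. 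The monotonicity of $\mathcal H_\pm^s$ in $s$ (also from Proposition \ref{prop:signh}) extends the bound to all $s\geq 1$ at $\theta=0$, and joint continuity of $h$ in $(\zeta,s,\theta)$ on the relevant compact set propagates it to all sufficiently small $\theta$ and to $s$ in a small right-neighborhood of $-\theta\psi(\theta)$.

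The only conceptual difference from Proposition \ref{prop:smallnormtauneg} is that now the contour $\Sigma^+$ actually reaches (or comes close to) the origin, so the sign constant in the estimate necessarily degenerates as $|\zeta|\to 0$; this is the reason the hypothesis $|\zeta|>\rho$ appears. The truncation $|\zeta|>\rho$ is precisely what removes this obstacle, reducing the problem to a routine compactness argument on a set where $\Re h$ has a clean sign. No local parametrix considerations enter at this stage.
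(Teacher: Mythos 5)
Your proof is correct and follows essentially the same strategy the paper intends: the paper in fact omits the proof and simply refers back to Proposition \ref{prop:smallnormtauneg}, noting that the argument is the same once restricted to $|\zeta|>\rho$. Your filling-in of the details — reducing to the sign estimates $\pm\Re h\geq c|\zeta|$ on $\Sigma^\pm$, disposing of the unbounded tails of $\Sigma^-$ via the large-$\zeta$ (Stirling) behavior of $h$, handling the annulus $\rho\leq|\zeta|\leq R$ by a compactness argument at the reference point $(\theta,s)=(0,1)$ followed by monotonicity in $s$ and joint continuity in $(\zeta,s,\theta)$, and identifying the pinch at the origin as the reason the hypothesis $|\zeta|>\rho$ is needed — is exactly what the paper's pointer to Proposition \ref{prop:smallnormtauneg} is asking for.
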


This is not sufficient to conclude that $T$ is close to $I$, because the convergence of the jump matrices breaks down near $0$. Therefore, we need to construct a local parametrix near $0$, for which we will use $\Psi(z;r)$.

We will construct $P$ of the following form:
\begin{align}\label{def:PPII}
P(\zeta) = \begin{pmatrix}0&1\\-1&0\end{pmatrix}\Psi\big( n^{1/3} f(\zeta),r_n(\zeta) \big) 
\begin{pmatrix}0&-1\\1&0\end{pmatrix}
 e^{\frac{n}{2}h(\zeta)\sigma_3},
\end{align}
where $q$ is defined by \eqref{def:q}.
Here $f$ will be a conformal map in $U_0$ with $f(0)=0$ and $f'(0)>0$, and $r_n$ will be analytic in $U_0$ and uniformly bounded in $n,s,\zeta$.
First, we need $P$ to have the same jump contour as $S$, and therefore we need to take $\Sigma,\widetilde\Sigma$ and $f$ such that $n^{1/3}f\left(\Sigma\cap U_0\right)\subset \widetilde\Sigma$. If this holds, then it is straightforward to check that $P$ and $\Psi$ satisfy the same jump relations on $\Sigma\cap U_0$.
The remaining freedom to define $f$ and $r_n$ will be needed to achieve the matching condition (P3).
By condition ($\Psi$3) and \eqref{def:PPII}, it follows that 
the matching condition 
\be\label{eq:matching1}
P(\zeta)P^\infty(\zeta)^{-1}=P(\zeta)=I+O(n^{-1/3}),\qquad n\to\infty,
\ee
holds uniformly for $\zeta\in\partial U_0$, $-\theta\psi(\theta)\leq s\leq -\theta\psi(\theta)+\epsilon$, provided that we have the identity
\be \label{eq:identitylocal}n h(\zeta;s,\theta) = -\frac{8i}{3}n f(\zeta)^3-2in^{1/3}r_n(\zeta)f(\zeta).
\ee
To achieve this, we define
\be\label{def:fr}
f(\zeta)=\left(\frac{3i}{8}h(\zeta;-\theta\psi(\theta),\theta)\right)^{1/3},\qquad r_n(\zeta)=n^{2/3}(s+\theta\psi(\theta))\frac{\zeta}{2f(\zeta)}.
\ee It is then straightforward to check that \eqref{eq:identitylocal} holds, by \eqref{def:h}.
Since 
\[h(0;-\theta\psi(\theta),\theta)=h'(0;-\theta\psi(\theta),\theta)=h''(0;-\theta\psi(\theta),\theta)=0,\quad h'''(0;-\theta\psi(\theta),\theta)=\frac{i\theta^3\psi''(\theta)}{4},\]
we also have, as required, that $f$ is a conformal map, with  \be\label{eq:f0}
f(0)=0,\qquad f'(0)=\frac{\theta(-\psi''(\theta))^{1/3}}{4}>0.
\ee
Furthermore, $r_n$ is analytic near $0$, and we have 
\be\label{eq:rn0}r_n(0)=\frac{2}{\theta(-\psi''(\theta))^{1/3}}(s+\theta\psi(\theta))n^{2/3}\geq 0.\ee

\subsection{Construction of the local parametrix for $-\theta\psi(\theta)-\epsilon\leq s < -\theta\psi(\theta)$}

For $s+\theta\psi(\theta)=O(n^{-2/3})$ as $n\to\infty$, we could have proceeded directly with the construction of a local Painlev\'e II parametrix like before, without the need to use a $g$-function transformation, following the method of \cite{ClaeysKuijlaars}. However, since we need uniform asymptotics for $-\theta\psi(\theta)-\epsilon\leq s\leq -\theta\psi(\theta)+\epsilon$, we are forced to use a $g$-function transformation here as well, similarly to \cite{BaikDeiftJohansson}.
As already mentioned, we let $\eta>0$ be a sufficiently small number, and we let $\epsilon>0$ be such that $|b(s,\theta)|<\eta$ for $-\theta\psi(\theta)-\epsilon\leq s< -\theta\psi(\theta)$. 
We first prove that the jump matrix $J_S(\zeta)$ is close to $I$ as $n\to\infty$, except for $\zeta$ close to $0$.
\begin{proposition}\label{prop:smallnormtau0}
Let $\eta >0$. There exist $\epsilon,c,\theta_0>0$ such that 
\[J_S(\zeta)=
I+O(e^{-c n|\zeta|}),\qquad\zeta\in\Sigma^\pm, \ |\zeta|>\eta, 
\]
as $n\to\infty$, uniformly in $\zeta$ and in $0\leq \theta<\theta_0$, $-\theta\psi(\theta)-\epsilon\leq s<-\theta\psi(\theta)+\epsilon$.
\end{proposition}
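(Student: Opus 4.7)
The jump matrix \eqref{def:JS} on $\Sigma^\pm$ is upper/lower triangular with off-diagonal entry $\mp e^{\pm 2 n q(\zeta;s,\theta)}$, so $J_S=I+O(e^{-cn|\zeta|})$ reduces to the inequalities
$$
\pm\Re q(\zeta;s,\theta)\leq -c\,|\zeta|,\qquad \zeta\in\Sigma^\pm\cap\{|\zeta|>\eta\},
$$
uniform in $0\leq\theta<\theta_0$ and $|s+\theta\psi(\theta)|<\epsilon$. Since $q=g-\tfrac12 h$, the proof splits according to the sign of $s+\theta\psi(\theta)$. For $s\geq -\theta\psi(\theta)$ one has $g\equiv 0$, so the target reduces to $\mp\Re h(\zeta;s,\theta)\geq 2c|\zeta|$ on $\Sigma^\pm\cap\{|\zeta|>\eta\}$. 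For $s<-\theta\psi(\theta)$, $b=b(s,\theta)>0$ is small by \eqref{eq:bastau}, and after shrinking $\epsilon$ we may take $b<\eta/2$. Expanding $(u-\zeta)^{-1}$ at $\zeta=\infty$ in the integral representation \eqref{def:g}, the $1/\zeta$-coefficient vanishes by oddness of $h$ in $u$; the next coefficient is $O(b^4)$ by the Taylor bound $h(u;s,\theta)=-i(s+\theta\psi(\theta))u+O(u^3)$ combined with $s+\theta\psi(\theta)=O(b^2)$ from the proof of Proposition \ref{lemma: choice of b}. Hence $g(\zeta)=O(b^4/|\zeta|)=O(\epsilon^2)$ uniformly on $\{|\zeta|>\eta\}$, and the problem again reduces to the same inequality for $h$.

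The inequality for $h$ is first proved at the anchor $(s,\theta)=(1,0)$ (where $-\theta\psi(\theta)=1$) and then propagated by continuity. At $(1,0)$, by the construction in the proof of Proposition \ref{prop:signh}, the contours $\Sigma^\pm$ of the middle picture of Figure \ref{figure:lenses2} are the upper and lower halves of the eight-shaped curve, lying in $\overline{\mathcal H_\pm^1}$ and meeting $\mathbb R$ only at $0$; thus $\mp\Re h(\zeta;1,0)>0$ strictly on $\Sigma^\pm\setminus\{0\}$. On the annular portion $\eta\leq|\zeta|\leq R$, compactness and strict positivity yield $\mp\Re h\geq c_0$, hence $\geq c_0|\zeta|/R$. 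On $\{|\zeta|>R\}$ the explicit formula $\Re h(\zeta;1,0)=\log\frac{|\zeta-2i|}{|\zeta+2i|}+\Im\zeta$, together with the fact (also from Proposition \ref{prop:signh}) that $\Sigma^\pm\cap\{|\zeta|>R\}$ is contained in a cone $\{\pm\Im\zeta\geq c_1|\zeta|\}$, gives $\mp\Re h(\zeta;1,0)\geq c_1|\zeta|/2$ for $R$ large.

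To transfer the estimate to nearby parameters, $h$ is jointly continuous in $(\zeta,s,\theta)$ on the compact set $(\Sigma^+\cup\Sigma^-)\cap\{\eta\leq|\zeta|\leq R\}$ for $\theta\in[0,\theta_0]$, because the digamma poles $\pm 2i\mp 2ik/\theta$ with $k\geq 1$ recede to infinity as $\theta\to 0$; uniform continuity then preserves the bound up to halving the constant after shrinking $\epsilon,\theta_0$. On $\{|\zeta|>R\}$ a Stirling expansion of $\log\Gamma(\theta(1\mp i\zeta/2))$ gives $\Re[h(\zeta;s,\theta)-h(\zeta;1,0)]=(1-s)\Im\zeta+o(|\zeta|)$, uniformly for $\theta\in[0,\theta_0]$ and $\zeta$ away from the digamma poles, and this is controlled by the dominant $\Im\zeta$ for $|s-1|<\epsilon$. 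The one non-routine step is this joint Stirling estimate as $\theta\to 0$ and $|\zeta|\to\infty$ simultaneously---the same confluent limit that is glossed over in the proof of Proposition \ref{prop:smallnormtauneg}; the key observation making it go through is that the leading Stirling correction on $\Sigma^\pm$ produces a multiple of $\pm\Im\zeta\log(\theta|\zeta|)$ of the correct sign, reinforcing rather than undermining the target inequality.
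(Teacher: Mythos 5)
Your approach is essentially the paper's: anchor the estimate at $(s,\theta)=(1,0)$, where $q=-\tfrac12 h$, and then propagate it to nearby parameters by continuity in $(s,\theta)$. The extra steps you supply --- the bound $g=O(\epsilon^2)$ uniformly on $\{|\zeta|>\eta\}$ that reduces the $s<-\theta\psi(\theta)$ case to an $h$-inequality, and the Stirling argument handling the confluent limit $\theta\to 0$, $|\zeta|\to\infty$ --- usefully fill in details that the paper's terse proof leaves implicit.

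There is, however, a systematic sign slip in your $h$-inequalities. From $q=g-\tfrac12 h$ with $g$ small, the target $\pm\Re q\leq -c|\zeta|$ on $\Sigma^\pm$ reduces to $\pm\Re h\geq 2c|\zeta|$, not $\mp\Re h\geq 2c|\zeta|$ as you write: on $\Sigma^+$ the jump entry is $-e^{2nq}$, so you need $\Re q<0$, i.e.\ $\Re h>0$, which is exactly what $\Sigma^+\subset\overline{\mathcal H_+^s}$ delivers since $\mathcal H_+^s$ is by definition the region of the upper half plane where $\Re h>0$. Your intermediate conclusion ``$\mp\Re h(\zeta;1,0)>0$ on $\Sigma^\pm\setminus\{0\}$'' therefore contradicts your own appeal to $\Sigma^\pm\subset\overline{\mathcal H_\pm^1}$ from Proposition \ref{prop:signh}, and the same flip is carried through your cone estimate at infinity. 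Replacing $\mp$ by $\pm$ in those three displayed $h$-inequalities repairs the argument with no further changes.
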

\begin{proof}
For $\theta=0$ and $s=1$, we have
\[q(\zeta;1,0)=-\frac{1}{2}h(\zeta;1,0)=-\frac{1}{2}\log\frac{\zeta-2i}{\zeta+2i}+\frac{i}{2}\zeta.\]
Recall that we have chosen the contours $\Sigma^\pm$ such that 
\[\Re q(\zeta;1,0)\geq C,\qquad \zeta\in\Sigma^+,|\zeta|>\eta,\]
and
\[\Re q(\zeta;1,0)\leq -C,\qquad \zeta\in\Sigma^-,|\zeta|>\eta,\]
for some $C>0$.
From the behavior of $q$ at infinity, it follows moreover that we can replace $C$ by $c|\zeta|$ with $c>0$ sufficiently small in the above estimates.
By continuity of $q(\zeta;s,\theta)$ in $s,\theta$ for $|\zeta|>\eta$, it follows that these estimates continue to hold for $0<\theta<\theta_0$ with $\theta_0>0$ sufficiently small, and $s$ sufficiently close to $1$. Note moreover that any interval containing $1$ contains also $-\theta\psi(\theta)$ for $\theta>0$ small, since $\lim_{\theta\to 0} (-\theta\psi(\theta))=1$. The result follows.
\end{proof}

We already know that $q$ behaves like a $3/2$-power near $\pm b$, but we are now in a situation where $b(s,\theta)$ can converge to $0$ in the limit where $n\to\infty$. This is why we now need uniform control of $q$ in a fixed neighborhood of $0$, containing both $\pm b$.

\begin{proposition} \label{cor: p positive for small b and delta}
There exist $\eta,\epsilon,\theta_0>0$ such that 	\begin{align*}
	\frac{-i q(\zeta)}{a(\zeta)^3} &= c(s)+ O(a(\zeta)^2),
	\end{align*}
	uniformly for $0<\theta<\theta_0$, $-\theta\psi(\theta)-\epsilon\leq s<-\theta\psi(\theta)$, and $|\zeta|\leq \eta$. Furthermore,
	\begin{align*}
		c(s) =  - \theta^3\psi''(\theta)/48 +O(s+\theta\psi(\theta)), \quad s\to -\theta\psi(\theta).
	\end{align*}
\end{proposition}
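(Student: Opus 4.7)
The plan is to build on the asymptotic expansion initiated in the text just before the statement. From \eqref{eq:intq} and the parity argument already carried out we have $\frac{q(\zeta)}{a(\zeta)} = \int_{\mathcal C}\frac{h(u;s,\theta)\,u}{a(u)(\zeta^2-u^2)}\frac{du}{4\pi i}$, and I would fix $\mathcal C$ to be a circle of radius $\rho\in(\eta,2)$ around $0$, independent of $s$ and $\theta$. This is admissible because for $\theta\in(0,\theta_0)$ the nearest singularities of $h(\,\cdot\,;s,\theta)$ to the origin stay at $\pm 2i$, and for $|b|<\eta$ the set $[-b,b]$ lies strictly inside $\mathcal C$. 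Expanding $\frac{1}{\zeta^2-u^2}=-\frac{1}{u^2-b^2}\sum_{k\ge 0}\!\left(\frac{\zeta^2-b^2}{u^2-b^2}\right)^k$ (uniformly convergent for $|\zeta|\le\eta$, $u\in\mathcal C$) gives $\frac{q(\zeta)}{a(\zeta)}=-I_0-(\zeta^2-b^2)\,I_1+O(a(\zeta)^4)$, where $I_j=\int_{\mathcal C}\frac{h(u;s,\theta)\,u}{a(u)(u^2-b^2)^{j+1}}\frac{du}{4\pi i}$. Since $a(\zeta)^2=\zeta^2-b^2$, once I show $I_0=0$, dividing by $a(\zeta)^3$ yields $\frac{-iq(\zeta)}{a(\zeta)^3}=iI_1+O(a(\zeta)^2)$, identifying $c(s)=iI_1=\frac{1}{4\pi}\int_{\mathcal C}\frac{h(u;s,\theta)\,u}{a(u)^5}du$.

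To verify $I_0=0$, I would use the identity $u/a(u)^3=-\frac{d}{du}(1/a(u))$ (which follows from $a'(u)=u/a(u)$) and integrate by parts on the closed contour: since $h$ is analytic inside a neighborhood of $\mathcal C$ and $1/a$ is single-valued on $\mathcal C$, boundary contributions vanish, giving $I_0=\frac{1}{4\pi i}\int_{\mathcal C}\frac{h'(u)}{a(u)}du$. Now, crucially, the new integrand is \emph{integrable} on $[-b,b]$, so one can collapse $\mathcal C$ onto the cut using $a_\pm(u)=\pm i\sqrt{b^2-u^2}$ to obtain $I_0=-\frac{1}{\pi}\int_0^1\frac{h'(bv;s,\theta)}{\sqrt{1-v^2}}\,dv$, where I also used that $h'$ is even. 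Up to the factor $i$, this is precisely $H(b)$ from \eqref{eq:endpointH}, which vanishes for $b=b(s,\theta)$ by Proposition \ref{lemma: choice of b}.

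For the asymptotic value of $c(s)$ as $s\to -\theta\psi(\theta)$, I would use that $b(s,\theta)\to 0$ by \eqref{eq:bastau}, and that, with $\mathcal C$ kept fixed, the integrand $\frac{h(u;s,\theta)u}{a(u)^5}$ depends analytically and jointly smoothly on $s$ and $b^2$ on $\mathcal C$. At $s=-\theta\psi(\theta)$ and $b=0$ we have $a(u)=u$, so $c(-\theta\psi(\theta))=\frac{1}{4\pi}\oint_{\mathcal C}\frac{h(u;-\theta\psi(\theta),\theta)}{u^4}du=\frac{i}{12}h'''(0;-\theta\psi(\theta),\theta)$ by residues; and a direct computation from \eqref{def:h} yields $h'''(0;s,\theta)=\frac{i\theta^3}{4}\psi''(\theta)$ (independently of $s$), giving the claimed value $-\theta^3\psi''(\theta)/48$. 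A first-order Taylor expansion of the integrand in $s$ and $b^2$ around this limit, combined with the differentiability of $b(s,\theta)$ in $s$ (Proposition \ref{lemma: choice of b}), produces the $O(s+\theta\psi(\theta))$ remainder.

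The main technical obstacle is ensuring the uniformity simultaneously in $|\zeta|\le\eta$, in $0<\theta<\theta_0$, and in $s$ with $-\theta\psi(\theta)-\epsilon\le s<-\theta\psi(\theta)$. The key point is that once $\mathcal C$ is chosen independently of these parameters, the integrands $h(u)u/a(u)(u^2-b^2)^{j+1}$ are bounded on $\mathcal C$ uniformly in $s,\theta$, and $|u^2-b^2|\ge \rho^2-\eta^2>0$ on $\mathcal C$, so the geometric expansion converges at a rate controlled only by $\eta/\rho$. This delivers both the uniform $O(a(\zeta)^2)$ error in the expansion and the continuous dependence of $c(s)$ on the parameters, completing the proof.
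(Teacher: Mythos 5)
Your proof is correct and follows essentially the same route as the paper's: the same parity-based reduction to $\int_{\mathcal C}\frac{h(u)u}{a(u)(\zeta^2-u^2)}\frac{du}{4\pi i}$, the same geometric expansion in powers of $\zeta^2-b^2=a(\zeta)^2$, the vanishing of the constant term via integration by parts and the defining equation \eqref{eq: defining equation for b} for $b$, and the same residue computation of $c(s)$ at $b=0$. You have supplied somewhat more detail than the paper (the explicit identity $u/a(u)^3=-\tfrac{d}{du}(1/a(u))$, the uniformity discussion with a fixed $\rho$-circle), and there is a harmless overall sign slip in your formula for $I_0$ (it should come out as $+\frac{1}{\pi}\int_0^1 h'(bv)(1-v^2)^{-1/2}\,dv$), which is immaterial since that quantity vanishes.
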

\begin{proof}
We already computed (see the computation before \eqref{eq:qlocal}) 
\begin{align*}
\frac{-iq(\zeta)}{a(\zeta)} = a(\zeta)^2\int_{\mathcal C} \frac{h(u)u}{a(u)(u^2-b^2)^2}
   \frac{du}{4 \pi }  + O\left(a(\zeta)^4\right),
\end{align*}
uniformly for $|\zeta|<\eta$ with $\eta$ sufficiently small. We define
\begin{align*}
c(s)=\int_{\mathcal C} \frac{h(u)u}{a(u)\left(u^2-b^2\right)^2} \frac{du}{4 \pi }.
\end{align*}
Then, as $b=b(s,\theta)\to 0$, we find by a residue computation that
\[c(s)=\int_{\mathcal C} \frac{h(u)}{u^4} \frac{du}{4 \pi }+O(b^2)=-\frac{\theta^3\psi''(\theta)}{48}+O(b^2).\]
By \eqref{eq:bastau}, the error term is $O(s+\theta\psi(\theta))$ as $n\to\infty$, and the result follows. 
\end{proof}

We will now construct $P$ of the following form, similar to \eqref{def:PPII} but with $\widehat\Psi$ instead of $\Psi$ and with $q$ instead of $h$:
\begin{align}\label{def:PPII2}
P(\zeta) = \begin{pmatrix}0&1\\-1&0\end{pmatrix}\widehat\Psi\big( n^{1/3} f(\zeta),r_n(\zeta) \big) 
\begin{pmatrix}0&-1\\1&0\end{pmatrix}
 e^{-n q(\zeta) \sigma_3}.
\end{align}
Here $f$ will be the same as in \eqref{def:fr}, and $r_n(\zeta)$ will be analytic in $U_0$ and uniformly bounded.
As before, we will take $\Sigma,\widetilde\Sigma$ and $f$ such that $n^{1/3}f\left(\Sigma\cap U_0\right)\subset \widetilde\Sigma=\Sigma^+\cup\Sigma^-\cup[-\beta,\beta]$. 
This fixes in particular the value of $\beta=n^{1/3}f(b(s,\theta))$.

If is then again straightforward to verify that $P$ and $\Psi$ satisfy the same jump relations on $\Sigma\cap U_0$.
The remaining freedom to define $r_n$ will be needed to achieve the matching condition (P3).
By \eqref{eq:Psi32} and \eqref{def:PPII2}, it follows that 
the matching condition \be\label{eq:matching2}
P(\zeta)P^\infty(\zeta)^{-1}=I+O\left(\frac{1}{n^{1/3}(|r_n(\zeta)|+1)}\right),\qquad n\to\infty,
\ee holds uniformly for $\zeta\in\partial U_0$, $-\theta\psi(\theta)-\epsilon< s<-\theta\psi(\theta)$, is valid provided that we have the identity
\be \label{eq:identitylocal2}n q(\zeta;s,\theta) = \frac{4i}{3} \left(n^{2/3}f(\zeta)^2+\frac{1}{2}r_n(\zeta)\right)^{3/2}.
\ee
To achieve this, we define
\be\label{def:r2}
r_n(\zeta)=2\left(\frac{-3i}{4}nq(\zeta;s,\theta)\right)^{2/3}-2n^{2/3}f(\zeta)^2,
\ee 
which we can rewrite as
\[r_n(\zeta)=2n^{2/3}\left(\frac{-3i}{4}q(\zeta;s,\theta)\right)^{2/3}-2n^{2/3}\left(\frac{-3i}{4}q(\zeta;-\theta\psi(\theta),\theta)\right)^{2/3}\]
By Proposition \ref{cor: p positive for small b and delta}, $r_n$ is analytic in $U_0$, and using also \eqref{eq:bastau}, we obtain
\begin{align*}r_n(\zeta)&=\frac{b^2}{8}\theta^2(-\psi''(\theta))^{2/3}n^{2/3}+O(n^{2/3}b^4)+O(n^{2/3}b^2\zeta^2)\\
&=\frac{2}{\theta(-\psi''(\theta))^{1/3}}n^{2/3}(s+\theta\psi(\theta))\left(1+O(s+\theta\psi(\theta))+O(\zeta^2)\right),\end{align*}
uniformly in $\zeta\in U_0$, $0<\theta<\theta_0$, $-\theta\psi(\theta)-\epsilon\leq s<-\theta\psi(\theta)$.
For $\epsilon>0$ small, it follows that $r_n(\zeta)$ lies in a small sector around the negative real line, hence the RH problem for $\widehat\Psi\left(.;r=r_n(\zeta)\right)$ is solvable, and we can indeed use the asymptotic relation \eqref{eq:Psi32}.

\subsection{The transformation $S \mapsto R$}
We define
\begin{align*}
R(\zeta) = T(\zeta) \begin{cases}
P(\zeta)^{-1},\quad &\text{ if $\zeta \in U_0$}, \\
I, \quad &\text{ otherwise},
\end{cases}\ \mbox{if $-\theta\psi(\theta)\leq s \leq -\theta\psi(\theta)+\epsilon$,}
\end{align*}
and
\begin{align*}
R(\zeta) = S(\zeta) \begin{cases}
P(\zeta)^{-1},\quad &\text{ if $\zeta \in U_0$}, \\
P^\infty(\zeta;b(s,\theta))^{-1}, \quad &\text{ otherwise},
\end{cases}\ \mbox{if $-\theta\psi(\theta)-\epsilon\leq s < -\theta\psi(\theta)$.}
\end{align*}

\begin{figure}[H]\label{figure:SigmaR2}
	\begin{center}
		\begin{tikzpicture}
							\node at (0,0) {};
			\fill (0,0) circle (0.05cm);
			\node at (0.15,-0.2) {$0$};
			\fill (0,1.5) circle (0.05cm);
			\node at (0.15,1.3) {$2i$};
			\draw[dashed,->-=1,black] (0,-1.5) to [out=90, in=-90] (0,3.2);
			\draw[dashed,->-=1,black] (-2.4,0) to [out=0, in=-180] (2.4,0);

					\draw[-<-=0.5,black] (0,0) circle (0.9cm);			
		
			\draw[->-=0.6,black] (-0.8,-0.37)--($(-0.8,-0.37)+(180+30:1.5)$);
		\draw[->-=0.6,black] (0.8,-0.37)--($(0.8,-0.37)+(-30:1.5)$);
				\draw[->-=0.6,black] ([shift=(10.100:0.925cm)]0,0) arc (-56.220:236:1.6cm);
\end{tikzpicture}
		\caption{Jump contour $\Sigma_R$ for $R$.}
	\end{center}
\end{figure}

By construction, $R$ solves the following RH problem.
\subsubsection*{RH problem for $R$}
\begin{itemize}
	\item[(R1)] $R:\C \setminus \Sigma_R \to \C^{2 \times 2}$ is analytic.
	\item[(R2)] Across the contour $\Sigma_R$, $R$ satisfies the jump condition
	\begin{align*}
		R_+(\zeta) = R_-(\zeta)J_R(\zeta),
	\end{align*}
	where
	\begin{align}\label{eq:jumpR}
		J_R(\zeta) = I+O(n^{-1/3})
	\end{align}
	for $\zeta$ on $\Sigma_R$.
	\item[(R3)] $R(\zeta) = I + \frac{R_1(s,n,\theta)}{\zeta} + \mathcal{O}(\zeta^{-2})$ as $\zeta \to \infty$  with
	\begin{align*}
		R_1(s,n,\theta) = S_1(s,n,\theta) -P_1^\infty=\frac{-i Y_1(s,n,\theta)-n^2g_1(s,\theta)1_{(0,-\theta\psi(\theta))}(s) \sigma_3}{n}-P_1^\infty 1_{(0,-\theta\psi(\theta))}(s).
	\end{align*}
\end{itemize}

As before, it then follows that
\begin{align*}
	R(\zeta) = I + O\left(\frac{1}{n^{1/3}(|z|+1)}\right)
\end{align*}
as $n\to +\infty$, uniformly in $z\in\mathbb C\setminus\Sigma_R$.

It follows that 
\be\label{eq:asR1PII}
R_1(s,n,\theta)=O(n^{-1/3}),\qquad n\to\infty.
\ee
Hence,
\be
Y_1(s,n,\theta)=in^2g_1(s,\theta)1_{(0,-\theta\psi(\theta))}(s)\sigma_3+inP_1^\infty 1_{(0,-\theta\psi(\theta))}(s)+O(n^{2/3}),
\ee
and
\be\label{eq:asY1PII}
\left(Y_1(s,n,\theta)\right)_{11}=in^2g_1(s,\theta)1_{(0,-\theta\psi(\theta))}(s)+O(n^{2/3})=n^2f(s,\theta)1_{(0,-\theta\psi(\theta))}(s)+O(n^{2/3}),
\ee
as $n\to\infty$, uniformly for $-\theta\psi(\theta)-\epsilon\leq s\leq -\theta\psi(\theta)+\epsilon$.

\section{Proof of Theorem \ref{theorem:main}}\label{section:RHproof}
We recall the result from Proposition \ref{prop:diffid}:
\[\frac{d}{ds} \log Q_n^\theta(s) = \left(Y_1(s,n,\theta)\right)_{11},\]
and we substitute the results from the large $n$ asymptotic RH analysis in the right hand side. Let $\epsilon>0$ be sufficiently small.
For $s>-\theta\psi(\theta)+\epsilon$, we have by Corollary \ref{cor:tauneg} that $Y_1(s,n,\theta)=O(ne^{-cns})$ as $n\to\infty$;
for $-\theta\psi(\theta)-\epsilon\leq s\leq -\theta\psi(\theta)+\epsilon$, we have \eqref{eq:asY1PII}; finally, for $\epsilon<s<-\theta\psi(\theta)-\epsilon$, we have \eqref{eq:asY1taupos}.
We thus obtain the uniform in $s$ large $n$ asymptotics
\[\frac{d}{ds} \log Q_n^\theta(s) = \begin{cases}
O(ne^{-cns}),&s>-\theta\psi(\theta)+\epsilon,\\
n^2f(s,\theta)1_{(0,-\theta\psi(\theta))}(s)+O(n^{2/3}),&-\theta\psi(\theta)-\epsilon\leq s\leq -\theta\psi(\theta)+\epsilon,\\
n^2f(s,\theta)+O(1),&\epsilon<s\leq -\theta\psi(\theta)-\epsilon,
\end{cases}\ \mbox{as $n\to\infty$.}\]
Integrating in $s$ from $s$ to $+\infty$ and observing that $\log Q_n^\theta(+\infty)=0$, we obtain
\[\log Q_n^\theta(s) = \begin{cases}
O(e^{-c'ns}),&s>-\theta\psi(\theta)+\epsilon,\\
-n^2F(s,\theta)1_{(0,-\theta\psi(\theta))}(s)+O(n^{2/3}),&-\theta\psi(\theta)-\epsilon\leq s\leq -\theta\psi(\theta)+\epsilon,\\
-n^2F(s,\theta)+O(n^{2/3}),&\epsilon<s\leq -\theta\psi(\theta)-\epsilon,
\end{cases}\]
as $n\to\infty$.
This proves Theorem \ref{theorem:main}. The RH analysis allows in fact to obtain much stronger large $n$ asymptotics for $\log Q_n^\theta(s)$. For instance, we can expand the jump matrix for $R$ as far as we want in powers of $n$ as $n\to\infty$, and use this expansion to compute subleading asymptotic terms for $R$, and thus for $Y$. In this way, after long but straightforward computations, we could prove the stronger asymptotics
\be\label{eq:asQstrong}
\log Q_n^\theta(s)=
-\log F_{\rm TW}\left(\frac{2n^{2/3}(s+\theta\psi(\theta))}{\theta(-\psi''(\theta))^{1/3}}\right)+o(1),
\ee
as $n\to\infty$ and $s\to -\theta\psi(\theta)$ with $n^{2/3}(s+\theta\psi(\theta))=O(1)$.
These stronger asymptotics can be used to reprove \eqref{eq: central limit theorem}, but would not lead to a stronger result about the probability distribution of the log-Gamma polymer in Conjecture \ref{conjecture:main}.

\section{Proof of Conjecture \ref{conjecture:main} under Ansatz \ref{ansatz1}}\label{section:proofconjecture}
Assuming that Ansatz \ref{ansatz1} holds, we now know from \eqref{eq:detid3}, \eqref{def:Fredholmdetsmooth}, and Theorem \ref{theorem:main} that \be\lim_{n\to\infty}\frac{1}{n^2}\log\mathbb E\left(e^{-e^{-\frac{2n}{\theta}s}Z_n(\theta)}\right)=\lim_{n\to\infty}\frac{1}{n^2}\log\widetilde Q_n^\theta(s)=\lim_{n\to\infty}\frac{1}{n^2}\log Q_n^\theta(s)=-F(s,\theta),\label{eq:LDPQntilde}\ee
where $Z_n(\theta) $ is the log-Gamma polymer partition function associated with $\theta$ and $n$. From this, we can draw conclusions for the log-Gamma polymer partition function. We are interested in the large $n$ behavior of the probability
\begin{align*}
	\mathbb{P}\left[\log Z_n(\theta) \le \frac{2n}{\theta}s\right].
\end{align*}
For any $s>s_0\in\mathbb R$, we have the inequalities
\begin{align*}
	&\mathbb{E} e^{-e^{\log Z_n(\theta)-\frac{2n}{\theta}s_0}}\\ & = 	\mathbb{E}\left[ 1_{\{\log Z_n(\theta)\leq \frac{2n}{\theta}s\}}e^{-e^{\log Z_n(\theta)-\frac{2n}{\theta}s_0}} +1_{\{\log Z_n(\theta)>  \frac{2n}{\theta}s\}}e^{-e^{\log Z_n(\theta)-\frac{2n}{\theta}s_0}} \right] \\
	& \le\mathbb{P} \left[\log Z_n(\theta)\le  \frac{2n}{\theta}s\right] +\mathbb{E}\left[ 1_{\{\log Z_n(\theta)>  \frac{2n}{\theta}s\}}e^{-e^{\log Z_n(\theta)-\frac{2n}{\theta}s_0}} \right].
\end{align*}
Note that, whenever $\log Z_n(\theta)> \frac{2n}{\theta}s$, it holds that
\begin{align*}
	{e^{\log Z_n(\theta)-\frac{2n}{\theta}s_0}}  = {e^{\log Z_n(\theta)-\frac{2n}{\theta}s}}   e^{\frac{2n}{\theta}(s-s_0)} \ge e^{\frac{2n}{\theta}(s-s_0)} 
\end{align*}
and thus
\begin{align*}
e^{-e^{\log Z_n(\theta)-\frac{2n}{\theta}s_0}} \le e^{-e^{\frac{2n}{\theta}(s-s_0)}},
\end{align*}
which implies
\[\mathbb{P} \left[\log Z_n(\theta)\le  \frac{2n}{\theta}s\right] \ge \mathbb{E} e^{-e^{\log Z_n(\theta)-\frac{2n}{\theta}s_0}} - e^{-e^{\frac{2n}{\theta}(s-s_0)}}.\]
We can now set, for instance, $s=s_0+\frac{3\theta\log n}{2n}$, such that
\[e^{-e^{\frac{2n}{\theta}(s-s_0)}}=e^{-e^{3\log n}}=e^{-n^3},\]
and for sufficiently large $n$,
\begin{align}
	\mathbb{P} \left[\log Z_n(\theta)\le  \frac{2n}{\theta}s\right] \ge \mathbb{E} e^{-e^{\log Z_n(\theta)-\frac{2n}{\theta}s_0}} - e^{-n^3}\geq \frac{1}{2}\mathbb{E} e^{-e^{\log Z_n(\theta)-\frac{2n}{\theta}s_0}},\label{eq:probineq1}
\end{align}
where we used \eqref{eq:LDPQntilde} in the last step to conclude that the first term dominates the second for large $n$.

\medskip

On the other hand, Chernoff's bound (or Markov's inequality) yields
\begin{align*}
	\mathbb{P} \left[\log Z_n(\theta)\le  \frac{2n}{\theta}s\right] 
	&=  \mathbb{P} \left[e^{-e^{\log Z_n(\theta)-\frac{2n}{\theta}s}}\ge e^{-  1}\right] \\ &\le e \mathbb{E}\left[e^{-e^{\log Z_n(\theta)-\frac{2n}{\theta}s}} \right]
\end{align*}
implying
\begin{align}\label{eq:probineq2}
	\log \mathbb{P} \left[\log Z_n(\theta)\le \frac{2n}{\theta}s\right] \le \log \mathbb{E}\left[e^{-e^{\log Z_n(\theta)-\frac{2n}{\theta}s}} \right]+1.
\end{align}
Hence, by combining \eqref{eq:probineq1} and \eqref{eq:probineq2} with \eqref{eq:LDPQntilde}, we obtain
\[\lim_{n\to\infty}\frac{1}{n^2}\log \mathbb{P} \left[\log Z_n(\theta)\le \frac{2n}{\theta}s\right]=-F(s,\theta),\]
which implies Conjecture \ref{conjecture:main}.

\section{Evidence for Ansatz \ref{ansatz1}}\label{section:ansatz}

In order to understand why Ansatz \ref{ansatz1} is plausible, we need to take a closer look at the signed biorthogonal measures $d\mu_n^\theta(x_1,\ldots, x_n)$ defined in \eqref{def:BiOM}--\eqref{def:BiOM2}.
As already mentioned in the introduction, it is convenient to consider the re-scaled biorthogonal measure
\be
d\widehat \mu_n^\theta(y_1,\ldots, y_n):= d\mu_n\left(\frac{2n}{\theta}y_1,\ldots, \frac{2n}{\theta}y_n\right)=\frac{1}{n!}\det\left(\widehat L_n^\theta(y_i,y_j)\right)_{i,j=1}^ndy_1\cdots dy_n,\ee
with $\widehat L_n^\theta$ given by \eqref{def:Lrescales}.

\paragraph{Macroscopic limit of the one-point function.}
For $\theta=0$, $d\widehat\mu_n^0$ is the eigenvalue distribution of the Wishart-Laguerre random matrix ensemble, and it is well-known that $\frac{1}{n}\widehat L_n^0(y,y)$ converges as $n\to\infty$ to the Marchenko-Pastur law, recall \eqref{eq:MP}. In other words, the particles $y_1,\ldots, y_n$ admit a limiting one-point function. There is no indication of a phase transition taking place at $\theta=0$, so we expect that the limit \[\lim_{n\to\infty}\frac{1}{n}\widehat L_n^\theta(y,y)=:h^\theta(y)\] exists also for $\theta>0$ sufficiently small, at least when $y>0$.
This can presumably be proved using a classical saddle point analysis of the double integral expression \eqref{def:Lrescales} for the kernel $\widehat L_n^\theta$. 
The Tracy-Widom limit \eqref{eq:TWcvgc} suggests in addition that the right-most endpoint of the support of $h^\theta$ is given by $-\theta\psi(\theta)$, and that $h^\theta(y)$ behaves like a square root as $y\to -\theta\psi(\theta)$. 
We know that
\[Q_n^\theta(s)=\int_{\mathbb R^n}\prod_{j=1}^n \left(1-1_{(s,+\infty)}(y_j)\right) d\widehat\mu_n^\theta(y_1,\ldots, y_n)=\int_{(-\infty,s)^n} d\widehat\mu_n^\theta(y_1,\ldots, y_n),\]
while
\[\widetilde Q_n^\theta(s)=\int_{\mathbb R^n}\prod_{j=1}^n \left(1-\sigma_{s,n,\theta}(y_j)\right) d\widehat\mu_n^\theta(y_1,\ldots, y_n),\qquad \sigma_{s,n,\theta}(y)=\frac{1}{1+e^{-\frac{2n}{\theta}(y-s)}}.\]
As $n\to\infty$, the function $1-\sigma_{s,n,\theta}(y)$ converges point-wise to $1-1_{(s,+\infty)}(y)$.
From a broad perspective, one may already expect at this point that $Q_n^\theta(s)$ and $\widetilde Q_n^\theta(s)$ have the same rate of decay, because the large deviation rate function is a global quantity associated to a determinantal point process that typically does not depend on microscopic deformations of the involved test function, such that one expects \[\lim_{n\to\infty}\frac{1}{n^2}\log \widetilde Q_n^\theta(s)=\lim_{n\to\infty}\frac{1}{n^2}\log Q_n^\theta(s),\]
which implies Ansatz \ref{ansatz1}.

\paragraph{Jacobi's identity.}
We will now develop some more concrete analytical estimates in support of Ansatz \ref{ansatz1}.
We define a deformation of $Q_n^\theta$ as follows: we define
\be
Q_n^\theta(s;t):=\det\left(1-\sigma_{s,n,\theta}^t\widehat L_n^\theta\right)_{L^2(\mathbb R)},\quad \sigma_{s,n,\theta}^t(y)=\frac{1}{1+e^{-\frac{2n}{\theta t}(y-s)}},
\ee
such that $Q_n^\theta(s,0)=Q_n^\theta(s)$ and $Q_n^\theta(s,1)=\widetilde Q_n^\theta(s)$.
By Jacobi's variational identity, we have
\begin{align*}
\partial_t\log Q_n(s;t)&={\rm Tr}\left(-(\partial_t\sigma_{s,n,\theta}^t)\widehat L_n^\theta\left(1-\sigma_{s,n,\theta}^t\widehat L_n^\theta\right)^{-1}\right).
\end{align*}
Abbreviating 
\be K_{s,n,\theta}^t:=\widehat L_n^\theta\left(1-\sigma_{s,n,\theta}^t\widehat L_n^\theta\right)^{-1},\ee
and writing $K_{s,n,\theta}^t(y,y')$ for a kernel of this operator, we have
\begin{align*}
\partial_t\log Q_n(s;t)&=\frac{2n}{\theta t^2}\int_{\mathbb R}\frac{e^{-\frac{2n}{\theta t}(y-s)}}{\left(1+e^{-\frac{2n}{\theta t}(y-s)}\right)^2} (y-s) K_{s,n,\theta}^t(y,y)dy.
\end{align*}

We can bound the integral on the right as follows:
\be\left|\partial_t\log Q_n(s;t)\right|\leq \frac{\theta}{2n}\sup_{y\in\mathbb R}\left|K_{s,n,\theta}^t(y,y)\right|\int_{\mathbb R}\frac{e^{-u}}{\left(1+e^{-u}\right)^2} |u| du.
\ee
We believe that the following holds, for reasons that we will explain below.
\begin{ansatz}\label{ansatz2}
We have the estimate
\be\sup_{y\in\mathbb R}\left|K_{s,n,\theta}^t(y,y)\right|=O(n^2),\qquad n\to\infty,\ee
uniformly for $s>\epsilon$, $0\leq t\leq 1$, and $0<\theta<\theta_0$.
\end{ansatz}

This would imply 
\be\left|\partial_t\log Q_n(s;t)\right|\leq \frac{C\theta n}{2}\int_{\mathbb R}\frac{e^{-u}}{\left(1+e^{-u}\right)^2} |u| du=O(\theta n),\qquad n\to\infty,
\ee
and hence by integrating between $t=0$ and $t=1$,
\be \log Q_n(s,1)-\log Q_n(s,0)=\log \widetilde Q_n(s)-\log Q_n(s)=O\left(\theta n\right) ,\qquad n\to\infty.
\ee
Combining this with Theorem \ref{theorem:main}, we obtain Ansatz \ref{ansatz1} and thus Conjecture \ref{conjecture:main}, assuming Ansatz \ref{ansatz2}.

\paragraph{Evidence for Ansatz \ref{ansatz2}.}
The kernel $K_{s,n,\theta}^t(y,y')$ appearing in Ansatz \ref{ansatz2} has itself an interpretation in terms of a biorthogonal measure. Whereas $\widehat L_n^\theta(y,y')$ is the kernel of the signed biorthogonal measure $d\widehat\mu_n^\theta$, $K_{s,n,\theta}^t$ is the kernel of a deformation of this biorthogonal measure. Consider the signed biorthogonal measure
\be
d\widehat \nu_{n,s,\theta}^t(y_1,\ldots, y_n):=\frac{1}{Z_n}\prod_{j=1}^n(1-\sigma_{n,s,\theta}^t(y_j))d\widehat \mu_n^\theta(y_1,\ldots, y_n),
\ee
where
\[Z_n=\int_{\mathbb R^n}\prod_{j=1}^n(1-\sigma_{n,s,\theta}^t(y_j))d\widehat \mu_n^\theta(y_1,\ldots, y_n).\]
It was proved in \cite[Section 4]{CC24} (as an application of a more general result from \cite{ClaeysGlesner}) that $K_{s,n,\theta}^t$ is the kernel of this determinantal signed measure with respect to the measure $(1-\sigma_{n,s,\theta}^t(y))dy$. In other words,
\be
d\widehat \nu_{n,s,\theta}^t(y_1,\ldots, y_n)=\frac{1}{n!}\det\left(K_{s,n,\theta}^t(y_j,y_k)\right)_{j,k=1}^n\prod_{j=1}^n(1-\sigma_{n,s,\theta}^t(y_j))d y_j,
\ee
and $K_{s,n,\theta}^t$ has the reproducing property
\[\int_{\mathbb R}K_{s,n,\theta}^t(y,u)K_{s,n,\theta}^t(u,y') (1-\sigma_{n,s,\theta}^t(u))du=K_{s,n,\theta}^t(y,y').\]

\medskip

Let us try to get some more insight in these deformed biorthogonal measures.
For $\theta=0$, we have
\be
d\widehat \nu_{n,s,\theta}^t(y_1,\ldots, y_n):=\frac{1}{Z_n}\prod_{1\leq j<i\leq n}(y_i-y_j)^2\ \prod_{j=1}^n(1-\sigma_{n,s,\theta}(y_j)) 1_{(0,+\infty)}(y_j)e^{-4ny_j}d y_j.
\ee
This is a deformation of the eigenvalue distribution of a Laguerre-Wishart random matrix, and it is clearly a positive measure. For $\theta=0$, $t=0$, it is the eigenvalue distribution of a Laguerre-Wishart random matrix under conditioning on the event that all eigenvalues $y_1,\ldots, y_n$ are smaller than $s$.
For $\theta=0$, $t>0$, the deformed biorthogonal ensemble can be constructed via a procedure of marking and conditioning as done in \cite{ClaeysGlesner}. The deformed ensemble then assigns a smaller likelihood to eigenvalue configurations with eigenvalues bigger than $s$, and one can think of it as a {\em pushed} Coulomb gas, following the terminology of \cite{CGKLDT, KrajenbrinkLeDoussal}.
Both for $t=0$ and $t>0$, the deformed ensembles are special cases of orthogonal polynomial ensembles, whose large $n$ asymptotic behavior is well understood. See e.g.\ \cite{Kuijlaars} for an overview of such results. In particular, it is understood that for $t=0$, the bulk of the eigenvalues lies for large $n$ between $0$ and $s$, and that the ensemble has a hard edge at $0$ and another hard edge at $s$, if $s<1$. Scaling limits of the eigenvalue correlation kernel $K_{s,n,\theta}^t(y,y')$ lead to the sine kernel in the bulk, and to the Bessel kernel near the hard edges. In particular, we have
\be K_{s,n,0}^0(y,y)=O(n),\qquad n\to\infty,\label{eq:boundK1}\ee
for $\epsilon<y<s-\epsilon$, and 
\be\label{eq:boundK2}K_{s,n,0}^0(y,y)=O(n^2),\qquad n\to\infty,\ee
for $y\leq \epsilon$ and for $s-\epsilon\leq y$. This gives Ansatz \ref{ansatz2} for $\theta=0$, $t=0$. As $t$ increases, the hard edge at $s$ turns into a soft edge, such that one expects that the above bounds still hold, and even that the bound near $y=s$ becomes non-optimal for bigger $t$.

\medskip

Let us now consider the case $\theta>0$. Then the deformed biorthogonal measure $d\widehat\nu_{n,s,\theta}^t$ is not necessarily positive; however we still expect that the one-point function $K_{s,n,\theta}^t(y,y)$ is positive for $y>\epsilon$, $\theta$ sufficiently small, and $n$ sufficiently large. 
  For $t=0$, the deformed measure $d\widehat\nu_{n,s,\theta}^0$ is the restriction of $d\widehat\mu_n^\theta(y_1,\ldots, y_n)$ to the region $(-\infty, s)^n$, renormalized to have total mass $1$. 
For $t>0$, the deformed measure $d\widehat\nu_{n,s,\theta}^t$ gives less (positive or negative) weight to configurations $(y_1,\ldots, y_n)$ for which $y_j$s lie in $(s,+\infty)$, so we can still think of it as a {\em pushed} measure, where weight is pushed towards $(-\infty,s)^n$.
We do not expect a qualitative difference between the correlation kernel $K_{s,n,0}^t(y,y)$ and $K_{s,n,\theta}^t(y,y)$ for $\theta>0$ sufficiently small; in particular, we expect (but cannot prove) that \eqref{eq:boundK1}--\eqref{eq:boundK2} continue to hold for $\theta>0$ sufficiently small and for $0<t<1$, such that Ansatz \ref{ansatz2} holds. 

Unfortunately, we have no explicit formulas or tools to analyze $K_{s,n,\theta}^t(y,y)$ asymptotically as $n\to\infty$: there is no double contour integral representation like for $L_n^\theta$, and there is no convenient RH characterization for it. We do however have access to a special case of this quantity, corresponding to $t=0$ and $y=s$. We indeed have by Jacobi's identity that
\[\partial_s\log Q_n(s;t)=K_{s,n,\theta}^0(s,s).
\]
Differentiating the result from Theorem \ref{theorem:main} on the other hand, we obtain that
\[K_{s,n,\theta}^0(s,s)=\partial_s\log Q_n(s;t)=O(n^2),\qquad n\to\infty,
\]
which is consistent with and confirms the order predicted in Ansatz \ref{ansatz2}. 

\paragraph{Comparison of Fredholm determinants.}
Alternatively, one may attempt to prove Ansatz \ref{ansatz1} by using a comparison bound for Fredholm determinants. For trace-class operators $A,B$, we have \cite[Chapter 5]{Simon}
\be\label{eq:Fredholmdiff}|\det(1+A)-\det(1+B)|\leq \|A-B\|_1e^{\max\{\|A\|_1,\|B\|_1\}+1},\ee
where $\|.\|_1$ denotes the trace norm.
 Setting $B=\mathcal L_n^s$ with $\mathcal L_n^s$ given by
 \[\left(\mathcal L_n^s\right)[f](x)=1_{(s,+\infty)}(y)e^{-cy}\int_{\mathbb R}\widehat L_n^\theta(y',y)e^{cy'}f(y')dy',\]
 like in the proof of Lemma \ref{lemma: det L = det H}, and $A$ given by
 \[\left(A\right)[f](y)=\sigma_{n,s,\theta}(y)e^{-cy}\int_{\mathbb R}\widehat L_n^\theta(y',y)e^{cy'}f(y')dy',\]
 the left hand side of \eqref{eq:Fredholmdiff} becomes 
 $|\widetilde Q_n(s)-Q_n(s)|$.
It is not immediate to compute the order of the right hand side: since the operators $A$ and $B$ are not Hermitian, the trace norm is not necessarily equal to the trace and thus hard to compute. One may nevertheless expect that $\|A-B\|_1=O(1/n)$ as $n\to\infty$. Unfortunately, for $0<s<-\theta\psi(\theta)$, one also expects that $\|A\|_1,\|B\|_1=O(n)$ as $n\to\infty$, such that the right hand side of \eqref{eq:Fredholmdiff} is not small. While the inequality \eqref{eq:Fredholmdiff} is sharp for small operators $A,B$, its quality becomes poor when $A,B$ are close to being singular, which is our situation.
This approach consequently does not lead to a proof nor to further support for Ansatz \ref{ansatz1}.

\paragraph{Conclusion.}

Let us summarize the line of arguments in support of Conjecture \ref{theorem:main}.
Above, we explained why we strongly believe that the technical Ansatz \ref{ansatz2} holds: firstly because it holds for $\theta=0$ and there is no sign of a dramatic change of behavior for $\theta>0$, and secondly because of the interpretation of the kernel $K_{s,n,\theta}^t(y,y')$ in terms of a deformed signed biorthogonal measure along with its interpretation as a pushed measure.
Then, we proved that the technical Ansatz \ref{ansatz2} implies the less technical Ansatz \ref{ansatz1}. Furthermore, we proved Theorem \ref{theorem:main}, and we proved that it implies Conjecture \ref{conjecture:main} under Ansatz \ref{ansatz1}.

\subsection*{Acknowledgements} The authors are grateful to Mattia Cafasso, Reda Chhaibi, Ivan Corwin, Yuchen Liao, and Jiyuan Zhang for useful discussions. They acknowledge support by {\em FNRS Research
Project T.0028.23} and by the {\em Fonds Sp\'ecial de Recherche} of UCLouvain.

\end{document}